\documentclass[11pt,letterpaper]{article}
\usepackage[margin=1in]{geometry}

    \usepackage[english]{babel}
    \usepackage{amsmath, amssymb, amsthm, mathtools, multirow, array, amsfonts,thm-restate, bbm, tikz, csquotes}    
    \usetikzlibrary{matrix, positioning, decorations.pathreplacing, arrows.meta, calc}
    \usepackage[T1]{fontenc}
    \usepackage[colorlinks=true, allcolors=blue]{hyperref}
    \usepackage{cleveref}
    \usepackage[shortlabels]{enumitem}
    \usepackage{thmtools, thm-restate}
    \usepackage[style=alphabetic]{biblatex}
    \declaretheorem[name=Lemma, parent=section]{lemma}
    \declaretheorem[name=Definition, parent=section, sibling=lemma]{definition}

    \declaretheorem[name=Claim,parent=section, sibling=lemma]{claim}
    \declaretheorem[name=Observation, parent=section, sibling=lemma]{observation}

    \DeclareMathOperator*{\best}{best}
    \DeclareMathOperator*{\suppopp}{supp}
    \newcommand{\supp}[1]{{\suppopp\left(#1\right)}}
    \DeclarePairedDelimiter\ceil{\lceil}{\rceil}
    \DeclarePairedDelimiter\floor{\lfloor}{\rfloor}
    \newcommand{\poset}[1]{2^{#1}}
    \newcommand{\denote}{\coloneq}
    \newcommand{\expect}[2][]{\mathbb{E}\ifthenelse{\not\equal{}{#1}}{_{#1}}{}\!\left[{\def\givenn{\middle|}#2}\right]}
    \newcommand{\prob}[2][]{\mathbb{P}\ifthenelse{\not\equal{}{#1}}{_{#1}}{}\!\left[{\def\givenn{\middle|}#2}\right]}
    \newcommand{\apx}{\rho}
    \newcommand{\harmonic}[1]{H_{#1}}

    \newcommand{\MMS}{\textrm{MMS}}
    \newcommand{\PROP}{\textrm{PROP}}
    \newcommand{\TPS}{\textrm{TPS}}
    \newcommand{\hyp}{\textrm{-}}

    \newcommand{\M}{{\mathcal{M}}}
    \newcommand{\goods}{\M}
    \newcommand{\N}{{\mathcal{N}}}
    \newcommand{\agents}{\N}
    \newcommand{\V}{{\mathcal{V}}}
    \newcommand{\Vadd}{\mathcal{V}_{\textrm{add}}}
    \newcommand{\D}{{{D}}}
    \newcommand{\Dj}{{{D^{(j)}}}}
    \newcommand{\Dnum}[1]{{{D^{(#1)}}}}
    \newcommand{\Dprime}{{{D'}}}
    \newcommand{\Real}{{\mathbb{R}}_{\geq 0}}

    \newcommand{\ix}{x}
    
    \newcommand{\allocs}{\mathcal{A}_{\M,\N}}
    \newcommand{\allocset}{A}
    \newcommand{\alloc}{\vec{A}}
    \newcommand{\allock}{{\vec{A}^{(k)}}}
    \newcommand{\allocj}{{\vec{A}^{(j)}}}
    \newcommand{\allocki}{{A_i^{(k)}}}
    
    \newcommand{\pk}{{p^{(k)}}}

    \newcommand{\fracallocs}{\mathcal{F}_{\M,\N}}
    \newcommand{\falloc}{\vec{f}}
    \newcommand{\f}{{f}}
    \newcommand{\fix}{{f_{i,x}}}
    \newcommand{\galloc}{\vec{\g}}
    \newcommand{\g}{{g}}
    \newcommand{\gix}{{g_{i,x}}}
    \newcommand{\fracD}{{{\vec{f}}^D}}
    \newcommand{\fracDix}{{f^D_{i,x}}}

    \newcommand{\mech}{M}
    \newcommand{\fracMech}{{M_{\textrm{frac}}}}
    \newcommand{\distMech}{{M_{\textrm{dist}}}}
    \newcommand{\randMech}{{M_{\textrm{rand}}}}
    \newcommand{\univMech}{{M_{\textrm{universal}}}}
    
    \newcommand{\cuq}{\textrm{CUQ}}
    \newcommand{\cuqalloc}{\overrightarrow{\cuq}}
    \newcommand{\cuqix}{{\cuq_{i,\ix}}}

    \newcommand{\pga}[1]{\galloc^{(#1)}}
    \newcommand{\pgix}[1]{{\g^{(#1)}_{i,x}}}
    \newcommand{\pgixj}{{\pgix{j}}}
    \newcommand{\pg}[1]{\g^{(#1)}}
    \newcommand{\dfa}[1]{\falloc^{(#1)}}
    
    \newcommand{\dfix}[1]{{\f^{(#1)}_{i,x}}}
    \newcommand{\dfixj}{\dfix{j}}
    
    \newcommand{\ip}{{i_{+1}}}
    \newcommand{\pitem}[2]{{y_{{#1}{#2}}}}
    \newcommand{\pitemij}{\pitem{i}{j}}
    \newcommand{\Yitems}[1]{Y_{#1}}
    \newcommand{\Yj}{\Yitems{j}}
    \newcommand{\Yip}{\Yitems{\ip}}
    \newcommand{\rank}[2]{{r_{{#1}{#2}}}}
    \newcommand{\rij}{\rank{i}{j}}
    \newcommand{\zset}[2]{{Z_{{#1}{#2}}}}
    \newcommand{\zsetij}{\zset{i}{j}}
    \newcommand{\fset}[2]{{F_{{#1}{#2}}}}
    \newcommand{\fsetij}{\fset{i}{j}}

    \newcommand{\defapx}{\alpha}
    \newcommand{\defi}{{d}}
    \newcommand{\weight}{w}
    \newcommand{\wapx}[1]{{\weight}^{#1}}
    \newcommand{\walpha}{\wapx{\defapx}}
    \newcommand{\Wcyc}{W}
    \newcommand{\Wcycapx}{\Wcyc^{\defapx}}

    \newcommand{\val}{v}
    \newcommand{\valp}{v'}
    \newcommand{\valprof}{\vec{v}}
    \newcommand{\valprofp}{\vec{v}'}
    \newcommand{\Vprofs}{\V^n}
    \newcommand{\Vaddprofs}{{\V_{\textrm{add}}^n}}

    \newcommand{\Pers}[1]{\Pi(#1)}
    \newcommand{\PerN}{\Pers{\N}}
    \newcommand{\per}{\pi}
    \newcommand{\percyc}[1]{\per^{(#1)}}
    \newcommand{\percycij}{{\percyc{j}_i}}
    \newcommand{\pinv}{\pi^{-1}}

    \newcommand{\matr}{K}
    \newcommand{\port}{B}
    \newcommand{\portij}{{B_{ij}}}
    \newcommand{\portmat}{{B}}
    \newcommand{\portmats}{{\mathcal{B}_n}}

    \newcommand{\en}[1]{\vec{e}^{(#1)}}
    \newcommand{\ei}{\en{i}}
    \newcommand{\eiat}[1]{e^{(i)}_{#1}}

\title{Truthful-in-Expectation Mechanisms for MMS Approximation}

\usepackage{datetime}
\date{\today}
\author{
    Moshe Babaioff
        \thanks{The Hebrew University of Jerusalem. {Emails: \texttt{\{moshe.babaioff, noam.manakermorag\}@mail.huji.ac.il}}}
    \and Uriel Feige
        \thanks{Weizmann Institute of Science. {Email: \texttt{uriel.feige@weizmann.ac.il}}}
    \and Noam Manaker Morag
        \footnotemark[1]
}

\begin{document}
\maketitle

\begin{abstract} 
We study fair allocation of indivisible goods among strategic agents with additive valuations. Motivated by impossibility results for deterministic truthful mechanisms, we focus on randomized mechanisms that are \emph{Truthful-in-Expectation (TIE)}. From a fairness perspective, we seek to guarantee every agent a large fraction of their \emph{Maximin Share (MMS)} ex-post. Among other results, Bu~and~Tao~[FOCS 2025] presented a TIE mechanism that guarantees $\frac{1}{n}$-MMS ex-post. First, we present an ordinal TIE mechanism that guarantees $\frac{1}{H_{n-1} + 2}$-MMS ex-post, where $H_k$ is the $k$-th harmonic number ($H_k \simeq \ln k$). This is nearly best possible for ordinal mechanisms, as even non-truthful ordinal allocation algorithms cannot obtain an approximation better than $\frac{1}{H_n}$. We then show that with just a small amount of additional cardinal information, the ex-post guarantee can be improved to $\Omega(\frac{1}{\log\log n})$-MMS, at the cost of relaxing the incentive requirement to $(1-\varepsilon(n))$-TIE for negligible $\varepsilon(n)$. Finally, for two agents, we present a TIE mechanism that is $\frac{2}{3}$-MMS ex-post.

All our mechanisms are ex-ante proportional (thus also providing ``Best-of-Both-Worlds'' results) and run in polynomial time. Moreover, all our results extend to the truncated proportional share (TPS), which is at least as large as the MMS. Our two-agent $\frac{2}{3}$-TPS result is best possible for the TPS.
\end{abstract}

\newpage

\section{Introduction}

The field of \emph{fair division} studies the problem of fair allocation of resources among equally-entitled agents with heterogeneous preferences over the resources.
A central problem within this field, which has attracted significant research effort, is the fair allocation of \emph{indivisible goods} among equally-entitled agents with additive valuations; see, for example, \cite{M-04, LMMS-04, B-11, KPW-18,CGH-19, PR-20, CKMS-21, AABFLMVW-23, CGM-24}. In this paper, we study this fundamental problem in the framework of \emph{Algorithmic Mechanism Design}, when agents are strategic and their preferences are private information, and the goal is to design truthful fair mechanisms. 

Many fairness criteria have been suggested for this problem. These generally fall into two categories: \emph{envy-based} and \emph{share-based}. Envy-based notions, such as Envy-Freeness (EF) and its relaxations (e.g., EF1, EFX), focus on comparative satisfaction, requiring that no agent prefers another agent's bundle to their own, possibly up to the value of some item. In contrast, share-based notions take a different approach, seeking to guarantee each agent a minimum value, which depends only on the valuation function of the agent, and not on how the remaining goods are allocated to the other agents. 
In this paper we focus on share-based fairness notions.

For agents with additive valuations, a natural share is the  \emph{proportional share}, that for a setting with $n$ agents is defined to be a $1/n$ fraction of the agent's value for all goods. Unfortunately, as goods are indivisible, it is not possible to guarantee every agent her proportional share (i.e., when there is only a single valuable item). An alternative share is the \emph{Maximin Share ($\MMS$)},  introduced by Budish~\cite{B-11}. The $\MMS$ represents the value an agent can guarantee for herself if she was asked to partition the goods into $n$ bundles, knowing she would receive the lowest-valued bundle among them. While the $\MMS$ provides a compelling fairness benchmark, it cannot always be guaranteed: For agents with additive valuations,  $\MMS$ allocations always exist for $n=2$ agents, but it was shown \cite{KPW-18}  that  $\MMS$ allocations do not always exist for $n\geq 3$ agents. Consequently, a significant body of work has focused on the challenge of {establishing} existence of allocations where every agent gets a high constant-factor fraction of the $\MMS$ (an approximation). Currently, the best approximation of the $\MMS$ is $\frac{7}{9}$ \cite{HZ-25}.

The above results assume that agents' valuations are public knowledge. 
In many real-world scenarios, valuations are private information, and an agent has the freedom to report any valuation. To address the problem that optimizing the report might be challenging (as it might depend on the behavior of others), we follow the line of work which seeks to design fair mechanisms in which agents have simple dominant strategies (being ``truthful''), relieving them of the burden of strategic reasoning and making optimal participation straightforward.

Ideally, we would like to find truthful mechanisms that give a good fairness guarantee (a large fraction of the $\MMS$). Unfortunately, there are strong impossibility results for \emph{deterministic} truthful mechanisms. For the case of two agents, it was shown \cite{ABCM-17} that no deterministic truthful mechanism that partitions all goods can guarantee  any reasonable fairness property. In particular, they showed that any such mechanism cannot guarantee every one of the two agents more than $O(1/m)$ fraction of their $\MMS$, where $m$ is the number of goods. This matches the poor approximation achieved by the trivial truthful mechanism in which the first agent picks one good, and the second agent receives all the rest. This impossibility extends to any number of agents under the additional assumptions of non-bossiness and neutrality \cite{BMM-25}.

Motivated by the strong negative results for fair deterministic truthful mechanisms, a new line of work considers \emph{randomized truthful mechanisms} \cite{BT-24, GNS-25}. Crucially, the mechanisms should still ensure fairness for every realized allocation (ex-post), and randomization is introduced within the mechanism in order to address the issue of incentives. In this setting, we can define two different incentive compatibility requirements. The first and stronger requirement is \emph{universal-truthfulness}, which requires that truthful reporting is a dominant strategy even if the agent knows the outcomes of the random coin-tosses in advance. Equivalently, it is just a randomization over deterministic truthful mechanisms. As such, all the previous impossibility results for the deterministic setting hold for universally-truthful mechanisms as well. Thus, both prior work and our paper consider the relaxed incentive requirement of \emph{Truthfulness in Expectation (TIE)}, which requires that reporting her true preference maximizes an agent's utility in expectation over the randomness of the mechanism, regardless of the reports of others. Equivalently, TIE mechanisms are simply truthful (dominant strategy) for expectation maximizing agents (i.e., risk-neutral agents).

For expectation-maximizing agents with additive valuations, the approach of using randomized TIE mechanisms was successful in circumventing the strong impossibility results that hold for deterministic mechanisms. Indeed, Bu~and~Tao \cite{BT-24} have recently presented several TIE mechanisms for $n$ additive agents, including, among others, a TIE mechanism that guarantees every agent at least a $1/n$ fraction of her $\MMS$ ex-post ({see \Cref{sec:related} for further discussion of their results}). While this approximation is much better than $O(1/m)$ when the number of goods $m$ is large, it deteriorates quickly as the number of agents increases. {We seek to improve this approximation result. We present ex-post $\MMS$ approximations that are independent of $m$, and do not deteriorate as quickly when $n$ grows.}

Up to this point we have discussed fairness ex-post (for every realized allocation). But agents might find such guarantee by itself unsatisfactory, as in the case of a single good that is always allocated to one specific agent. To overcome this, we also make sure that the distribution over allocations is proportional in expectation (ex-ante). Our simultaneous ex-post and ex-ante guarantees fit into the framework of  \emph{Best-of-Both-Worlds (BoBW)} distributions \cite{A-19, FSV-20, A-20, BEF-22, HSV-22, AGM-23, FMNP-24, AAGW-15, HPPS-20, BEF-21, AzizFSV24, BF-26}, but we also make sure the mechanisms that generate these distributions are truthful.

Finally, we also care about computational aspects. Our mechanisms are all poly-time computable, and moreover, {can be implemented with a low communication burden, which depends only on $n$ and $m$} and not on the representation of the valuations.

\subsection{Our Results}
\label{subsec:results}

We study the problem of fairly  allocating a set of $m$ goods $\goods=\{1,\ldots,m\}$ to a set of $n\geq 2$ strategic agents $\agents=\{1,\ldots,n\}$, each with a private additive valuation over the goods (``additive agents''). We assume that agents are expectation maximizers (risk-neutral), and we seek mechanisms which incentivize agents to report truthfully and that always allocate the goods in a fair way.

Our first main result is a poly-time randomized TIE mechanism for $n$ additive agents that guarantees every agent at least a $\frac{1}{\harmonic{n-1}+2}>\frac{1}{\ln n+3}$ fraction of her $\MMS$ ex-post. In fact, the mechanism does not just output an allocation chosen at random (a sample from the underlying distribution), but rather outputs an explicit representation of the distribution (from which one can sample to get a randomized mechanism). We call such a mechanism a \emph{distributional mechanism}. Distributional mechanisms have several advantages over randomized mechanisms, most notably that they allow each agent to verify that she indeed gets her proportional share ex-ante (see \Cref{subsec:prelim-mech} for further discussion). Moreover, this mechanism, as well as all our other mechanisms, are ex-ante proportional.

\begin{restatable}{theorem}{lognapx}
    \label{thm:logn-apx}
    Consider the problem of allocating $m$ indivisible goods to $n$ \emph{additive} agents. There exists a distributional mechanism which is \emph{truthful-in-expectation (TIE)}, \emph{ex-ante proportional}, and $\frac{1}{\harmonic{n-1}+2}$-$\MMS$ ex-post. Moreover, this mechanism outputs, in polynomial time, an explicit representation of the distribution, which is supported on at most $n\cdot m$ allocations. 
\end{restatable}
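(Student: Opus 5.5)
The plan is to make the mechanism ordinal, with marginals that do not depend on the reports at all: every agent receives every good with probability exactly $1/n$. For an additive agent whose report is a ranking, her expected value is then $\sum_x \frac1n v_i(x)=v_i(\goods)/n$ whatever she and the others report. This gives TIE and ex-ante proportionality for free, so all the work moves to the ex-post guarantee. The mechanism reads the reported rankings $\rij$ and outputs an explicit convex combination of integral allocations whose average is the uniform fractional allocation $\fix=1/n$. It must also guarantee that every allocation in the support gives each agent $i$ at least $\MMS_i/(\harmonic{n-1}+2)$ with respect to her reported ranking, which is her true ranking when she is truthful.

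\textbf{Step 1: an ordinal sufficient condition for the ex-post bound.} I would first prove a purely ordinal lemma. Relabel agent $i$'s goods by decreasing value, $v(1)\ge v(2)\ge\dots$. The lemma should say that if bundle $S$ satisfies a domination condition with respect to a fixed "pattern" of ranks, then $v(S)\ge \MMS/(\harmonic{n-1}+2)$. A candidate condition is: either $S$ contains one of the top $k$ goods for suitably many $k$, or for every $t$ the set $S$ contains enough of the top $t$ goods. I expect the pattern to have a harmonic shape: roughly one good out of the top $1$, or among the top $2$ at rate $1/2$, and so on up to $1/(n-1)$, plus a ``tail'' share of the rest. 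The proof would use the standard reductions. First, $\MMS$ is at most $v(\goods\setminus\{\text{top } j\})/(n-j)$ for every $j$. Second, in any $\MMS$ partition each of the top $n-1$ goods can be treated as occupying its own bundle (scaling argument). Summing these bounds with harmonic coefficients should produce the $\harmonic{n-1}+2$ denominator. The matching lower bound $\frac1{\harmonic n}$ for ordinal algorithms suggests that the bad instance is exactly the harmonic one, which tells me what the pattern must be.

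\textbf{Step 2: decomposition.} Next I need a decomposition of the uniform matrix $\frac1n J$ (agents $\times$ goods) into at most $n\cdot m$ allocations, each meeting the Step 1 condition for every agent simultaneously. My first attempt is to build the allocations from picking sequences driven by the rankings. For each cyclic shift $\percyc{j}$ of the agents and each starting good, run a round-robin-like procedure where agents take goods according to their reported rankings. I would then argue, through symmetry over the shifts, that every good is held by every agent in exactly a $1/n$ fraction of the allocations. If symmetry alone does not equalize the marginals, the fallback is the bihierarchy rounding (Budish--Che--Kojima--Milgrom style). Here the agent side carries, for each agent, the laminar family of her top-$t$ prefixes, and the good side carries singletons. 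That rounding preserves $\lfloor t/n\rfloor$ and $\lceil t/n\rceil$ counts on every prefix in every allocation, and Carathéodory-type support bounds give $O(nm)$ allocations in polynomial time.

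\textbf{Main obstacle.} Prefix-balance of the form $\lfloor t/n\rfloor$ alone is too weak: it allows an agent to miss all of her top $n-1$ goods and get only about $1/n$ of her $\MMS$. The hard part is therefore to find decompositions that exploit conflicts between agents. When several agents rank the same goods highly, a good that one agent misses at the top is compensated by extra goods for her further down, and the harmonic pattern of Step 1 must be guaranteed for every agent in every allocation while the marginals stay exactly uniform. I would attack this by defining the support allocations through the cyclic permutations $\percyc{j}$ so that an agent who loses a contested top good in one allocation is "compensated" in a position-dependent way. The per-agent charging must then be verified to produce the $1/(\harmonic{n-1}+2)$ bound. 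This charging argument is where I expect the real difficulty to lie.
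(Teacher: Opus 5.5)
\textbf{The approach fails for every $n\ge 4$.} The uniform fractional allocation $f_{i,x}=\frac1n$ admits no implementation that is better than $\frac1n$-$\MMS$ ex-post. This is exactly \Cref{prop:n-approximation-impossibility} and \Cref{cor:impossibility} in the paper.

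Take $m\ge 2n-1$, a set $X$ of $n-1$ goods, and a disjoint set $Y$ of $n$ goods. Agent $i$ values each good of $X$ at $1$ and each good of $Y$ at $\frac1n$. Every other agent values each good of $Y$ at $1$ and everything else at $0$. All $\MMS$ values equal $1$, and all reports are truthful rankings.
\begin{itemize}
\item With uniform marginals, the union bound says agent $i$ receives some good of $X$ with probability at most $\frac{n-1}{n}<1$. So some allocation in the support gives her nothing from $X$.
\item In that allocation, either another agent receives no good of $Y$ (value $0$), or the other $n-1$ agents hold at least $n-1$ goods of $Y$ and agent $i$ has value at most $\frac1n$.
\end{itemize}
Since $\frac{1}{H_{n-1}+2}>\frac1n$ for every $n\ge 4$, no choice of support can rescue the approach. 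This rules out cyclic picking sequences, bihierarchy rounding, and any charging scheme alike. (For $n\le 3$ the known $\frac1n$ guarantee of the uniform allocation would suffice, but not beyond.) The ``main obstacle'' you flag, an agent missing all of her top $n-1$ goods, is forced by the marginals themselves. The compensation ``further down'' that you hope for is impossible, because the goods further down are precisely the ones the other agents need.

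\textbf{The missing idea is a report-dependent fractional allocation that is still truthful.} The paper takes the fractional allocation induced by a universally truthful mechanism. It picks a uniformly random cyclic shift of a fixed agent order and runs the unit-quota picking sequence: each agent picks one good, and the last agent takes the rest.
\begin{itemize}
\item The induced fractional allocation (CUQ) is truthful for additive agents by \Cref{prop:universal-to-tie}, and it is ex-ante proportional.
\item It gives an agent more than $\frac1n$ of goods she wins in the picking. In the instance above, agent $i$ receives her top good of $X$ in every cyclic shift.
\end{itemize}
The paper then re-implements CUQ as follows.
\begin{itemize}
\item For shift $j$ it keeps the picked prefix integral. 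It gives agent $i$ a portion $\frac{1}{(n-\pi^{(j)}_i+1)H_n}$ of every good outside $Y_j\cup Y_{i_{+1}}$.
\item A greedy matrix completion turns these partial allocations into full ones that average exactly to CUQ.
\item The subsequence-picking claim bounds the number of zeroed goods by $2n-r-1$.
\item Faithful implementation loses at most one good, and LP support reduction leaves at most $nm$ allocations.
\end{itemize}
Every support allocation from shift $j$ thus gives agent $i$ at least the larger of two quantities: a good of rank $r\le\pi^{(j)}_i$, and a $\frac{1}{(n-\pi^{(j)}_i+1)H_n}$ share of her goods ranked below $2n-r$. This is the harmonic pattern your Step 1 anticipates. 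It is attainable only because CUQ's marginals are not uniform.
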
 

Since the harmonic number $\harmonic{n}$ is roughly $\ln n$, this result represents an exponential improvement over the best known truthful $\MMS$ approximation of $\frac{1}{n}$ \cite{BT-24}. In fact, our exact approximation factor of $\frac{1}{\harmonic{n-1}+2}$ strictly exceeds $\frac{1}{n}$ for all $n \ge 4$. 

The above mechanism has the additional advantage of being implementable in dominant strategies as an \emph{ordinal} mechanism. The mechanism asks each agent to report a strict order over the goods (a ranking of the goods, rather than their specific numerical values), and it is a dominant strategy for each agent to report such an order that is consistent with her weak preference over the goods. Such an implementation has communication that is independent of the representation of the valuation functions. Our $\MMS$ approximation is essentially best possible for an ordinal mechanism, as even non-truthful ordinal allocation algorithms cannot obtain an approximation better than $\frac{1}{H_n}$ \cite{ABM-16}. Thus, to improve upon this logarithmic approximation, it is essential to utilize cardinal information.

In our second main result, we show that with just a small amount of additional cardinal information, the ex-post guarantee can be improved to $\Omega(\frac{1}{\log\log n})$-MMS, at the cost of relaxing the incentive requirement to $(1-\varepsilon(n))$-TIE, for negligible $\varepsilon(n)$. We note that similar relaxations of truthfulness have been explored in other settings in prior literature; see, for example, \cite{ArcherPTT, S-04}.  Notably, in our result, $\varepsilon(n)=n^{-\log n}$ is a \emph{negligible function} of the number of agents (decaying faster than any polynomial). This implies that as the number of agents grows, 
the incentive for untruthful reporting decreases rapidly.

\begin{restatable}{theorem}{loglognapx}
    \label{thm:loglog-apx}
    Consider the problem of allocating $m$ indivisible goods to $n$ \emph{additive} agents. There exists a {poly-time} $(1-\varepsilon(n))$-TIE randomized mechanism (with negligible $\varepsilon(n)=n^{-\log n}$) which is \emph{ex-ante proportional}, and $\Omega\left(\frac{1}{\log\log n}\right)$-$\MMS$ ex-post.
\end{restatable}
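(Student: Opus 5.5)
The plan is to build the mechanism as a randomization over a few ordinal allocation rules. In the first step we use the $\frac{1}{\harmonic{n-1}+2}$ machinery of \Cref{thm:logn-apx} only for agents whose $\MMS$ is concentrated on few large goods. Each agent $i$ reports her ranking together with a small amount of cardinal information. Concretely, she reports a threshold index $t_i$: the number of her goods whose value exceeds $\frac{1}{c\log\log n}$ times her proportional share. Equivalently, she reports a coarse bucketing of her top goods into $O(\log\log n)$ geometric value classes. Agents whose value is spread over many small goods can be served by a bag-filling or round-robin-style procedure restricted to small goods. That procedure guarantees a constant fraction of the truncated proportional share. Agents with large goods are then handled by matching large goods, using the ranking-based random serial structure of the previous mechanism.

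The key technical device is to replace the harmonic loss $\harmonic{n-1}$ with a $\log\log n$ loss. The idea is to run the ordinal mechanism's "round" structure only $O(\log\log n)$ times instead of $n$ times. Each round should be guaranteed to satisfy a constant fraction of the remaining agents up to a constant factor. The $\harmonic{n}$ lower bound for ordinal algorithms comes from not knowing whether an agent values her $k$-th item at $1/k$. The reported bucket information removes exactly this ambiguity up to a factor of $2$, except across $O(\log\log n)$ scales. I would show ex-post $\Omega(1/\log\log n)$-$\TPS$ by a charging argument. Each agent receives either one good from a bucket containing at least her share divided by $\log\log n$, or a bundle of small goods of that value.

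For incentives, I would make the ex-ante allocation a fixed fractional allocation, namely the uniform $1/n$ allocation corrected by a lottery. The expected utility of agent $i$ is then $\frac1n v_i(\goods)$ plus a small term depending on her report. This makes ex-ante proportionality automatic, and truthfulness reduces to bounding the reporter-dependent correction. The cardinal report can shift the distribution only through a rare event. I would design the decomposition (via Birkhoff-style rounding of the fractional allocation) so that the report matters only when a random test of probability $n^{-\log n}$ fails. For example, the test could be that a random sample of $\log^2 n$ agents has a particular collision pattern. This yields $(1-\varepsilon(n))$-TIE.

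The main obstacle is the simultaneous requirement: ex-ante exact proportionality independent of the cardinal report, together with an ex-post guarantee that genuinely uses the report. I expect the hardest step to be constructing the decomposition of the uniform fractional allocation into integral allocations that are all $\Omega(1/\log\log n)$-$\MMS$ given the reported buckets. I would first try iterated dependent rounding across the $O(\log\log n)$ scales, losing a constant per scale. The deviations that do occur would be absorbed into the negligible $\varepsilon(n)$ slack.
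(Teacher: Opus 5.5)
Your incentive design has a fatal gap. You make the ex-ante fractional allocation the uniform $\frac1n$ allocation, modified by a report-dependent correction that is either small or takes effect only with probability about $n^{-\log n}$. You then identify as the main step decomposing this essentially uniform fractional allocation into allocations that are all $\Omega(1/\log\log n)$-$\MMS$.

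That step is impossible, and not because information is missing. By \Cref{prop:n-approximation-impossibility} (see also \Cref{cor:impossibility}), on instances with $m\ge 2n-1$, suppose some agent receives every good with marginal probability below $\frac{1}{n-1}$. Then no implementation, even one that knows all valuations exactly, beats $\frac1n$-$\MMS$ ex-post. Concretely:
\begin{itemize}
\item Let agent $i$ value $n-1$ goods $X$ at $1$ and $n$ goods $Y$ at $\frac1n$.
\item Let every other agent value only the goods of $Y$, at $1$ each, so every $\MMS$ equals $1$.
\item A union bound yields a supporting allocation in which $i$ gets no good of $X$.
\item In that allocation, either some other agent gets nothing from $Y$, or $i$ holds at most one good of $Y$, worth $\frac1n$.
\end{itemize}
Your marginals are at most $\frac{1-\varepsilon}{n}+\varepsilon<\frac{1}{n-1}$ whenever $\varepsilon<\frac{1}{(n-1)^2}$. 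So neither the bucket reports, nor bag-filling, nor iterated dependent rounding can rescue the ex-post bound.

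Suppose instead the correction is substantial, or agents are routed to the large-good or small-good procedure according to their reported $t_i$. Then you give no reason why the mechanism is truthful, because the reports move the lottery non-negligibly. In that case ``bounding the reporter-dependent correction'' is the whole problem, not a reduction. Separately, satisfying a constant fraction of the remaining agents per round takes $\Theta(\log n)$ rounds, not $O(\log\log n)$.

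The missing idea is this: the fractional allocation should depend on the ordinal reports substantially, but through a universally truthful rule, and cardinal information should affect only how that fractional allocation is decomposed. The paper uses the fractional allocation $\cuqalloc$ induced by the $n$ cyclic shifts of a unit-quota picking order.
\begin{itemize}
\item Each picking order is a deterministic truthful mechanism, so every implementation is TIE and proportional (\Cref{prop:universal-to-tie}, \Cref{lem:cuq-tie}).
\item In each component, every agent keeps in full the good she picks. In the example above this is always a good of $X$, which is exactly what escapes the $\frac1n$ barrier.
\item The cardinal report is the $\alpha$-deficiency $d_i$ with $\alpha=\Theta(1/\log\log n)$. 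It enters only the portion matrix that splits the leftover suffix among agents across the shifts (\Cref{lem:implement-export}, \Cref{lem:general-weight}), so $\cuqalloc$ itself is unchanged.
\item The $\log\log n$ comes from drawing the base ordering uniformly at random. Group deficiencies by powers of two and apply hypergeometric tail bounds. Every cyclic shift then has weight at most $2\log\log n+23$, where weight is the total demand $1/d_i$ over agents $i$ placed in one of the last $d_i$ positions. This fails only with probability $\frac{n^{-\log n}}{2n}$ (\Cref{lem:random-ordering}).
\item Only on that rare event does the mechanism fall back to the deficiency-sorted ordering, which has cyclic weight at most $2$ (\Cref{lem:ordered-weight}). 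So only there can a misreported deficiency change the ex-ante outcome.
\item Ex-ante proportionality caps the resulting gain at a factor $n$, which yields $(1-\varepsilon(n))$-TIE.
\end{itemize}
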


The mechanism that we present to prove Theorem~\ref{thm:loglog-apx} first samples a random ordering of the agents, and then behaves exactly like a distributional mechanism: it outputs a distributional allocation which is ex-ante proportional and is supported on at most $n\cdot m$ allocations. The relaxed incentive constraint is obtained by showing that, with high probability $\left(1 - \frac{1}{n}\varepsilon(n)\right)$, the sampled ordering allows for a TIE mechanism with the desired fairness guarantees. Only with probability $\frac{1}{n}\varepsilon(n)$ the ordering may be ``bad'', in which case we use a non-TIE mechanism instead, and so misreporting may lead to higher expected value (by a factor of at most $n$). As this happens rarely, the expected improvement from misreporting is negligible.

The mechanism in \Cref{thm:loglog-apx} is presented as a direct revelation mechanism, yet it can also be implemented in $(1-\varepsilon(n))$-dominant strategies (strategies with a small bounded loss) with agents only reporting a strict ordinal preference over the goods, coupled with just a small amount of additional cardinal information. The additional information that an agent is asked to supply is only the number of items out of her top $n$ items, that {do not} have value at least some threshold (some fraction of the $\MMS$), which we call the agent's \emph{deficiency}. In \Cref{prop:constant-with-public-defi}, we show that when agents' deficiencies are public knowledge, it is possible to design a TIE mechanism with $\frac{1}{4}$-$\MMS$ approximation ex-post.

Finally, we turn back to mechanisms that are exactly TIE. Recall that \Cref{thm:logn-apx} provides an $\MMS$ approximation factor of $\frac{1}{\harmonic{n-1}+2}$. While  $\frac{1}{\harmonic{n-1}+2}$ is strictly larger than $\frac{1}{n}$ {(the ratio in~\cite{BT-24})} for all $n \ge 4$, it is not so for $n=2,3$. We focus on the case of two agents and present a TIE mechanism that provides a $\frac{2}{3}$ approximation to the $\MMS$ ex-post.

\begin{restatable}{theorem}{twoagents}
\label{thm:two-agents}
    Consider the problem of allocating $m$ indivisible goods to two ($n=2$) \emph{additive} agents. There exists a distributional mechanism which is \emph{truthful-in-expectation (TIE)}, \emph{ex-ante proportional} (and thus also \emph{ex-ante envy-free}), and $\frac{2}{3}$-$\MMS$ ex-post. Moreover, this mechanism outputs, in polynomial time, an explicit representation of the distribution, which is supported on \emph{exactly two} allocations. 
\end{restatable}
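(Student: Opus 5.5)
The plan is to split the construction into an ex-ante step that carries the incentive and proportionality properties, and an ex-post rounding step that carries fairness. \textbf{Ex-ante:} build a fractional allocation $\falloc$ from the reported ordinal rankings. \textbf{Ex-post:} decompose $\falloc$ into exactly two integral allocations, each of which is $\frac23$-$\MMS$ (indeed $\frac23$-$\TPS$) for both agents.

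A first idea is the trivial uniform mechanism: output some partition $(S,\goods\setminus S)$ and its swap, each with probability $\frac12$. It is automatically TIE and proportional, since every agent's expectation is $\val_i(\goods)/2$ whatever she reports. But it cannot work ex-post. Take three goods with $\val_1=(2,1,1)$ and $\val_2=(1,1,2)$: no partition has both bundles worth at least $\frac43$ to both agents. So the marginals must depend on the reports, and the example suggests how. Each agent should receive her top good with probability close to $1$, and only the ``contested'' mass should be split.

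Step 1 (fractional rule and truthfulness). For the fractional allocation I would take the probabilistic-serial (simultaneous eating) rule run on the reported strict orders, specialised to $n=2$. I would prove three properties of it.
\begin{enumerate}[(a)]
    \item For two agents, truthful reporting stochastically dominates any misreport with respect to the true order. Hence, for every additive valuation consistent with that order, it maximizes expected value; this gives TIE, and ordinality.
    \item The rule is SD-envy-free. For two agents this means $\sum_x \f_{i,x}\val_i(x)\ge \val_i(\goods)/2$, which is ex-ante proportionality.
    \item For two agents the eating process has a very rigid structure. At each time both agents eat from their current top remaining good. The only fractionally split goods are those both agents eat simultaneously, and each such good is split exactly $\frac12/\frac12$.
\end{enumerate}
I expect the main obstacle to be (a), in the cardinal sense of TIE rather than weak SD-manipulability. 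If probabilistic serial fails it, I would fall back to a sequential-picking fractional rule. In that rule, goods are consumed in rounds according to the reported ranks, with $\frac12/\frac12$ splits on ties of rank, and I would prove the SD-dominance directly by an exchange argument over the eating times.

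Step 2 (decomposition). By (c), the split goods form a set $T$ in which every good has marginal exactly $\frac12$ for each agent. Let $S_i$ be the goods agent $i$ receives fully. Choose a partition $T=T_1\cup T_2$ and output the two allocations $(S_1\cup T_1,\,S_2\cup T_2)$ and $(S_1\cup T_2,\,S_2\cup T_1)$, each with probability $\frac12$. These have the correct marginals, so Step 1 is unaffected. The partition of $T$ is chosen by processing $T$ in eating order and alternating, which makes each agent's two possible bundles differ by at most one good from $T$ relative to her fractional bundle. This is polynomial time, and the support has exactly two allocations.

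Step 3 (ex-post $\frac23$-$\TPS$). Fix agent $i$ and scale so that $\TPS_i=1$. The truncation bounds every good's contribution by $1$, and since $\TPS\ge\MMS$ this step also yields the $\MMS$ bound. By Step 2, in each realized allocation agent $i$ gets at least her fractional value minus half of one good from $T$. I would then case-split on whether that good is among her top goods.
\begin{itemize}
    \item If it is not her top good, then her top good, which she eats first, lies in $S_i$ or is shared. The alternation and the ordering argument give her at least $\frac23$.
    \item If it is among her top goods, I would charge against the total: $\val_i(\goods)\ge 2$ after truncation, her bundle contains the larger half of the contested pairs, and the standard bound $\max\{v(\text{top}),\,(2-v(\text{top}))/\ldots\}$ yields the $\frac23$ threshold.
\end{itemize}
Verifying that the alternation choice achieves $\frac23$ simultaneously for both agents, and not just for one, is the delicate calculation. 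I would check it first on the three-good example above, where the rule must output $(\{1,2\},\{3\})$ and $(\{1\},\{2,3\})$.
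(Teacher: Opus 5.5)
There are two genuine gaps, one in each half of your plan.

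\textbf{Truthfulness.} Claim (a) is false. For two agents and more than two goods, probabilistic serial is not even weakly SD-strategyproof. Take goods $a,b,c,d$, with agent~1's true order $a\succ b\succ c\succ d$ and agent~2's order $b\succ c\succ d\succ a$.
\begin{itemize}
\item Truthfully, agent~1 eats $a$ while agent~2 eats $b$; then they split $c$ and $d$. Agent~1 gets $a+\frac12c+\frac12d$.
\item If agent~1 reports $b\succ a\succ c\succ d$, they split $b$ on $[0,\frac12]$. She then eats $a$ while agent~2 eats $c$, and they split $d$. She gets $a+\frac12b+\frac12d$.
\end{itemize}
This SD-dominates the truthful bundle and is strictly better whenever $\val_1(b)>\val_1(c)$. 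Your fallback (rounds by reported rank, $\frac12/\frac12$ on ties) is, as described, the same lockstep process for two agents. Round-robin variants fall to the same ``grab the contested good first'' deviation, so no exchange argument can rescue them.

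The missing idea is to take the marginals of a \emph{universally} truthful mechanism. A fair coin selects one agent, who takes a single good, and the other agent receives everything else. Each branch is a truthful deterministic mechanism, so the marginals are truthful for additive agents. These marginals give each agent her top good in full and split all other goods $\frac12/\frac12$ if the tops differ, and are the uniform allocation if the tops coincide. They are proportional because the two branches give an agent her top good plus all goods but one, and she values that one no more than her top good.

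\textbf{Rounding.} Steps~2--3 fail precisely in the case you deferred: a shared top good $y$. Suppose both agents report $y\succ x_2\succ\cdots\succ x_m$. Then both probabilistic serial and the rule above are uniform, and your alternation gives $S=\{y,x_3,x_5,\ldots\}$ and $T=\{x_2,x_4,\ldots\}$. Let $\val_1(y)=1$ and $\val_1(x_k)=\frac1{m-1}$ for $k\ge 2$. Then $\MMS_2(\val_1)=\TPS_2(\val_1)=1$, but $\val_1(T)\le\frac12$. An agent who values $y$ highly needs the $y$-free bundle to hold about two thirds of the rest. So ``within one good of the fractional value'' is the wrong target, and your charging argument cannot reach $\frac23$.

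In fact, no partition determined by the orders alone works. Take five goods with orders $y,1,2,3,4$ and $y,3,4,1,2$, and list values for $y,1,2,3,4$. The valuations $\val_1=(10,1,1,0,0)$ and $\val_2=(10,0,0,1,1)$ force $T=\{1,2,3,4\}$. But then $S=\{y\}$ fails for $\val_1=(1,1,1,1,0)$, where $\MMS_2=2$. So the rounding must use cardinal reports. This is harmless for incentives, since expected values depend only on the marginals.

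The paper handles this by cases:
\begin{itemize}
\item If both agents value $y$ at least $\frac23\TPS_2$, it splits $(\{y\},\goods\setminus\{y\})$ and randomizes which agent gets which bundle.
\item If neither does, faithful rounding suffices.
\item In the mixed case, say agent~1 values $y$ highly. The paper walks down agent~1's order in consecutive triples and puts agent~2's favourite good of each triple into $y$'s bundle. It stops once agent~2 values that bundle at $\frac23\TPS_2(\val_2)$. This caps that bundle at $\frac43\TPS_2(\val_2)$ for agent~2 and leaves agent~1 at least two thirds of $\goods\setminus\{y\}$ in the other bundle.
\end{itemize}
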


All the positive results in this paper were presented as approximating the $\MMS$ ex-post. Yet, these results actually extend (with the exact same approximations) to hold for the truncated proportional share ($\TPS$), originally presented in \cite{BEF-22}. This work also proved that the $\TPS$ is at least as large as the $\MMS$, and is poly-time computable (unlike the $\MMS$). Consequently, our result for two agents (\Cref{thm:two-agents}) is the best possible, in the sense that there are allocation instances with two agents in which every allocation gives some agent at most $\frac{2}{3}$ of her TPS ex-post \cite{BEF-22}.

\subsection{Overview of our proofs}
\label{subsec:overview}

In this section we present an overview of the main ideas used in our proofs. Further details and refinements can be found in subsequent sections. Notation and terminology in this overview is simplified, and need not coincide with that used in later sections.

\subsubsection{The High Level Approach}

We design randomized TIE mechanisms that are ex-post fair, by implementing fractional mechanisms derived from universally-truthful mechanisms. We start with a brief review of fractional allocations, their implementation by distributions over (integral) allocations, and their potential use when agents have additive valuations.

A fractional allocation $\falloc$ assigns the goods fractionally to agents, where for each agent $i$ and item $x$, the fraction of item $x$ allocated to agent $i$ is $\f_{i,x}$. We require all fractions to be non-negative ($\f_{i,x} \ge 0$), and all goods to be fully allocated ($\sum_i \f_{i,x} = 1$ for every $x \in \goods$). Additive valuations extend naturally to fractional allocations, by $v_i(\falloc) = \sum_{x \in \goods} \f_{i,x}\cdot v_i(x)$.

To relate fractional allocations to integral allocations, one may interpret $\f_{i,x}$ as the probability with which agent $i$ receives item $x$. A distribution $\D$ over integral allocations (where allocation $\allock$ is selected with probability $\pk$) {\em implements} fractional allocation $\falloc$ (and then we also say that $\falloc$ is {\em induced} by $D$) if for every agent $i$ and item $x$ it holds that $\sum_{k \mid x \in \allock_i} \pk = \f_{i,x}$. Different distributions may implement the same fractional allocation, a fact that we make use of in our allocation mechanisms. 

For additive valuations, if $D$ implements $\falloc$, then $\sum_k \pk\cdot v_i(\allocki) = v_i(\falloc)$. In other words, the expected value that agent $i$ receives under $D$ exactly equals her value for the induced fractional allocation $\falloc$. Thus, for additive valuations, there is an equivalence between two notions.  One notion is that of a truthful allocation mechanism that outputs a fractional allocation. Here, every agent that wishes to maximize the value of the fractional bundle that she gets has a dominant strategy. The other notion is a truthful-in-expectation (TIE) allocation mechanism that outputs a distribution over allocations. Here, every agent that wishes the distribution to maximize the expected value of the random bundle that she gets has a dominant strategy.

Our template for designing TIE mechanisms with good ex-post guarantees is inspired by a template used previously~\cite{AY-13} for settings with divisible items. Initially, a truthful fractional mechanism is fixed. Then, given an input valuation profile, the template runs in two main steps:

\begin{enumerate}
    \item Using the truthful fractional mechanism, generate a fractional allocation $\falloc$. 
    \item Implement $\falloc$ by a distribution $\D$ over (integral) allocations, where $\D$ is supported only on allocations that offer good ex-post guarantees.
\end{enumerate}

For the second step, a known technique, referred to as {\em faithful implementations} in~\cite{BEF-22} {(similar techniques have been used earlier, for example, in the conference versions of~\cite{AzizFSV24}),} is a natural choice. It guarantees that each agent (with an additive valuation) will receive ex-post the same value that she receives in the fractional allocation, up to the value of one item. See \Cref{lem:faithful} for more details.

For the first step, a natural choice is to  use the proportional fractional mechanism, which, irrespective of the agents' reports, outputs fractional allocation $\f_{i,x} = \frac{1}{n}$ for every $i \in \agents$ and $x \in \goods$. It has the advantage of being truthful (as it does not depend {at all} on the valuations of the agents), and of giving each agent her proportional share. Given that it offers each agent at least her MMS ex-ante, one may hope that the proportional fractional allocation can be implemented by a distribution over allocations whose support only includes allocations that provide a good approximation to the MMS ex-post. Unfortunately, this is not true. It was already observed in~\cite{BEF-22} that standard techniques for achieving faithful implementations cannot guarantee more than $\frac{1}{n}$-MMS (not even if all agents have the same valuation function, a case referred to as {\em identical valuations}). The fact that they do guarantee at least $\frac{1}{n}$-MMS was noted in~\cite{BT-24}. We show (see Proposition~\ref{prop:n-approximation-impossibility}) that
in fact, no implementation of the proportional fractional allocation can guarantee more than $\frac{1}{n}$-MMS ex-post.

\subsubsection{Our Fractional Allocation Mechanism -- Cyclic-Unit-Quota} 

Our approach for designing the first step of our template is to obtain the fractional allocation $\falloc$ as the one induced by a distribution $\D$ over allocations, where each allocation in the distribution is obtained by a truthful deterministic mechanism. Such a randomized mechanism is referred to as {\em universally-truthful}. Allocations in the support of $\D$ might have very poor ex-post guarantees, as they suffer from the limitations of deterministic truthful allocation mechanisms~\cite{ABCM-17}.
However, we shall only use the fact that universally-truthful mechanisms are truthful-in-expectation, and consequently the induced fractional allocation is truthful (for expectation-maximizing agents with additive valuations). {We formalize this approach in \Cref{prop:universal-to-tie}.}
In our second step we find a different implementation of $D$, this time by allocations that have good ex-post guarantees.

Such an approach has been used before in the context of allocating a divisible good when valuations are piecewise linear~\cite{AY-13}. In that context, the second step (corresponding to implementing a fractional allocation by a distribution over integral allocations) involves no loss. That is, the ex-post guarantee is the same as the ex-ante one. In contrast, in our context of indivisible goods, this second step does involve a loss (as illustrated above for the proportional fractional allocation). This makes it much more challenging to find a fractional allocation that on the one hand, can be induced by a distribution over truthful deterministic mechanisms, and on the other hand, can be implemented by allocations that have good ex-post guarantees. 

The class of deterministic truthful allocation mechanisms that we use is the so-called {\em serial-quota mechanisms}. In these mechanisms,  there are $n$ rounds, and each round $r$ has a quota $q_r$, with $\sum_r q_r \le m$. There is an arbitrary order over the agents. In each round $r$, the $r$-th agent in the order selects $q_r$ of the remaining items, where a dominating strategy is to select a set of $q_r$ items of highest value. The remaining $m - \sum_r q_r$ items are all given to agent $n$. We focus on the ``unit-quota'' case, in which we fix $q_r = 1$ for all $r \le n$. We refer to the $n$ selected items as the \emph{prefix of the allocation}, and to the $m - n$ additional items given to agent $n$ as the \emph{suffix of the allocation}.

The ex-post guarantee of any serial quota mechanism is poor, not more than $O(\frac{1}{m-n})$~\cite{ABM-16}. However, by taking the $n$ cyclic shifts of a given ordering (\Cref{def:cyclic-shifts}), we get $n$ different ``unit quota'' allocations. The uniform distribution over them induces a fractional allocation, that we refer to as the {\em Cyclic-Unit-Quota} allocation, or CUQ. It is not difficult to see that the value of CUQ to each agent is at least her proportional share {(\Cref{lem:cuq-tie})}. 

\subsubsection{Overview of Proof of Theorem~\ref{thm:logn-apx}}
\label{sec:overviewLog}

Recall that Theorem~\ref{thm:logn-apx} applies to ordinal mechanisms for agents with additive valuations (see the discussion after the statement of that theorem). The input provided by each agent $i$ is a total order over the items. An order is {\em consistent} with $v_i$ if for every two items $e$ and $e'$, if $e$ precedes $e'$ in this order then $v_i(e) \ge v_i(e')$. Somewhat abusing terminology, we say that agent $i$ is truthful if she provides an order that is consistent with $v_i$. Clearly, a consistent order suffices in order to implement a dominant strategy in any serial quota mechanisms. Importantly, it also suffices in order to produce faithful implementations of fractional allocations. {(This is a known fact. See \Cref{lem:faithful}).} 

In the proof of Theorem~\ref{thm:logn-apx} we use CUQ as the fractional allocation in the first step of our template. This guarantees the TIE property, and also gives the desired proportional ex-ante guarantee. It remains to describe the second step of our template, implementing CUQ as a distribution over allocations that offer good ex-post guarantees. We do this in three steps (which are the content of \Cref{sec:gen-imp} and \Cref{sec:logn}): 

\begin{enumerate}
    \item {\bf The main step.} 
    We replace each of the $n$ unit quota allocations $\allocj$ that make up CUQ by a  fractional allocation $\dfa{j}$, so that the fractional allocation induced by averaging these $n$ fractional allocations remains equal to CUQ. Moreover, each $\dfa{j}$ satisfies two properties. 
    \begin{enumerate}
    \item Each $\dfa{j}$ coincides with the corresponding $\allocj$ on the prefix of $\allocj$. Consequently, only items in the suffix of $\allocj$ are allocated fractionally.
    \item Each $\dfa{j}$ offers relatively high value to each agent. That is, $v_i(\dfa{j}_i) \ge \frac{1}{\harmonic{n-1}+1}\cdot \MMS_i$.
    \end{enumerate}
    \item {\bf Implementation.} We implement each $\dfa{j}$ by a distribution $D^{(j)}$ using the faithful implementation technique {(see \Cref{lem:faithful})}.
    The combination of properties 1(a) and 1(b) above, together with properties of faithful allocations, can be used to show that every allocation in the support of $\Dj$ guarantees $\frac{1}{\harmonic{n-1} + 2}$-MMS. Thus, the overall distribution that implements CUQ (that is supported on the allocations of all distributions $\Dj$, for $1 \le j \le n$) satisfies the desired ex-post guarantees.
    \item {\bf Support reduction.} Using well known principles (the concept of {\em basic solutions} for linear programs), we reduce the support of the overall distribution to a size of at most $m(n-1) + 1 \le n\cdot m$. {(See Lemma~\ref{lem:support-reduction}).}
\end{enumerate}

We provide more details on the main step above. 
First, it replaces each $\allocj$ by an initial fractional allocation $\falloc^{(j,1)}$ that guarantees each agent at least $\frac{1}{\harmonic{n}}$-MMS. Afterwards it modifies each  $\falloc^{(j,1)}$ to $\falloc^{(j)}$, so as to achieve the additional property that for every agent $i$ and item $x$,  $\frac{1}{n} \sum_j \falloc^{(j)}_{i,x} = CUQ_{i,x}$. This is done with a small loss in the guarantees: each $\falloc^{(j)}$ is a {$\frac{1}{\harmonic{n-1} + 1}$}-MMS partial fractional allocation. 

{\bf The construction of $\falloc^{(j,1)}$ (for $1 \le j \le n$).} 
Recall that $\allocj$ is generated using a unit quota mechanism with some ordering $a^{(j)}_1, \ldots, a^{(j)}_n$ over the agents. We keep the prefix of this allocation unchanged. Suppose now that an agent $a^{(j)}_i$ gets a $\frac{1}{n-i+1}$ fraction of each item in the suffix. Then the total value that she gets from her item in the prefix and her fractions in the suffix is at least her MMS. This is because, excluding the first $i-1$ items of the prefix, the remaining items still have total value at least $(n-i+1)$-MMS for her. Of this value, she secures at least a fraction of $\frac{1}{n-i+1}$. For the prefix this is true because she gets the best of the $n - i + 1$ items remaining in the prefix. For the suffix this is true because she gets a $\frac{1}{n-i+1}$ fraction of each item.

If we were to give for every $i$ a fraction of $\frac{1}{n-i+1}$ of each item in the suffix to agent $a^{(j)}_i$, then the total amount of fractions allocated from each item in the suffix would be  $\sum_{i=1}^n \frac{1}{n-i+1} = H_n$. As a fractional allocation should contain one unit of each item, we instead give each agent $a^{(j)}_i$ a fraction of $\frac{1}{H_n}\cdot \frac{1}{n-i+1}$ of each item in the suffix. This serves as $\falloc^{(j,1)}$, and clearly gives each agent at least $\frac{1}{H_n}$-MMS.

{\bf The need to modify the fractional allocations $\falloc^{(j,1)}$.} In $\allocj$, agent $a^{(j)}_n$ receives all items in the suffix. In $\falloc^{(j,1)}$, agent $a^{(j)}_n$ keeps only a $\frac{1}{H_n}$ fraction of each such item, and gives up the rest. Also, for other $k \not= j$, agent $a^{(j)}_n$ gets in $\falloc^{(k)}$ fractions of items that were in the suffix of $\allock$. We need the contributions to agent $a^{(j)}_n$ of all suffixes of the $\falloc^{(k)}$ (for $1 \le k \le n$) to exactly match the contents of the suffix of $\allocj$, so that the induced fractional allocation will equal the original CUQ. However, this might not hold, for two reasons. 

One reason is that a suffix for some $\allock$ might contain an item $x$ that is not in the suffix of $\allocj$. In this case, $a^{(j)}_n$ should not receive any fraction of $x$. To address this, we modify $\falloc^{(k)}$ by simply changing the fraction of $x$ received by $a^{(j)}_n$ to~0. This temporarily causes a {\em deficit} in $\falloc^{(k)}$, leaving item $x$ not fully allocated in $\falloc^{(k)}$. This deficit will be filled in later.

The other reason is that a suffix for some $\allock$ might not contain an item $y$ that is in the suffix of $\allocj$. In this case, $a^{(j)}_n$ gets from the combination of all $\falloc^{(k)}$ ($1 \le k \le n$) less than one unit of $x$, causing a deficit for agent $a^{(j)}_n$, compared to $CUQ_{a^{(j)}_n, y}$. 

The construction of the $\falloc^{(k)}$ starts with all the $\falloc^{(k,1)}$, and fills up all their deficits. This can be done in a relatively straightforward way, by finding an appropriate fractional matching between deficits of agents and deficits of fractional allocations. {(See Section~\ref{subsec:completing-partial}).} The end result is a collection of fractional allocations, $\{\falloc^{(k)}\}_{k \in [n]}$ that implements CUQ.

{\bf Analyzing the value offered by each $\falloc^{(k)}$.} From the point of view of agent $a^{(j)}_n$, the loss of value in $\falloc^{(k)}$ compared to $\falloc^{(k,1)}$ is limited to the loss caused by not getting fractions from those items $x$ that are in the suffix of $\allock$ but not in the suffix of $\allocj$. Let $\ell$ be the rank of $a^{(j)}_n$ in the order that was used in the unit quota mechanism that generated $\allock$. Then the fact that this order is a cyclic shift of that used for $\allocj$, together with the assumption that each of the unit quota mechanisms was executed using the same total order over items provided by each agent, can be used in order to infer that the number of such items $x$ is at most $n - \ell$ (See Lemma~\ref{lem:union-picking-size}). In $\falloc^{(k,1)}$, the fraction of each item given to $a^{(j)}_n$ is $\frac{1}{n - \ell + 1}$. Consequently, in modifying $\falloc^{(k,1)}$ to $\falloc^{(k)}$, agent  $a^{(j)}_n$ lost value equivalent to {less than} one item from the suffix of $\allock$. But as $a^{(j)}_n$ holds in $\falloc^{(k)}$ an item from the prefix of $\allocj$ (which has value at least as high as any item in the suffix), this loss has relatively small effect. It changes the fraction of the MMS received by $a^{(j)}_n$ from at least $\frac{1}{\harmonic{n}}$ in  $\falloc^{(k,1)}$ to at least {$\frac{1}{\harmonic{n-1} + 1}$} in $\falloc^{(k)}$. As stated above, the faithful implementation step incurs an additional loss equivalent to at most one item, reducing the final ex-post integral bound to $\frac{1}{\harmonic{n-1} + 2}$.

\subsubsection{Overview of Proof of Theorem~\ref{thm:loglog-apx}}

For the purpose of proving Theorem~\ref{thm:loglog-apx}, we use the following terminology. 
For a parameter $\alpha > 0$, we say that an item $x$ is $\alpha$-{\em sufficient} for agent $i$ if $v_i(x) \ge \alpha \cdot \MMS_i$. Unless stated otherwise, $\alpha$ will be set to the desired ex-post approximation ratio for the MMS, so that if an agent receives an item that is $\alpha$-sufficient for her, she already reaches her promised ex-post guarantee. 
The $\alpha$-{\em deficiency} of  agent $i$ is the number of items, among the first $n$ items of her total order, that are not $\alpha$-sufficient.

The $\alpha$-deficiencies of the agents can be very useful in implementing the CUQ fractional allocation as a distribution over allocations that are $\alpha$-MMS. 

The allocation mechanisms in this section ask each agent $i$ to report a total order among the items (consistent with $v_i$), and also to report her $\alpha$-deficiency. 

{\bf Non-truthful constant approximation.} To illustrate the usefulness of $\alpha$-deficiency, we first show that given a total order over items from each agent (consistent with her valuation function) and her true $\alpha$-deficiency for $\alpha = \frac{1}{4}$, an allocation algorithm can output a $\frac{1}{4}$-MMS allocation {(See \Cref{prop:constant-with-public-defi})}. Recall that with only ordinal information (that is, without knowledge of the $\alpha$-deficiency), the best approximation ratio is roughly $\frac{1}{\ln n}$.

The allocation algorithm first orders the agents based on their $\alpha$-deficiencies, from smallest to largest. We refer to this order as the {\em deficiency order}, to the agents in this order as $a_1, \ldots, a_n$, and to the deficiencies of the agents as $d_1 \le \cdots \le d_n$. Run the cyclic unit quota mechanism (with the $n$ cyclic shifts over the deficiency order), obtaining $n$ allocations $\alloc^{(1)}, \ldots, \alloc^{(n)}$. Recall that in Section~\ref{sec:overviewLog}, these $\allocj$ were first transformed to fractional allocations $\falloc^{(j,1)}$, where the fractional allocations involved a scaling factor of $\frac{1}{\harmonic{n}}$. Here, we show that using the deficiency information, the scaling factor can be improved to $\frac{1}{2}$ {(See \Cref{lem:ordered-weight})}.

Consider $\alloc^{(1)}$, the unit quota allocation that uses the deficiency order (without permuting it cyclically). Recall that in $\falloc^{(1,1)}$ we wish to fractionally allocate the suffix of $\alloc^{(1)}$ to the agents, such that each agent gets a good approximation of her MMS. 

Let $t$ be the smallest index for which $t \ge n -d_t+1$. Every agent $a_i$ with $i < t$ will receive in the prefix of $\alloc^{(1)}$ an item that is $\alpha$-sufficient for them. For each such agent, we say that she has no demand on the suffix. For every agent $a_i$ with $i \ge t$, the total value of the prefix is at most $n-d_i$ for the first $n-d_i$ items, plus $\alpha \cdot d_i$ for the remaining items. Hence, the total value of the suffix is at least $(1 - \alpha)\cdot d_i$. By giving them $\frac{1}{d_i}$ of each item in the suffix, agent $i$ is guaranteed a value of at least $(1 - \alpha)$.  For each such agent $i$, we say that she has a demand of $\frac{1}{d_i}$ on the suffix. As there are only $n-t+1$ agents with index $i\ge t$, each with a demand of $\frac{1}{d_i}\le \frac{1}{d_t}\le\frac{1}{n-t+1}$, the total demand on the suffix is at most $\frac{1}{n-t+1} \cdot (n-t+1) = 1$. Because the total demand is at most $1$, there is no need to scale the demands of agents at all.

For $\allocj$ with $j > 1$, an argument similar to the above shows that the total demand on the suffix is at most~2. This uses the fact that the order induced by a cyclic shift of the deficiency order can be partitioned into two segments, each with increasing deficiencies. Consequently, a scaling factor of $\frac{1}{2}$ suffices. Each agent either gets value at least $\alpha$ from the prefix of $\falloc^{(j,1)}$, or value at least $\frac{1 - \alpha}{2}$ from the suffix.

Using arguments similar to those of Section~\ref{sec:overviewLog}, by modifying the $\falloc^{(j,1)}$ to $\falloc^{(j)}$, the value that agents get from the suffix becomes at least $\frac{1 - 2\alpha}{2}$.

Finally, doing the faithful implementation of each $\falloc^{(j)}$, those agents that got a value of at least $\alpha$ from the prefix maintain this value. Those who get value at least $\frac{1 - 2\alpha}{2}$ from the suffix, maintain this value up to one item. However, as they also get an item from the prefix, they maintain at least $\frac{1 - 2\alpha}{2}$. With $\alpha = \frac{1}{4}$, in either case each agent gets at least $\frac{1}{4}$-MMS.

Observe that in the above argument, deficiency was used in two ways. One is to determine the order over agents, the deficiency order. The other is to determine the demand requirements of each agent $i$, when replacing $\allocj$ by $\falloc^{(j)}$ (the demand becomes either $\frac{1}{d_i}$ or $0$, depending on the location of the agent in the order). The first aspect affects the choice of fractional allocation CUQ. As CUQ determines the ex-ante guarantees, expectation maximizing agents might find it useful to misreport their deficiencies, which is incompatible with our goal of having TIE mechanisms. The second aspect has no effect on the ex-ante guarantees (it only improves the ex-post ones), and so it can be incorporated in a TIE mechanism. In our proof of Theorem~\ref{thm:loglog-apx}, we extensively use the second aspect, but make only minimal use of the first aspect. 

{\bf A TIE mechanism.} We introduce a notion of a {\em weight} of an ordering $\pi$ over agents. This weight is equal to the sum of demands that agents have on the suffix (what fraction of the suffix each agent needs in order to attain its MMS), for the allocation generated using $\pi$ in a unit quota mechanism. In the proof of Theorem~\ref{thm:logn-apx} this weight was $\harmonic{n}$, whereas for any cyclic shift of the deficiency order, the weight is at most~2.  Using  a parameter $w = \Theta(\log\log n)$ that serves as a threshold on weights, our allocation mechanism for the proof of Theorem~\ref{thm:loglog-apx} is as follows.

\begin{enumerate}
\item Ask each agent $i$ to report a strict ordering over the items (consistent with her $v_i$) and her $\alpha$-deficiency $d_i$, for $\alpha = \frac{1}{w+2}$.
    \item Pick a uniformly random ordering $\pi$ over the agents.
    \item If the weight of $\pi$ (as computed from the reported demands) and the weight of each of $\pi$'s cyclic shifts is at most $w$, then use $\pi$ in a mechanism similar to that of Theorem~\ref{thm:logn-apx}, but using the reported deficiencies to set the demands of agents on suffixes of allocations {(See \Cref{lem:general-weight})}.  This gives a distribution over allocations that is ex-ante proportional, and ex-post gives each agent at least $\frac{1}{w +2}$-MMS (Here $w$ plays the role that $\harmonic{n}$ plays in the proof of Theorem~\ref{thm:logn-apx}).
    \item Else, use the deficiency order instead of $\pi$. This gives a distribution over allocations that is ex-ante proportional, and ex-post gives each agent at least $\frac{1}{4}$-MMS. 
\end{enumerate}

The above mechanism satisfies the desired ex-ante and ex-post guarantees, but is not TIE. By a careful choice of what to report as her deficiency $d_i$, an agent might be able to affect whether step~3 or step~4 are executed, and these steps potentially offer her different ex-ante guarantees. However, it is $(1 - \varepsilon(n))$-TIE for every small $\varepsilon(n)$ ($O(n^{-\log n})$, or even smaller, depending on the leading constant in the choice of $w = c\log\log n$). This is a consequence of {\Cref{lem:random-ordering},} 
which shows that for every sequence of reported demands $d_1, \ldots, d_n$, with overwhelming probability, $1 - \frac{\varepsilon(n)}{n^2}$, the weight of a random ordering $\pi$ is below $w - 1$. As different reports of $d_i$ can affect the weight by at most~1, with overwhelming probability, the report $d_i$ has no effect at all on the ex-ante properties of the resulting distribution over allocations. (The term $\frac{1}{n^2}$ in $\frac{\varepsilon(n)}{n^2}$ allows us to take a union bound over all cyclic shifts of $\pi$, and to take into account that switching between the allocations of step~3 and~4 might affect the ex-ante guarantee of the agent by a large factor, but not more than $n$).

\subsubsection{Overview of Proof of Theorem~\ref{thm:two-agents}}
\label{sec:overviewTwoAgents}

For the case of two agents, the fractional allocation that we implement as a distribution over allocations is again the CUQ fractional allocation. For two agents, CUQ is of one of two simple forms, each leading to a different implementation.

If the two agents disagree on the identity of the best item, then CUQ gives each of them her top item in full, and for each remaining item, it gives each agent half the item. The implementation needs to only address the remaining items. Given that all fractions in the fractional allocation are half and $n=2$, the standard approach for designing faithful implementations simply results in the partition of this set into two disjoint sets $S$ and $T$, such that for each agent, the difference in value between the two subsets is at most the value of one item from these subsets. The distribution over allocations contains only two allocations that each have probability $\frac{1}{2}$, where in one allocation agent~1 gets $S$  and agent~2 gets $T$ (in addition to their respective top item), and in the other the roles of $S$ and $T$ are reversed.  
A simple case analysis then shows that every agent receives at least $\frac{2}{3}$-MMS in each allocation.

If the two agents agree on the identity of the first item, then CUQ is simply the proportional fractional allocation, giving half of each item to each agent. In this case, we show that $\goods$ can be partitioned into two disjoint sets $S$ and $T$ that each give each agent a value of at least $\frac{2}{3}$-MMS. Thus, in this case the proportional fractional allocation is implemented as a distribution selecting each of the two allocations $(A_1 = S, A_2 = T)$ and $(A_1 = T, A_2 = S)$ with probability half.
However, the above partition into $S$ and $T$ is not simply the result of the standard faithful implementation technique. Rather, the contents of $S$ and $T$ are determined by a case analysis. First, each agent $i$ is asked whether the top item $e_1$ satisfies $v_i(e_1) \ge \frac{2}{3} \cdot \MMS_i$. If both agents answer {\em yes}, then we set $S = \{e_1\}$ and $T = \goods \setminus \{e_1\}$. If both agents answer {\em no}, then we use the standard faithful implementation. If one agent (say, agent~1) answers {\em yes} and the other agent (agent~2) answers {\em no}, then agent~2 is required to supply additional information about her valuation (beyond the order over items and her one bit {\em no} answer). 

The construction of the partition is as follows. {(\Cref{fig:two-agent-hard-case} provides a visual representation of the construction).} Consider the order of items supplied by agent~1 (starting with $e_1$). Partition the sequence of items into consecutive triples of items. Starting with the first triple, from each triple place in $S$ the item of highest value for agent~2 (this information is available from the order supplied by agent~2). The process is stopped after $s$ triples, where $s$ is determined from $v_2$ so that at that point, $\frac{2}{3} \cdot \MMS_2 \le v_2(S) \le \frac{4}{3} \cdot \MMS_2$. All remaining items are placed in $T$. Thus, also $v_2(T) \ge \frac{2}{3} \cdot \MMS_2$. As to agent~1, $v_1(S) \ge \frac{2}{3} \cdot \MMS_1$ because $e_1 \in S$, and $v_1(T) \ge \frac{2}{3} \cdot \MMS_1$ because $v_1(T) \ge \frac{2}{3} \cdot v_1(\goods \setminus \{e_1\})$.

\subsubsection{Some Additional Comments}
\label{sec:overviewComments}

Here we present some additional comments that were omitted from the overview above, so as not to make it more complicated than it already is.

\begin{itemize}
    \item All ex-post guarantees  hold (and the proofs apply as is) not just with respect to the MMS, but also with respect to the truncated proportional share (TPS), that is always at least as large as the MMS. 

    \item None of our mechanisms require agents to send their full valuation functions. They only require ordinal information (a strict order over the items, consistent with the valuation function of the agent), and relatively few additional bits of information. For Theorem~\ref{thm:logn-apx}, no additional information is needed. For Theorem~\ref{thm:loglog-apx}, only the deficiency {of every agent} is needed, roughly $\log n$ bits per agent. In fact, instead of sending the exact deficiencies, agents may send their deficiencies rounded up to the nearest power of~2. This takes only roughly $\log\log n$ bits per agent. The mechanism can be adapted to use this approximate information, but in the process, it loses a factor of~2 in the MMS approximation. (Further details are omitted).
    
    For Theorem~\ref{thm:two-agents}, each agent needs to send one additional bit specifying whether her top item has value at least $\frac{2}{3}$-MMS. In addition, one of the agents (say, agent~2) might need to send an integer $s$ in the range $[1, m]$ (determining in which triple to stop). If we allow communication to take two rounds and be public, then a suitable $s$ can be reported in the second round, using only $\log\log m + O(1)$ bits. (See Section~\ref{app:twoAgentsComplexity} for details. {It is also shown that if we insist on a one-round mechanism, then $O((\log m)^2)$ bits suffice.}) Having two rounds in the mechanism does not affect the TIE property, because the fractional allocation which determines the ex-ante value that agents get is fixed after the first round of communication.

\item All our mechanisms run in time polynomial in their inputs. The inputs reported to the mechanism, as explained above, are independent of the number of bits required to represent values of the valuation function. The total size of all inputs is $O(nm\log m)$ (mainly used to specify $n$ strict orders over the $m$ items). Consequently, the running time is polynomial in $n$ and $m$. 

Also, each expectation maximizing agent $i$ can compute the reports of her dominant strategy (or $(1 - \varepsilon(n))$-dominant strategy, in Theorem~\ref{thm:loglog-apx}) in time polynomial in the description of the additive valuation $v_i$. Here it is useful that our mechanisms approximate the TPS (and the approximation ratio automatically applies to the MMS as well). Unlike the MMS, the TPS can be computed in polynomial time, implying that the deficiencies (and the parameter $s$ in the proof of Theorem~\ref{thm:two-agents}) can also be computed in polynomial time. 

\end{itemize}

\subsection{Additional Related Work}
\label{sec:related}

The problem of fair allocation has been studied extensively from both algorithmic and game-theoretic perspectives. The reader is referred to the books \cite{BT-96,M-04} and survey \cite{AABFLMVW-23} for background. In this section we discuss some papers that are most related to ours and that we did not discuss already.

\subsubsection{Truthful Fair Mechanisms}
A key line of work in indivisible goods seeks to allocate items fairly in strategic environments, when agents' preferences are private information. We focus on papers on truthful mechanisms most related to our work.
For deterministic mechanisms, there are multiple impossibility results for achieving truthfulness in conjunction with most desirable fairness notions such as minimum-envy, EF1, and MMS \cite{LMMS-04, GKKK-09, ABM-16, ABCM-17}. Stronger impossibility results have been proven with additional restrictions added to the mechanisms such as non-bossiness, neutrality, and Pareto-efficiency \cite{P-00, P-01, H-09, HL-15, BMM-25}.

To circumvent these deterministic impossibilities, a prominent line of research \cite{AFCMM-15, BT-24, GNS-25} utilizes randomization, relaxing truthfulness to \emph{truthfulness-in-expectation (TIE)}, which ensures that truth-telling maximizes an agent's expected utility over the randomness of the mechanism. This is equivalent to truthfulness for risk-neutral agents. Most relevant to our work, Bu and Tao \cite{BT-24}  design TIE mechanisms with desirable fairness properties, including (not as the main result in that paper), a TIE mechanism that achieves PROP1 and $1/n$-MMS ex-post. This TIE mechanism is a faithful implementation of the proportional fractional allocation, which was previously noted to be PROP1~\cite{AzizFSV24, BEF-22}, and~\cite{BT-24} note that it is also $\frac{1}{n}$-MMS.

As noted and discussed in \cite{BF-22}, share-based fairness guarantees (such as an ex-post MMS approximation) that an allocation mechanism offers to an agent hold whenever the agent herself is truthful, regardless of whether other agents are truthful. Consequently, truth-telling (in a non-truthful mechanism) can be referred to as a {\em safe strategy}, meaning that the agent can ``safely'' follow this strategy if she is satisfied with the guarantees offered by the mechanism.  Safe strategies are a weaker notion than dominant strategies.  However, their existence implies that in every Nash equilibrium of the allocation mechanism, all agents get their guarantees. Reaching an equilibrium might be hard, so we construct dominant-strategies mechanisms (which do not have this drawback).

There is also work on truthful mechanisms in the  setting where goods are divisible. Most related to our work is Aziz~and~Ye \cite{AY-13}, which inspired our approach for obtaining TIE mechanisms from universally truthful ones, as discussed in Section~\ref{subsec:overview}. The mechanism in~\cite{AY-13} is based on a serial quota mechanism, with all quotas being $\frac{m}{n}$. For $m \ge 2n^2$, there are input instances (a variation on those appearing in the proof of Proposition~\ref{prop:n-approximation-impossibility}) for which no implementation of the resulting fractional allocation guarantees more than $\frac{1}{n}$-MMS. In contrast, we choose the quotas to all be~1 for all agents but the last, which allows us to give much stronger ex-post guarantees.

An elegant class of truthful mechanisms for fractional allocations, based on \emph{competitive scoring rules}, was introduced by Freeman et al.~\cite{FreemanWVP24}. A natural question is whether for the indivisible setting, these fractional allocations can be implemented on a support of allocations with desirable ex-post fairness guarantees (to get a BoBW result with a TIE mechanism). For two agents, this framework encompasses all non-wasteful truthful mechanisms, and thus includes our CUQ fractional mechanism. For $n \ge 3$, however, the CUQ mechanism does not belong to this class. While CUQ is ordinal, scoring-rule-based mechanisms can utilize cardinal information, so they have the potential of becoming the basis for designing TIE mechanisms with better ex-post fairness guarantees. However, adapting them to the indivisible goods setting presents a complex challenge. The design of our mechanism benefits greatly from the CUQ's derivation from universally-truthful mechanisms, a property that structurally facilitates implementation over a support of integral allocations. Since scoring-rule-based fractional mechanisms do not inherently share this property, the decomposition of their fractional allocations over a support of fair integral allocations seems challenging, and remains a direction for future research.

Some works relax exact truthfulness to obtain results with strategic agents. Mechanisms with no obvious manipulation were studied in \cite{PV22}. In \Cref{thm:loglog-apx}, we relax TIE to $(1 - \varepsilon(n))$-TIE. Similar in spirit relaxations of truthfulness, referred to as the
\emph{incentive ratio}, were previously studied in the context of fair allocation~\cite{XiaoL20, TaoY24}, though involving incentive ratios that are bounded away from~1.

\subsubsection{Share-based ex-post Fairness}
Our work focuses on share-based fairness, using both the Maximin Share ($\MMS$) \cite{B-11} and the Truncated Proportional Share ($\TPS$) \cite{BEF-22}.  

A significant body of work  {attempts to determine the best approximation ratio for the MMS that allocations can guarantee when valuations are additive} \cite{KPW-18, BK-20, GHSSY-18, GMT-19, 
AMNS-17, GT-21, AGST-23, AG-24, HKSS-25}.  Currently, {it is known that $\frac{7}{9}$-$\MMS$ allocations always exist}~\cite{HZ-25}. On the other hand, it was shown in \cite{FST-21} that for $n=3$ agents, {there are allocation instances in which no allocation offers more than} $\frac{39}{40}$-MMS. Another line of research examines the ordinal setting \cite{ABM-16,HS-21}, where the algorithm can only access the agents' ordering over the items, while the exact cardinal (additive) valuations of the agents remain unknown. Allocations based only on ordinal information cannot guarantee better than $1/\harmonic{n}$ approximation to the $\MMS$ ~\cite{ABM-16}. It was shown in~\cite{HS-21} that there exists an ordinal algorithm that obtains $1/(2\harmonic{n})$ approximation to the $\MMS$. Our \Cref{thm:logn-apx} improves upon this in two respects, first being TIE, and second, improving the approximation ratio to  $1/(\harmonic{n-1}+2)$. (Removing the TIE requirement, our bound improves to $\frac{1}{\harmonic{n} + 1}$. Details omitted).

The $\TPS$ was studied both in the context of equal entitlements \cite{BEF-22} and unequal entitlements \cite{BabaioffEF24, BF-25}.
Our proofs make use of the fact that TPS is poly-time computable and is always at least as large as the MMS. 

\subsubsection{Best-of-Both-Worlds and Fractional Mechanisms}

A recent line of work in fair division focuses on designing algorithms in the framework of \emph{Best-of-Both-Worlds (BoBW)}, which aim to simultaneously achieve strong ex-ante and ex-post fairness guarantees for randomized allocations \cite{A-19, FSV-20, A-20, BEF-22, HSV-22, AGM-23, FMNP-24, AAGW-15, HPPS-20, BEF-21, AzizFSV24, BF-26}. Aziz \cite{A-19} studied various settings, including voting, allocation, and matching. The  term ``Best-of-Both-Worlds''  appears to have been introduced in this context by Freeman, Shah, and Vaish \cite{FSV-20}, designing a polynomial-time algorithm that is envy-free (EF) ex-ante and envy-free up to one good (EF1) ex-post. Aziz \cite{A-20} has presented a simpler combinatorial algorithm with the same  guarantees, making use of the Birkhoff-von-Neumann algorithm to implement the fractional allocation ex-post. Babaioff, Ezra, and Feige  \cite{BEF-22} extend the Best‑of‑Both‑Worlds framework to include share‑based fairness, designing a polynomial-time algorithm that is proportional ex-ante and $1/2$-$\TPS$ ex-post. Babaioff~and~Frosh~\cite{BF-26} have designed refined Best-of-Both-Worlds algorithms for the special cases of two and three agents, obtaining, among other properties, proportionality ex-ante along with $9/10$-$\MMS$ ex-post for three agents.  While all the work is purely algorithmic, we obtain a BoBW result in a strategic setting, with a  mechanism that is truthful-in-expectation.

\subsection{Discussion}

Previous to our work, it seemed that there is a large price to pay in terms of the ex-post MMS approximation, if one compares the guarantees of TIE mechanisms with the best possible existential bounds. Known TIE mechanisms offered only a $\frac{1}{n}$ approximation to the MMS~\cite{BT-24}, whereas it is known that $\frac{7}{9}$-MMS allocations exist~\cite{HZ-25}.

Our results paint a very different picture.  For two agents we show a TIE mechanism that offers $\frac{2}{3}$-TPS, matching the best possible existential result. For ordinal mechanisms and general $n$, we show a TIE mechanism that offers $\frac{1}{\harmonic{n-1} + 2}$-MMS, nearly matching the best possible existential results  with only ordinal information~\cite{ABM-16}. Moreover, the notion of deficiency that we introduce, and our mechanisms that make use of it, suggest that much better approximations can be achieved by cardinal TIE mechanisms. 

With the insights obtained from our work, we find it reasonable to conjecture that there are TIE mechanisms that offer each agent at least a constant fraction of her MMS ex-post. Moreover, similarly to all the mechanisms presented in this paper, we may hope to achieve this simultaneously with guaranteeing the proportional share ex-ante.

\section{Preliminaries}
\label{sec:prelim}

We study the problem of fairly and truthfully allocating a set of $m$ items $\goods=\{1,\ldots,m\}$ to a set of $n$ strategic agents $\agents=\{1,\ldots,n\}$ with private preferences over the items. We assume that the preference of each agent $i\in\N$ can be expressed as an \emph{additive valuation function,} a function $\val_{i}:\poset{\M}\rightarrow\Real$ satisfying $\val_{i}(S)=\sum_{\ix\in  S}\val_{i}(\{\ix\})$ for every $ S\subseteq \M$, and $\val_{i}(\emptyset)=0$.
We furthermore assume that items are \emph{goods}, that is $\val_{i}(\ix) \geq 0$ for all $\ix\in\M$, where we write $\val_{i}(\ix)\coloneq\val_{i}(\{\ix\})$ for simplicity of notation. We denote by $\Vadd$ the set of all such valuations. A \emph{valuation profile} is a vector of $n$ valuations $\valprof\in \Vaddprofs$.

\begin{definition}
    A (deterministic) \emph{integral allocation} of the goods $\M$ is a vector of pairwise disjoint bundles $\alloc=(\allocset_1,\ldots,\allocset_{n})\in \left(2^\M\right)^{n}$ such that agent $i\in \agents$ receives the bundle $A_i$, and such that all goods are allocated ($\bigcup_{i\in\N}\allocset_i=\M$). We denote the set of all allocations of goods $\M$ among agents $\N$ by $\allocs$.
\end{definition}

For brevity, we often simply use the term \emph{allocation} to refer to integral allocations. We also consider fractional allocations, in which each good can be split fractionally among several agents.

\begin{definition}
    A \emph{fractional allocation} is a matrix $\falloc\in {[0,1]}^{n\times m}$ such that $\sum_{i\in\N}\fix=1$ for every item $\ix\in\M$. 
    We denote the set of all fractional allocations of goods $\M$ among agents $\N$ by $\fracallocs$. 
\end{definition}

A \emph{partial fractional allocation} is a vector $\falloc\in {[0,1]}^{n\times m}$ such that  $\sum_{i\in\N}\fix \leq 1$. 
Fractional allocations naturally arise from distributions over (integral) allocations.   

\begin{definition}
An (explicitly represented) \emph{distributional allocation} is a distribution $\D$ supported on $K$ allocations $\D = \{(\pk,\allock)\}_{k\in[K]}$, where for each $k\in [K]$, allocation $\allock\in \allocs$ has probability $\pk$ 
($\pk > 0 \ \forall k\in [K]$ and $\sum_{k\in [K]} \pk = 1$). We use $\supp{\D}$ to denote the \emph{support} of the distribution, that is, $\supp{\D}=\{ \allock \}_{k \in [K]}$.

Every distributional allocation \emph{induces} a fractional allocation $\fracD\in\fracallocs$ of the goods, which for each good $\ix\in \M$
allocates to each agent $i\in\N$  a fraction $\fracDix\in [0,1]$, which equals to the probability of $i$ receiving $\ix$ when the allocation is sampled from $\D$: 
    \[\fracDix\denote\prob[\alloc\sim\D ]{\ix\in \allocset_i}=\sum_{k=1}^K \pk\cdot 1 [\ix\in\allocki]\]
We denote the set of all distributional allocations of goods $\M$ among agents $\N$ by $\Delta(\allocs)$.
\end{definition}

We use $\PerN$ to denote the set of all permutations of the set of agents $\N$. Each permutation $\per\in\PerN$ represents an ordering of the agents, where for each agent $i\in\N$, we use $\per_i$ to denote the position of agent $i$ in the ordering. Inversely, the first agent is $\pinv_1$, continuing, up to $\pinv_n$.

Throughout the paper we use $\harmonic{n}$ to denote the $n$-th \emph{harmonic number}, that is, $\harmonic{n}=\sum_{i=1}^n\frac{1}{i}$.

{
In this paper we often focus on ordinal preferences. A strict ordinal preference is an ordering $\vec{e}=(e_1,\ldots,e_m)$ of all the goods in $\M$, where earlier goods are considered better. For an agent $i\in\N$ with additive valuation $\val_i\in\Vadd$, we say strict ordinal preference  $\ei=(\eiat{1},\ldots,\eiat{m})$ is \emph{consistent} with $\val_i$, if for every pair of goods $\ix,y\in \M$ such that $\val_i(x)>\val_i(y)$, item $x$ is placed before $y$ in $\ei$. We say a profile of strict ordinal preferences $(\en{1},\ldots,\en{n})$ is consistent with an additive profile $\valprof$, if ordering $\ei$ is consistent with its respective valuation $\val_i$, for every $i\in\N$.
}

\subsection{Mechanisms}
\label{subsec:prelim-mech}

We consider settings in which the valuation of each agent is \emph{private} information, known only to her. We seek to design \emph{dominant strategy incentive compatible (DSIC)} mechanisms, in which every agent has a dominant strategy, a strategy that is best irrespective of the behavior of the other agents. As usual, we assume that an agent that has a dominant strategy will play such a strategy.
 
We define several key types of mechanisms which we consider in this paper. To simplify our presentation, we present our DSIC mechanisms as \emph{direct-revelation} truthful mechanisms. In such mechanisms, each agent is requested to report a valuation, and truthful bidding is a dominant strategy. Notably, truthfulness is without loss of generality for direct revelation DSIC mechanisms, by the revelation principle \cite{M-81}. Notably, our results are stronger, as our mechanisms do not make use of all the information in the agents' reports, and so can be easily adapted into DSIC mechanisms for restricted strategy spaces in which agents need to report much less than the entire valuation functions (i.e., each only reports a strict ordinal preference over goods that is consistent with her valuation, and perhaps a small amount of additional cardinal information).

We start with deterministic integral mechanisms.

\begin{definition}
A \emph{deterministic (allocation) mechanism} for $n$ agents with valuations from class $\V$ is a function $\mech:\Vprofs\rightarrow \allocs$, that is, given a profile of $n$ valuation functions, mechanism $\mech$ outputs an allocation of $\M$ to the set of agents $\N$.
\end{definition}

We also consider mechanisms that output fractional allocations:
\begin{definition}
    \label{def:frac-alloc-mech} 
    A \emph{fractional allocation mechanism} for $n$ agents with valuations from class $\V$ is a function $\fracMech:\Vprofs\rightarrow \fracallocs$, that is, given a profile of $n$ valuation functions, mechanism $\fracMech$ outputs a fractional allocation in $\fracallocs$.
\end{definition}

For mechanisms that use randomization, we distinguish between mechanisms that only output a sample from an underlying distribution over allocations (for which we use the term \emph{randomized mechanisms}), and mechanisms that explicitly output a distribution over allocations (which we call \emph{distributional mechanisms}). Note that the latter can be used to create the former, by sampling from the output distribution, yet the reverse direction may be infeasible for polynomial time mechanisms, as the explicit distribution may be exponentially large.

\begin{definition} 
    A \emph{randomized} (allocation) mechanism for $n$ agents with valuations from class $\V$, is a function $\randMech:\Vprofs\rightarrow \Delta(\allocs)$, which maps every valuation profile to a distribution over allocations. 
    A randomized mechanism is typically implemented by mapping a valuation profile to a sample from the (implicit) underlying distribution. We use the term \emph{distributional} mechanism to refer to a mechanism which outputs an explicit representation of the output distributional allocation.
    
    Every randomized mechanism $\randMech$ \emph{induces} a fractional allocation mechanism $\fracMech$, which for every input profile $\valprof\in\Vprofs$ outputs the fractional allocation $\fracD$ induced by the distribution $\D=\randMech(\valprof)$. Inversely, the fractional allocation mechanism $\fracMech$ is \emph{implemented} by randomized mechanism $\randMech$ if $\randMech$ induces $\fracMech$.
\end{definition}

A distributional mechanism is fundamentally a deterministic function mapping inputs to distributional allocations; thus, if it operates in polynomial time, its output distribution must have a support of polynomial size. Furthermore, polynomial time distributional mechanisms induce fractional allocations which can also be computed in polynomial time. 
A distributional mechanism naturally induces a randomized mechanism, by sampling from its output distribution, so we sometimes refer to a distributional mechanism as a randomized one. That is, for claims regarding the properties of the distribution generated by a randomized or a distributional mechanism, we can treat both randomized and distributional mechanisms the same.

Distributional mechanisms have several advantages over randomized mechanisms. First, distributional mechanisms create transparency, as they allow each agent to verify her ex-ante fairness guarantees (e.g., proportionality) are satisfied. This is in contrast to randomized mechanisms, where each agent only views a single realized sample of the distribution. Second, distributional mechanisms give the agents freedom to decide how to sample the random allocation (e.g., they can toss coins together), which can increase trust that realized outcome was sampled truly randomly. Finally, the computation of distributional mechanisms is purely deterministic, so after the valuations are reported each agent can individually verify the outcome distribution is correct. See \cite{BF-26} for a more detailed discussion regarding this issue.

Two of our results  (\Cref{thm:logn-apx}, \Cref{thm:two-agents}) are mechanisms satisfying the more demanding requirement of being distributional mechanisms. Our third result (\Cref{thm:loglog-apx}) can be viewed as a ``hybrid'' mechanism, as it begins by sampling a random ordering over the agents (of which there are too many to explicitly list), but still outputs a distributional allocation corresponding to the sampled ordering.

Risk-neutral agents seek to strategically maximize their expected value from the mechanism. Thus, given a distributional allocation $D = \{ (\pk, \allock) \}_{k \in [K]}$ the value $\val_i(D)$ that a risk-neutral agent $i\in\N$ with an additive valuation $\val_i\in\Vadd$ assigns to $D$ is the expected value of $i$ according to $D$, that is, $\val_i(D) = \sum_{k=1}^K \pk\cdot \val_i (\allocki)$.  
Similarly, given a fractional allocation $\falloc$, the \emph{value} that agent $i\in\N$ with an additive valuation $\val_i\in\Vadd$ assigns to $\falloc$ is $\val_i(\falloc)=\sum_{\ix\in\M}\fix\cdot\val_i(\ix)$. 
Thus, the value function extends linearly from integer allocations to distributional and fractional ones.

\begin{definition}\label{def:truthful}
Consider any mechanism (deterministic, fractional, or distributional) $\mech$ for $n$ agents with valuations from class $\V$. We say that $\mech$ is \emph{truthful} (in dominant strategies) if for every valuation profile $\valprof\in\Vprofs$, every agent $i \in \N$, and every alternative valuation $\valp_i\in\V$, it holds that
\[
v_i\!\left(\mech\!\left(\val_i,\, \val_{-i}\right)\right)
\ge
v_i\!\left(\mech\!\left(\val'_i,\, \val_{-i}\right)\right).
\]
When $\mech$ is randomized or distributional, the values above are interpreted in expectation over the randomness of the mechanism, and the mechanism is said to be \emph{truthful-in-expectation (TIE)} instead.
\end{definition}

We note that truthful deterministic mechanisms are \emph{dominant strategy incentive compatible (DSIC)} as each agent's dominant strategy is simply to report their own valuation. On the other hand, TIE randomized or distributional mechanisms are DSIC only when agents are expectation-maximizers, that is, they seek to maximize their expected value, which is the standard assumption in the field of mechanism design. A randomized (or a distributional) mechanism that is created by randomizing over deterministic truthful mechanisms is called a \emph{universally-truthful} mechanism. Universally-truthful mechanisms are truthful even without the assumption of risk neutrality.

Finally, we remark that randomization here is only in the allocation mechanism, and all incentive notions we consider do not need (or use) a prior distribution over types of the other agents.

One of our mechanisms is not exactly truthful-in-expectation, but almost so. We formalize this next.

\begin{definition}\label{def:eps-truthful}
Fix a function $\varepsilon: \mathbb{N}\rightarrow [0,1]$.
Consider any randomized (or distributional) mechanism $\randMech$ for $n$ agents with valuations from class $\V$. 
We say that $\randMech$ is $(1-\varepsilon(n))$-TIE, if for every valuation profile $\valprof\in\Vprofs$, every agent $i \in \N$, and every alternative valuation $\valp_i\in\V$, it holds that
\[
v_i\!\left(\randMech\!\left(\val_i,\, \val_{-i}\right)\right)
\ge (1-\varepsilon(n))\cdot
v_i\!\left(\randMech\!\left(\val'_i,\, \val_{-i}\right)\right).
\]

Where the values above are interpreted in expectation over the randomness of the mechanism.
\end{definition}

We say that a function $\varepsilon: \mathbb{N}\rightarrow [0,1]$ is \emph{negligible} if $\lim_{n\rightarrow\infty}{\varepsilon(n)\cdot n^c}=0$ for every constant $c\ge 0$. Note that, in particular, this implies that $\lim_{n\rightarrow\infty}{\varepsilon(n)}=0$. {We say an event occurs with \emph{overwhelming} probability if it occurs with probability at least $1-\varepsilon(n)$ for some negligible $\varepsilon(n)$.}

\subsection{Fairness}

In this paper we focus on share-based fairness. A share function defines for each agent her deserved share of the total value, which is independent of the other agents' preferences. Each agent is then satisfied when she receives value which is at least her fair share, or a specified fraction of it. We next present the three share functions which are most relevant to our work.

\begin{definition}
    The \emph{proportional share} of an agent with valuation $\val:2^\M\rightarrow\Real$ when allocating goods among $n$ agents is $\PROP_n(\val,\M)=\frac{\val(\M)}{n}$.
\end{definition}

A central share for ex-post allocation of indivisible items is the maximin-share (MMS), introduced by Budish \cite{B-11}.

\begin{definition}
The \emph{maximin-share (MMS)} of an agent with valuation $\val:2^\M\rightarrow\Real$ when allocating goods among $n$ agents, denoted $\MMS_n(\val,\M)$, is the maximal value an agent with valuation $\val$ can guarantee herself when she partitions the items into $n$ bundles and gets the one with the lowest-value. Formally:
\[\MMS_n(\val,\M)=\max_{\alloc\in\allocs}\min_{i\in\N}\left\{ \val(\allocset_{i})\right\}\]
\end{definition}

The truncated-proportional share (TPS) is a share for additive agents that was introduced by \cite{BEF-22}.

\begin{definition}\label{def:TPS}    
The \emph{truncated-proportional share (TPS)} of an agent with valuation $\val:2^\M\rightarrow\Real$ when allocating the set $\M$ of goods among $n$ agents, {denoted $\TPS_n(\val,\M)$,} is the minimum between her proportional share and her $\TPS$ on the instance after removing her highest ranked item and a single agent. That is, letting $\best_v(S)$ denote for each set $S\subseteq M$ a highest-value good in $S$ for $\val$, the $\TPS$ is defined recursively as follows:
    \[\TPS_n(\val, \M)=\min\left(\frac{\val(\M)}{n},\TPS_{n-1}\left(\val,\M\setminus\best_v(\M)\right)\right)\]
    With stopping condition $\TPS_1(\val,S)=\val(S)$.
\end{definition}

The TPS is equal to the proportional share when no item has value above the proportional share, and it is zero when there are fewer items than agents.

{When the set of goods $\M$ is clear from context, we will drop it from the notation of the shares.} In \cite{BEF-22} it was shown that these three shares satisfy the following inequalities.
\begin{lemma}
    \label{lem:share-inequality}
    For every $\M,n$, and an additive valuation $\val\in \Vadd$ it holds that:
    \[\MMS_n(\val)\le \TPS_n(\val)\le \PROP_n(\val)\]
\end{lemma}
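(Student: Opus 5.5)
We prove the two inequalities separately, each by induction on $n$ following the recursive definition of the $\TPS$ (\Cref{def:TPS}). The upper bound $\TPS_n(\val)\le \PROP_n(\val)$ is immediate: for $n\ge 2$ the $\TPS$ is a minimum one of whose arguments is $\frac{\val(\M)}{n}$, and for $n=1$ we have $\TPS_1(\val,\M)=\val(\M)=\PROP_1(\val,\M)$. So the work is in the lower bound $\MMS_n(\val,\M)\le \TPS_n(\val,\M)$.

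For the lower bound we first record the standard fact $\MMS_n(\val,\M)\le \PROP_n(\val,\M)$. In any partition of $\M$ into $n$ bundles the values sum to $\val(\M)$ by additivity, so the minimum bundle has value at most $\frac{\val(\M)}{n}$. This takes care of the first argument of the minimum defining $\TPS_n$.

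It then suffices to show $\MMS_n(\val,\M)\le \MMS_{n-1}(\val,\M\setminus\{g\})$, where $g=\best_v(\M)$. Together with the induction hypothesis $\MMS_{n-1}(\val,\M\setminus\{g\})\le \TPS_{n-1}(\val,\M\setminus\{g\})$, this handles the second argument of the minimum. The base case $n=1$ is trivial, since both shares equal $\val(\M)$.

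For the monotonicity step, take an optimal MMS partition $(\allocset_1,\ldots,\allocset_n)$ of $\M$, each of whose bundles has value at least $\MMS_n(\val,\M)$, and suppose without loss of generality that $g\in \allocset_n$. We discard $\allocset_n$ and redistribute the items of $\allocset_n\setminus\{g\}$ arbitrarily among the first $n-1$ bundles, for instance by adding them all to $\allocset_1$. Since all values are nonnegative, this gives a partition of $\M\setminus\{g\}$ into $n-1$ bundles, each of value at least $\MMS_n(\val,\M)$, and hence $\MMS_{n-1}(\val,\M\setminus\{g\})\ge \MMS_n(\val,\M)$.

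The only point needing care is this removal step. Removing the whole bundle containing $g$, rather than just the item $g$, is what makes the argument work, and it holds regardless of which highest-value item $\best_v$ selects when there are ties. Edge cases such as $m<n$ are harmless, since then $\MMS_n(\val,\M)=0$.
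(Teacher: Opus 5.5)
Your proof is correct. The paper does not actually prove this lemma; it cites \cite{BEF-22}, so there is no in-paper argument to match.

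Your induction works directly with the recursive form of the $\TPS$ in \Cref{def:TPS}. The upper bound is read off the definition. The lower bound combines $\MMS_n\le\PROP_n$ with the removal monotonicity $\MMS_n(\val,\M)\le\MMS_{n-1}(\val,\M\setminus\{g\})$. As your argument shows, this monotonicity holds for any item $g$, not only the best one; it is the $\MMS$ analogue of \Cref{claim:tps-monotonicity}.

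The standard alternative uses the truncation characterization that gives the share its name: $\TPS_n(\val,\M)$ is the largest $t$ with $\sum_{x\in\M}\min(\val(x),t)\ge n t$. Set $t=\MMS_n(\val,\M)$. Every bundle $B$ of an optimal $\MMS$ partition then has $\sum_{x\in B}\min(\val(x),t)\ge t$: either some item of $B$ is worth at least $t$, or nothing in $B$ is truncated. Summing over the $n$ bundles gives $\MMS_n\le\TPS_n$ in one shot.

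That route is shorter once the characterization is available. However, showing that the truncation definition coincides with the recursive one is itself a small lemma. Your route avoids it and is self-contained relative to the definition this paper actually uses.
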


These three inequalities can sometimes be strict and sometimes hold with equality.

Notably, while $\TPS$ and $\PROP$ are computable in polynomial times, computing the $\MMS$ for a given valuation is an NP-hard problem \cite{W-97}. For this reason, throughout the paper we construct our mechanisms to approximate the $\TPS$ instead of $\MMS$, which, by \Cref{lem:share-inequality}, imply identical results for the $\MMS$, in addition to the computational advantage.

\subsubsection{Fair Randomized Mechanisms}

Our mechanisms use randomization for incentives, with the main goal of obtaining ex-post fairness, $\MMS$ (or $\TPS$) approximation which must hold for every realized integral allocation within the distribution's support. Yet, in alignment with the Best-of-Both-Worlds paradigm, our mechanisms also guarantee some ex-ante fairness, in the sense of ex-ante proportionality. 

Our ex-post fairness guarantees are multiplicative approximations of the $\MMS$ and $\TPS$ fair shares.

\begin{definition}
{For $\apx>0$ and valuation profile $\valprof\in \Vprofs$, we say allocation $\alloc\in\allocs$ is \emph{$\apx$-$\MMS$} if each agent $i\in \N$ receives at least $\apx$ fraction of her $\MMS$, that is $\val_i(\allocset_i)\geq \apx \cdot \MMS_n(\val_i)$. A randomized (or a distributional) mechanism $\randMech$ is \emph{$\apx$-$\MMS$ ex-post} if for every valuation profile $\valprof\in \Vprofs$, every allocation in the support $\alloc\in\supp{\randMech(\valprof)}$ is $\apx$-$\MMS$. We define $\apx$-$\TPS$ allocations and $\apx$-$\TPS$ ex-post similarly.}
\end{definition}

We will be using \emph{$\apx$-$\MMS$} and \emph{$\apx$-$\TPS$} only in the ex-post sense, so we might sometimes drop the words ``ex-post'' when it is clear from context.

From the perspective of ex-ante fairness, our mechanisms are ex-ante proportional. 

\begin{definition}
    A randomized (or a distributional) mechanism $\randMech$ is \emph{proportional ex-ante} if for every valuation profile $\valprof\in \Vprofs$, every agent $i\in\N$ receives at least her proportional share in expectation, that is $\expect[\alloc\sim \randMech(\valprof)]{\val_i(\allocset_i)}\ge\frac{\val_i(\M)}{n}$.
\end{definition}

\section{Designing the Truthful Fractional Allocation Mechanism}
\label{sec:design}

We seek to design distributional mechanisms for agents with additive valuations that are truthful-in-expectation and guarantee every agent a high fraction of their $\MMS$ ex-post. When agents have additive valuations, TIE distributional mechanisms induce truthful fractional allocations, and the same holds in reverse. Formally:

\begin{restatable}{lemma}{tiefrac}
    \label{lem:tie-fractional}
    When agents have additive valuations over goods, a randomized (or distributional) mechanism {$\randMech$} is TIE if and only if its induced fractional allocation mechanism $\fracMech$ is truthful.
\end{restatable}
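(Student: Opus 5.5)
The plan is to show that, for additive valuations, the expected value an agent assigns to the output distribution of $\randMech$ coincides with the value she assigns to the induced fractional allocation. Once this is established, the two truthfulness conditions in \Cref{def:truthful} become literally the same family of inequalities.

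\textbf{Step 1 (value identity).} Fix any profile, write $\D=\randMech(\valprof)=\{(\pk,\allock)\}_{k\in[K]}$, and let $\fracD=\fracMech(\valprof)$ be the fractional allocation it induces. For any additive $\val\in\Vadd$ and any agent $i$, I would expand using additivity and then swap the two finite sums:
\[
\val(\D)=\sum_{k=1}^K \pk\,\val(\allocki)=\sum_{k=1}^K\pk\sum_{\ix\in\M}1[\ix\in\allocki]\,\val(\ix)=\sum_{\ix\in\M}\val(\ix)\sum_{k=1}^K\pk\,1[\ix\in\allocki]=\sum_{\ix\in\M}\fracDix\,\val(\ix),
\]
and the right-hand side is exactly $\val(\fracD)$ for agent $i$. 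The key point is that this holds for every additive $\val$. In particular it holds for the agent's \emph{true} valuation $\val_i$, even when the distribution was produced from a misreport $\valp_i$.

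\textbf{Step 2 (equivalence).} Fix $\valprof$, an agent $i$, and a misreport $\valp_i$. Apply Step 1 with the true valuation $\val_i$ to both $\randMech(\val_i,\val_{-i})$ and $\randMech(\valp_i,\val_{-i})$, whose induced fractional allocations are $\fracMech(\val_i,\val_{-i})$ and $\fracMech(\valp_i,\val_{-i})$ respectively. This turns the TIE inequality
\[
\val_i(\randMech(\val_i,\val_{-i}))\ge \val_i(\randMech(\valp_i,\val_{-i}))
\]
into exactly the truthfulness inequality
\[
\val_i(\fracMech(\val_i,\val_{-i}))\ge \val_i(\fracMech(\valp_i,\val_{-i})).
\]
Since one inequality holds if and only if the other does, for all profiles, agents and misreports, both directions of the lemma follow at once.

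\textbf{Main obstacle.} There is no real difficulty here. The one point to state carefully is that the value identity uses the agent's true additive valuation while the distribution comes from the reported profile, and Step 1 handles this because it holds for arbitrary $\val$. For a randomized (non-explicit) mechanism, I would first note that only the underlying distribution $\randMech(\valprof)$ matters. If that distribution has infinite support, the same interchange of sum and expectation still holds by linearity of expectation, because $\val(A_i)=\sum_{\ix}\val(\ix)1[\ix\in A_i]$ is a finite sum.
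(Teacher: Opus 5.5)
Your proposal is correct and takes essentially the same route as the paper: prove the value identity via additivity and an interchange of sums, then observe that a profitable misreport in $\randMech$ is equally profitable in $\fracMech$ and vice versa. You state explicitly that the identity holds for every additive $\val$, so it covers the true valuation applied to the distribution produced by a misreport; the paper uses this step without stating it.
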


The proof follows immediately from the definitions and is given for completeness in \Cref{app:design}. Following the approach of previous works \cite{BT-24, GNS-25}, we make use of \Cref{lem:tie-fractional} to go in reverse: we first design a truthful fractional mechanism, and only later implement it with a distributional mechanism. In this section, we focus on the first step, and discuss our approach for designing truthful fractional allocation mechanisms, which we use to design the \emph{cyclic-unit-quota} fractional mechanism. In the following sections (\Cref{sec:logn}, \Cref{sec:loglogn}, and \Cref{sec:two-agents}) we then perform the second step, each time implementing the cyclic-unit-quota fractional mechanism with distributional or randomized mechanism with different ex-post fairness guarantees.

A trivially truthful fractional allocation mechanism is one that ignores the input entirely, for example, by giving every agent $1/n$ of each item. In \cite{BT-24}, it was shown that this straightforward mechanism can be implemented by a distributional mechanism guaranteeing $1/n$-$\MMS$ ex-post. More sophisticated mechanisms allow the fractional allocation to depend on agents' reports. In the same work, \cite{BT-24} suggest a more intricate truthful mechanism for three agents that does depend on reports. Yet, even in that mechanism, one agent (chosen independently of the reports) still receives each item with identical probability $1/n$. The following impossibility result proves that any mechanism with this property fails to always guarantee every agent strictly better than $1/n$-$\MMS$ ex-post.

\begin{restatable}{proposition}{nimpossibility}    
\label{prop:n-approximation-impossibility}
    Consider a {randomized (or distributional)} mechanism $\randMech$ for $n$ agents with additive valuations over $m\geq 2n-1$ goods. If there exists an agent $i\in\N$ and set $X\subseteq\M$ of $|X|=n-1$ items such that for every valuation profile $\valprof\in \Vaddprofs$ there is a  positive probability that agent $i$ does not receive any item in $X$, that is $\prob[\alloc\sim\randMech(\valprof)]{ X\cap \allocset_i=\emptyset } >0$, then there exists a valuation profile for which the mechanism cannot guarantee all agents strictly more than $\frac{1}{n}$-$\MMS$ ex-post.
\end{restatable}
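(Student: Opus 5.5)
We construct a single valuation profile $\valprof$ tailored to agent $i$ and the set $X$, and show that every allocation in which agent $i$ receives no item of $X$ gives her at most $\frac{1}{n}\MMS_n(\val_i)$. By hypothesis, such an allocation lies in the support of $\randMech(\valprof)$ with positive probability. Hence the mechanism is not strictly better than $\frac{1}{n}$-$\MMS$ ex-post on $\valprof$. The other agents' valuations are irrelevant, so we set them arbitrarily (e.g., equal to $\val_i$).

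For agent $i$, fix a set $Y\subseteq \M\setminus X$ with $|Y|=n$; this is possible since $m\ge 2n-1$. Define
\begin{itemize}
\item $\val_i(x)=n$ for every $x\in X$,
\item $\val_i(y)=1$ for every $y\in Y$,
\item $\val_i(z)=0$ for every remaining item $z$.
\end{itemize}
The total value is $n(n-1)+n=n^2$.

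We first show $\MMS_n(\val_i)\ge n$. Partition the goods into $n$ bundles: each of the $n-1$ items of $X$ forms its own bundle, and the last bundle consists of all of $Y$ together with the zero-valued items. Every bundle has value exactly $n$. (It is also true that $\MMS_n(\val_i)\le \PROP_n(\val_i)=n$, but only the lower bound is needed.)

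Next, consider any allocation in which $X\cap\allocset_i=\emptyset$. Then agent $i$ holds only items from $Y$ and zero-valued items, so $\val_i(\allocset_i)\le |Y|=n$. This by itself does not give the claim, so the construction must be refined: an agent avoiding $X$ could still receive all of $Y$.

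To fix this, we use the structure of ex-post allocations more carefully. Choose the profile so that agent $i$'s only hope outside $X$ is a large share of $Y$, and place the weight on $X$ more heavily. Concretely, redefine
\begin{itemize}
\item $\val_i(x)=1$ for $x\in X$,
\item $\val_i(y)=\frac{1}{n}$ for $y\in Y$, with $|Y|=n$,
\item all other items worth $0$.
\end{itemize}
The partition into the singletons of $X$ plus the bundle $Y$ still certifies $\MMS_n(\val_i)\ge 1$. However, an agent avoiding $X$ can still collect value $1$ from all of $Y$.

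Therefore the key obstacle is that the hypothesis only says agent $i$ misses $X$, not that she misses $Y$. We need other agents' valuations to force $Y$ away from her in fair allocations, or else we must use $m\ge 2n-1$ differently.

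The plan is to take $\M\setminus X$ to consist of exactly $n$ relevant items with value $\frac{1}{n}$ each; then $\val_i(\M\setminus X)=1$ while $\MMS_n(\val_i)=1$. So agent $i$ gets at least $\frac1n\MMS_n(\val_i)$ only if she receives at least one item of $\M\setminus X$. To bound her by $\frac1n$ exactly, make the other agents identical to $\val_i$. If an allocation gives agent $i$ at least two items of $\M\setminus X$, some other agent $j$ must receive a bundle that avoids enough items: the $n-1$ other agents share the $n-1$ items of $X$ and at most $n-2$ items of $Y$. By pigeonhole, some agent $j\ne i$ then receives no item of $X$ and at most a limited part of $Y$.

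Formalizing this counting is the main obstacle: we must show that in every allocation where agent $i$ avoids $X$, some agent (possibly $i$) receives at most $\frac1n$ of her $\MMS$. If that fails, we adjust the values on $Y$ (e.g., making one item of $Y$ dominant) and redo the pigeonhole argument.
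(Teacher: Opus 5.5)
\textbf{Gap: the proposal never reaches a valid construction.} The profile you finally settle on, with the other agents identical to agent $i$, cannot work, and the pigeonhole step you sketch is false.

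If all agents share one valuation, then \emph{whatever} the values are, take an $\MMS$ partition into $n$ bundles. Since $|X|=n-1$, some bundle contains no item of $X$. Give that bundle to agent $i$ and the other bundles to the other agents. This is an allocation with $X\cap A_i=\emptyset$ in which every agent gets her full $\MMS$. A mechanism could put its positive probability exactly on such allocations, so no identical-valuation profile works, and re-tuning the values on $Y$ does not help.

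Concretely, with your second $v_i$, give each $j\neq i$ one item of $X$ and give agent $i$ all of $Y$: everyone gets value $1=\MMS_n$. In particular, $n-1$ other agents sharing the $n-1$ items of $X$ does not force any of them to miss $X$.

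The root of the problem is your opening remark that the other agents' valuations are irrelevant. The hypothesis only tells you that agent $i$ misses $X$. So it is precisely the other agents' preferences that must keep $Y$ away from her.

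\textbf{The missing idea: give the other agents the complementary valuation.} Keep your second $v_i$: $1$ on each item of $X$, $\frac{1}{n}$ on each item of $Y$, $0$ elsewhere. For every $j\neq i$, set $v_j(x)=0$ for $x\in X$, $v_j(y)=1$ for $y\in Y$, and $0$ elsewhere.

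Every agent has $\MMS_n=1$. For $j\neq i$, put each item of $Y$ in its own bundle; for agent $i$, use your partition. The upper bound in both cases is the proportional share.

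Now take any allocation in the support with $X\cap A_i=\emptyset$. There are two cases:
\begin{itemize}
\item Some $j\neq i$ receives no item of $Y$. Then she gets $0$.
\item Every $j\neq i$ receives an item of $Y$. Then these $n-1$ agents hold at least $n-1$ distinct items of $Y$. So $A_i$ contains at most one item of $Y$ and none of $X$, giving $v_i(A_i)\le\frac{1}{n}=\frac{1}{n}\MMS_n(v_i)$.
\end{itemize}
Either way, some agent receives at most a $\frac{1}{n}$ fraction of her $\MMS$. This is exactly the paper's argument, and $m\ge 2n-1$ is used only to find $Y$ disjoint from $X$.
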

\begin{proof} 
    Let $Y\subseteq\M\setminus X$ be a set of $|Y|=n$ items disjoint from $X$ (such a set must exist because $m\geq 2n-1$). Consider the following valuation profile:     
    \begin{center}
        \centering
        \begin{tabular}{c|ccc}
             &  $x_1,\ldots,x_{n-1}\in X$&  $y_1,\ldots,y_n\in Y$ & $\M\setminus(X\cup Y)$\\\hline 
             $\val_i$&  $1$&  $1/n$ & $0$\\ 
             $\forall j\neq i,\space\val_j$&  $0$&  $1$ & $0$\\ 
        \end{tabular}
    \end{center}
    
    First note that the $\MMS$ of all agents is exactly $1$. {Their $\MMS$ is at most $1$ because their proportional share is $1$ (using \Cref{lem:share-inequality}).} For agent $i$ this is achieved by placing each of the $n-1$ items in $X$ in separate bundles and all of $Y$ in the last bundle. For agents $j\neq i$ this is achieved by placing each of the $n$ items in $Y$ in a separate bundle.
    
    Now, recall that we assumed $\prob[\alloc\sim\randMech(\valprof)]{X\cap \allocset_i=\emptyset } >0$, therefore there must exist some ex-post allocation $\alloc\in \supp{\randMech(\valprof)}$ for which $X\cap \allocset_i=\emptyset$. In this allocation $\alloc$, if any of the agents $j\neq i$ does not receive at least one item from $Y$ then they get value $0$, a $0$ fraction of their $\MMS$. Otherwise, agent $i$ can only receive at most one item from $Y$, which combined with $\allocset_i\cap X=0$ implies that agent $i$ gets value of at most $\frac{1}{n}$ (the value of a single item from $Y$). In either case, mechanism $\randMech$ cannot guarantee all agents strictly more than $\frac{1}{n}$-$\MMS$ ex-post.
\end{proof}

This impossibility result has the following significant corollary:

\begin{restatable}{corollary}{imposcor}
    \label{cor:impossibility}
    Consider a {randomized (or distributional)} mechanism $\randMech$ for $n$ agents with additive valuations over $m\geq 2n-1$ goods. If there exists an agent $i$ such that $\randMech$ allocates each item to $i$ with marginal probability of at most $\frac{1}{n}$, then there exists a valuation profile $\valprof\in\Vaddprofs$ for which the mechanism cannot guarantee all agents strictly more than $\frac{1}{n}$-$\MMS$ ex-post. 
\end{restatable}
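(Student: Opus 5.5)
The plan is to derive Corollary~\ref{cor:impossibility} from Proposition~\ref{prop:n-approximation-impossibility}. It suffices to show that the marginal bound forces the hypothesis of the proposition. Fix the agent $i$ with $\prob[\alloc\sim\randMech(\valprof)]{\ix\in A_i}\le\frac1n$ for every item $\ix$ and every profile $\valprof$. Fix any set $X\subseteq\M$ with $|X|=n-1$; this is possible since $m\ge 2n-1\ge n-1$. We then need to show that, for every profile $\valprof$, agent $i$ misses all of $X$ with positive probability.

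The key step is a union bound:
\[
\prob{X\cap A_i\neq\emptyset}\le\sum_{\ix\in X}\prob{\ix\in A_i}\le\frac{n-1}{n}<1 .
\]
Hence $\prob{X\cap A_i=\emptyset}\ge\frac1n>0$ for every $\valprof$. This is exactly the hypothesis of Proposition~\ref{prop:n-approximation-impossibility}, which then supplies a profile on which the mechanism cannot guarantee every agent strictly more than $\frac1n$-$\MMS$ ex-post.

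The main obstacle is only a matter of reading the hypothesis correctly. The marginal bound must hold for every profile, or at least for the specific profile constructed in the proof of the proposition. With the natural reading, that the bound holds for all profiles, the union bound applies uniformly and nothing else is needed. The argument does not depend on which $X$ is chosen, so any $(n-1)$-subset works.
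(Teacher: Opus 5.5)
Your proposal is correct and is the paper's argument: fix any $(n-1)$-subset $X$, use the union bound to get $\Pr[X\cap A_i\neq\emptyset]\le\frac{n-1}{n}<1$, and invoke Proposition~\ref{prop:n-approximation-impossibility}. Your remark that the marginal bound must hold for every profile (or at least for the profile built in the proposition's proof) is the right reading of the hypothesis.
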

\begin{proof}
{\Cref{cor:impossibility} follows immediately from \Cref{prop:n-approximation-impossibility} since for any set $X\subseteq\M$ of size $|X|=n-1$ satisfies the requirement for the impossibility result by the union-bound:
\[\prob[\alloc\sim\randMech(\valprof)]{ X\cap \allocset_i\neq \emptyset }=\prob[\alloc\sim\randMech(\valprof)]{\bigcup_{\ix\in X}x\in \allocset_i}\le \sum_{x\in X}\frac{1}{n}\le \frac{n-1}{n}<1\]}
as required.
\end{proof}

{Note that \Cref{cor:impossibility} holds, in particular, for the uniform fractional allocation.} 

To overcome this impossibility result, we design a new truthful fractional allocation mechanism which depends on the reports of all the agents. In order to design this mechanism we first take a step back, and start out from a universally-truthful mechanism.

\subsection{Universally-Truthful Mechanisms to the Rescue}
\label{sec:univers-truthful}

Recall that a universally-truthful mechanism is simply a distribution over deterministic truthful mechanisms. Our approach is the following:
start from a universally-truthful mechanism, find its induced truthful fractional mechanism, then re-implement the fractional mechanism with a distributional mechanism supported only on allocations that are ex-post fair. As the fractional-allocation mechanism does not change when re-implementing it on a different support, the resulting mechanism is truthful-in-expectation for additive agents. We formalize this approach in the following result.

\begin{restatable}{proposition}{unitotie}
    \label{prop:universal-to-tie}
    Consider the setting where agents have additive valuations.
    Let $\univMech$ be a \emph{universally-truthful} mechanism and let $\randMech$ be any randomized (or distributional) mechanism. If the fractional allocation mechanisms induced by $\univMech$ and $\randMech$ are identical, then $\randMech$ is \emph{truthful-in-expectation}.
\end{restatable}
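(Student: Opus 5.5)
The plan is to reduce everything to \Cref{lem:tie-fractional}. Applied to $\randMech$, that lemma says it suffices to show that the fractional allocation mechanism $\fracMech$ induced by $\randMech$ is truthful. By hypothesis, $\fracMech$ is also the fractional mechanism induced by $\univMech$. Applying \Cref{lem:tie-fractional} in the other direction, it therefore suffices to show that $\univMech$ is itself TIE.

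\textbf{Step 1: $\univMech$ is TIE.} By definition, $\univMech$ is a distribution over deterministic truthful mechanisms $\{(p_k,\mech_k)\}_k$, where the randomness does not depend on the reports. Fix a profile $\valprof$, an agent $i$, and an alternative report $\valp_i$. Then
\[
v_i\big(\univMech(\val_i,\val_{-i})\big)=\sum_k p_k\, v_i\big(\mech_k(\val_i,\val_{-i})_i\big)\ge \sum_k p_k\, v_i\big(\mech_k(\valp_i,\val_{-i})_i\big)=v_i\big(\univMech(\valp_i,\val_{-i})\big).
\]
The inequality holds term by term because each $\mech_k$ is truthful. This is linearity of expectation. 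It is also exactly where we need the randomization to be independent of the reports; this is part of the definition of universal truthfulness, and I would state it explicitly.

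\textbf{Step 2: conclude.} For additive agents, $v_i$ of a distribution equals $v_i$ of its induced fractional allocation. So the inequality in Step 1 reads $v_i(\fracMech(\val_i,\val_{-i}))\ge v_i(\fracMech(\valp_i,\val_{-i}))$, meaning $\fracMech$ is truthful. Since $\randMech$ induces the same $\fracMech$, its expected values coincide with these quantities on every profile, so $\randMech$ is TIE by \Cref{lem:tie-fractional}.

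There is no real obstacle here; the argument is bookkeeping. The only point needing care is making precise that the induced fractional allocations agree \emph{profile by profile}, including at misreported profiles $(\valp_i,\val_{-i})$. The hypothesis provides this, since it asserts equality of the two mechanisms as functions on all of $\Vaddprofs$.
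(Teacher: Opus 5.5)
Your proposal is correct and follows essentially the same route as the paper. You show that universal truthfulness implies TIE by comparing the deterministic truthful components term by term; the paper phrases that step as a contradiction, while you argue it directly. You then apply \Cref{lem:tie-fractional} twice, once from $\univMech$ to the common induced fractional mechanism and once from that mechanism to $\randMech$, exactly as the paper does.
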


The proof follows from applying \Cref{lem:tie-fractional} twice, once from the $\univMech$ to the fractional-allocation mechanism and then again from the fractional-allocation mechanism to $\distMech$. We defer the formal proof to \Cref{app:design}.

\subsection{Unit-Quota Mechanisms with a Random Starting Agent}
\label{subsec:unit-quota}

\Cref{prop:universal-to-tie} states that we can take any universally-truthful mechanism and use it to create TIE mechanisms for additive agents by re-implementing its fractional-allocation mechanism. For any given input, we want to re-implement its fractional-allocation on ex-post fair allocations. Which universally-truthful mechanism should we use? and how should we re-implement its fractional allocation?   

Our starting point is serial-quota mechanisms, a class of deterministic truthful ordinal mechanisms. These mechanisms were studied in multiple papers \cite{P-00, H-09, HL-15, ABM-16,ABCM-17, ALW-20,BGLM-25, BMM-25}. Essentially, a serial-quota mechanism first sets an order over the agents and sets a quota (quantity) for each, and then each agent in her turn (according to the order), picks items with number equal to her quota. We focus on one such mechanism: the one in which each agent, but the last, picks a single item from the ones remaining, and the last gets all the rest.

This mechanism, which we call the \emph{unit-quota mechanism} has received special attention in previous works for its fairness properties. Its $\MMS$ approximation was studied in \cite{ABM-16}, which showed that the unit-quota mechanism (notated $M^{(n,m)}_{\text{PICK SEQ}}$) obtains $1/\floor{\frac{m-n+2}{2}}$ fraction of the $\MMS$, which they then show to be the best possible among all serial-quota mechanisms. The specific case of two agents is studied in \cite{ABCM-17}, which showed that the unit-quota mechanism obtains the best possible $\MMS$ approximation among all truthful deterministic mechanisms for two agents (with the additional restriction that all items must be allocated). In \cite{BMM-25} this optimality result was generalized to deterministic mechanisms for an arbitrary number of agents with the additional assumptions of neutrality and non-bossiness. 

Despite these optimality results, the fairness guarantees of the unit-quota mechanism are extremely poor, guaranteeing only $O(1/m)$-$\MMS$. Yet, we show, maybe surprisingly, that a randomized version of this mechanism can be used to design a fractional allocation mechanism which is implementable ex-post on a support of integral allocations which obtain much higher approximations of the $\MMS$ ($\Omega(1/\ln n)$ in \Cref{thm:logn-apx} and $\Omega(1/\log\log n)$ in \Cref{thm:loglog-apx} with a relaxed incentive requirement). Notably, these bounds only depend on the number of agents $n$, so for a fixed number of agents the approximation stays constant even as the number of items grows large.

For an agent with an additive valuation, once a tie-breaking rule over goods is set (i.e. breaking ties by label), it induces a strict order over items. The ordinal mechanisms that we define use these strict orders.  
\begin{definition}
    \label{def:det-unit-quota}
    Consider a setting with a set $\goods$ of items and $n$ agents, each with a strict ordinal preference over $\goods$. For ordering $\per\in\PerN$ of the agents, the \emph{$\per$-unit-quota mechanism} is a deterministic ordinal mechanism in which for each index {$j$} from $1$ to $n-1$, agent {$\pinv_j$}
    gets her most preferred item (according to her strict ordinal preference) out of the remaining items. {Finally, the last agent $\pinv_n$ gets all remaining items}. 
\end{definition}

Our mechanisms require each agent to have a strict ordering of goods which remains consistent across all cyclic orderings. Under the direct revelation model, the mechanisms can enforce this by systematically breaking ties in the reported valuations by some arbitrary rule (e.g., lexicographically by item label). Alternatively, the mechanism can be formulated with a restricted strategy space where agents directly report strict ordinal preferences. In this mechanism, reporting any strict ordering consistent with the agent's underlying cardinal valuation constitutes a dominant strategy.

Note that the unit-quota mechanism is truthful for any ordering $\per$, as each agent obtains their optimal item by reporting truthfully and no misreport can get them anything better. In this paper we focus on the universally-truthful mechanism created by picking $j\in [n]$ uniformly at random, and running the $\per$-unit-quota mechanism after shifting the ordering by $j$ positions \emph{cyclically}. 

\begin{definition}
    \label{def:cyclic-shifts}
    For some initial ordering of the agents $\per\in\PerN$ and index $j\in[n]$ we use $\percyc{j}$ to denote the ordering resulting from cyclically shifting $\per$ to start at the $j$-th index. Formally: \[\percycij=\begin{cases}
        \per_i-j+1 & \per_i\geq j \\
        \per_i-j+1+n&\text{else}
    \end{cases}\]
\end{definition}

Using this notation, we can define the following universally-truthful distributional mechanism.

\begin{definition}
    \label{def:rand-unit-quota}
    
    The \emph{cyclic-$\per$-unit-quota} distributional mechanism for an initial ordering $\per$, is the mechanism that samples $j\in[n]$ uniformly at random and runs the $\percyc{j}$-unit-quota mechanism.
    
    The \emph{cyclic-$\per$-unit-quota fractional allocation mechanism} is the fractional allocation mechanism induced by the cyclic-$\per$-unit-quota distributional mechanism (with ties broken lexicographically by item names). For a given valuation profile $\valprof\in\Vaddprofs$, we denote this fractional allocation by $\cuqalloc(\valprof,\per)$.
\end{definition}

We note that both the cyclic-$\per$-unit-quota distributional and fractional mechanisms depend only on the strict ordinal preferences over the goods, and thus can be easily adapted to the restricted strategy space where each agent 
may only report a strict ordinal preference over the goods. Additionally, both can be computed explicitly in polynomial time by iterating over all values of $j\in[n]$.

{
The following claim shows that when valuations are additive, implementing the cyclic-$\per$-unit-quota fractional mechanism results in a randomized mechanism which is TIE and proportional ex-ante. TIE follows from applying \Cref{prop:universal-to-tie} from the universally-truthful cyclic-$\per$-unit-quota distributional mechanism, and proportionality ex-ante follows from the proportionality of the fractional mechanism. 
}

\begin{restatable}{lemma}{cuqTieProp}
    \label{lem:cuq-tie}
    Fix any ordering $\per\in\PerN$. When agents have additive valuations, every randomized (or distributional) mechanism which induces the cyclic-$\per$-unit-quota fractional mechanism is \emph{TIE} and \emph{proportional ex-ante}. 
\end{restatable}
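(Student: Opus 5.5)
The lemma has two parts, truthfulness in expectation and ex-ante proportionality, and I will prove them separately.

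\textbf{TIE.} I will reduce this directly to \Cref{prop:universal-to-tie}. The cyclic-$\per$-unit-quota distributional mechanism (\Cref{def:rand-unit-quota}) is the uniform mixture over $j\in[n]$ of the deterministic $\percyc{j}$-unit-quota mechanisms. Each of these is truthful. With ties broken by a fixed rule, reporting the true valuation secures the agent's most preferred remaining item in her turn. If she is the last agent, she receives all remaining items, and her report has no effect on anything, since earlier agents' picks do not depend on her report. Changing her report can only change which single item she takes. Before her turn, the remaining set is independent of her report, so truthfully picking the best remaining item is optimal.

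It follows that the cyclic mechanism is universally truthful. By definition, any randomized or distributional mechanism $\randMech$ that induces $\cuqalloc(\cdot,\per)$ induces the same fractional mechanism as this universally-truthful mechanism. \Cref{prop:universal-to-tie} then gives that $\randMech$ is TIE.

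\textbf{Ex-ante proportionality.} Since $\randMech$ induces $\cuqalloc(\valprof,\per)$ and valuations are additive, the expected value of agent $i$ is $\val_i(\cuqalloc(\valprof,\per)) = \frac1n\sum_{j=1}^n \val_i(\allocj_i)$, where $\allocj$ is the $\percyc{j}$-unit-quota allocation. It therefore suffices to show $\sum_j \val_i(\allocj_i)\ge \val_i(\M)$.

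Over the $n$ cyclic shifts, agent $i$ occupies each position $1,\dots,n$ exactly once. Let $e_1,\dots,e_m$ denote her consistent strict order. When she picks in position $p\le n-1$, at most $p-1$ items have been removed, so her pick has value at least $\val_i(e_p)$. In position $n$, exactly $n-1$ items have been removed and she receives all the rest. Any $m-n+1$ items have total value at least $\sum_{k\ge n}\val_i(e_k)$, because the removed items number only $n-1$, so the worst case is losing $e_1,\dots,e_{n-1}$. Summing these bounds gives $\sum_{p=1}^{n-1}\val_i(e_p)+\sum_{k=n}^m \val_i(e_k)=\val_i(\M)$. (If $m<n$, the positions beyond $m$ contribute nonnegative value, and the same bound holds.)

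I expect no serious obstacle. The only delicate points are arguing truthfulness of each deterministic unit-quota mechanism under the fixed tie-breaking rule, and the last-position bound, which the sorted-order comparison above handles.
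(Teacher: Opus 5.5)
Your proposal is correct and follows essentially the same route as the paper's proof. For TIE, you use universal truthfulness of the uniform mixture over the $n$ cyclic unit-quota mechanisms together with \Cref{prop:universal-to-tie}; for proportionality, you bound each position separately (her $p$-th best item for $p\le n-1$, and all items outside her top $n-1$ in the last position) and sum over the $n$ shifts. Your explicit check that the bound still holds when $m<n$ is a minor detail the paper leaves implicit.
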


We defer the proof to \Cref{app:design}.

In the following section, we will design a distributional allocation mechanism which implements the cyclic-$\per$-unit-quota fractional allocation mechanism. By \Cref{lem:cuq-tie} any such mechanism is both TIE and proportional, and so the challenging step is to design the mechanism to only output distributional allocations whose support consists of allocations with a good approximation to the $\MMS$.

\section{Implementing the Cyclic-Unit-Quota Fractional Mechanism}
\label{sec:gen-imp}

In the previous section we have established that the Cyclic-Unit-Quota fractional mechanism is truthful, and so any distributional mechanism which implements it is TIE. In this section, we present a {generic result} for implementing this fractional allocation on a distribution supported only on fair allocations (\Cref{lem:implement-export}). These results are then used to prove our main results (\Cref{thm:logn-apx} and \Cref{lem:general-weight}, the main step in proving \Cref{thm:loglog-apx}).

Fix an additive valuation profile $\valprof\in \Vaddprofs$ and an ordering of the agents $\per\in\PerN$. These will remain constant throughout the entire section, and are a prerequisite to every definition.

For a given index $j\in[n]$, we consider the cyclic shift (see \Cref{def:cyclic-shifts}) $\percyc{j}$, which is derived from the initial ordering $\per$ by shifting the initial agent ``forward'' cyclically $j-1$ times. For every cyclic shift $\percyc{j}$ we can run the respective picking order, in which each agent picks a single item in turn according to $\percyc{j}$. Note that unlike the deterministic $\percyc{j}$-unit-quota mechanism, here we do not allocate all remaining goods to the last agent. Instead, we will ``split'' this suffix of goods fractionally among the agents.

Our generic method for splitting the suffix of goods  allocates each agent $i\in \N$ a ``portion'' $0\le \portij\le 1$ of every item in the suffix in the cyclic shift $\percyc{j}$. These portions form a matrix:

\begin{definition}
    A \emph{portion matrix} is an $n\times n$ matrix $\portmat\in {[0,1]}^{n\times n}$ such that the sum of each row and each column is at most $1$. Let $\portmats$ denote the set of all such matrices. 
\end{definition}

We will show that every portion matrix can be used to construct a distributional allocation which induces the {fractional allocation} $\cuqalloc(\valprof,\per)$. Importantly, the entries of the  portion matrix may depend on the valuations of the agents (and not only the preference orders). In fact, the portion matrix is the \emph{only} element in this section which can incorporate cardinal information: all other definitions only use ordinal information.

We now introduce the central result of this section, which is used in the proofs of \Cref{thm:logn-apx} and \Cref{thm:loglog-apx}.

\begin{restatable}{lemma}{implementExport}
    \label{lem:implement-export}
    There exists a polynomial time ordinal algorithm for the following problem. The algorithm is given an ordering of $n$ agents $\per\in\PerN$, a profile of $n$ strict ordinal preferences  $(\en{1},\ldots,\en{n})$ that is consistent with an additive profile $\valprof \in \Vaddprofs$, and a portion matrix $\portmat\in\portmats$  and computes a distributional allocation $\D$, and $n$ additional distributional allocations $\Dnum{1},\ldots,\Dnum{n}$, such that:
    \begin{enumerate}
        \item $\D$ induces $\cuqalloc(\valprof,\per)$, and has support of size at most $n\cdot m$ that satisfies $\supp{\D}\subseteq\bigcup_{j=1}^n\supp{\Dj}$.
        \item For every agent $i\in\N$ and index $j\in [n]$, for every ex-post allocation $\alloc\in\supp{\Dnum{j}}$ there exists $1\le r\le \percycij$ such that:
        \[\val_i(\allocset_i)\ge \max\left(\val_i(\eiat{r})\;,\;\portij\cdot \val_i(\M\setminus\{\eiat{1},\ldots,\eiat{2n-r}\})\right)\]
    \end{enumerate}
\end{restatable}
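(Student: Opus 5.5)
The plan follows the overview in \Cref{subsec:overview}. For each $j\in[n]$, run the picking sequence of $\percyc{j}$ for $n$ rounds without giving the suffix to the last agent. Let $P_j$ be the resulting prefix (one item per agent) and $S_j=\M\setminus P_j$ the suffix. The unit-quota allocation $\allocj$ gives $P_j$ as picked and $S_j$ to the last agent $\pinv_{n}$ of $\percyc{j}$, and $\cuqalloc(\valprof,\per)=\frac1n\sum_j \allocj$ as fractional allocations.

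\textbf{Initial fractional allocations.} Define $\falloc^{(j,1)}$ to keep the prefix of $\allocj$ and to give each agent $i$ a fraction $\portij$ of every item of $S_j$. Column sums of $\portmat$ are at most $1$, so this is a partial fractional allocation.

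\textbf{Deficit filling.} Whenever $x\in S_k$ but $x\notin S_j$, where $\pinv_{n}$ of $\percyc{j}$ is agent $a$, set agent $a$'s fraction of $x$ in $\falloc^{(k,1)}$ to $0$. This removes any fraction of $x$ that agent $a$ would receive from a suffix it is not entitled to in CUQ.

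After this step, each agent $a$ has an ``agent deficit'' on items. On the other side, each $\falloc^{(k)}$ has an ``allocation deficit'' $1-\sum_i f^{(k)}_{i,x}$ for each $x\in S_k$. For a fixed item $x$, the total agent deficit equals the total allocation deficit, because both sides sum to exactly one unit of $x$ per cyclic shift in the $\frac1n$-average. Within item $x$ this is a transportation problem, and any feasible flow works, for example a greedy north-west-corner fill. We must also check that agent $a$'s total is at most $\mathrm{CUQ}_{a,x}$, i.e.\ that nothing is over-allocated. This holds because the zeroing step ensures that $a$ only receives from suffixes whose items lie in $S_j$. The result is fractional allocations $\dfa{k}$ with $\frac1n\sum_k\dfa{k}=\cuqalloc$, and each $\dfa{k}$ coincides with $\allock$ on the prefix.

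\textbf{Implementation and value bound.} Next, implement each $\dfa{j}$ by a faithful implementation $\Dj$ via \Cref{lem:faithful}, using the consistent orders $\en{i}$. Set $\D=\frac1n\sum_j\Dj$, then apply \Cref{lem:support-reduction} to get support at most $n\cdot m$ inside $\bigcup_j\supp{\Dj}$; the reduction keeps the induced fractional allocation unchanged.

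For property 2, fix $i$ and $j$, and let $t=\percycij$. In round $t$ agent $i$ picks her best remaining item. At most $t-1$ items were taken before, so that item is $\eiat{r}$ for some $r\le t$, and every ex-post allocation in $\supp{\Dj}$ contains it since the prefix is integral. This gives the first term of the maximum.

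For the second term, write $U=\M\setminus\{\eiat{1},\ldots,\eiat{2n-r}\}$. First show $U\subseteq S_j$ minus any items zeroed for $i$. Items of $P_j$ ranked after $\eiat{r}$ by $i$ are among the at most $n-t\le n-r$ items picked after round $t$. Items zeroed for $i$ (only if $i$ is last somewhere) are bounded using \Cref{lem:union-picking-size}: at most $n-t$ of them, and they lie among $i$'s top $2n-r$ after accounting for the first items. Hence $v_i(\dfa{j}_i)\ge v_i(\eiat{r})+\portij v_i(U)$. The faithful implementation then loses at most one fractional item, of value at most $v_i(\eiat{r})$, so ex-post $v_i(A_i)\ge\portij v_i(U)$.

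\textbf{Expected obstacle.} The delicate step is the counting showing that all removed and zeroed items lie in $i$'s top $2n-r$. This requires combining the picking-structure lemma across cyclic shifts with the requirement that a single $r$ serves both terms. I would first try taking $r$ to be the index of $i$'s pick in shift $j$ and bounding the two sets of excluded items separately.
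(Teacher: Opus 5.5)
After your zeroing step, $\falloc^{(k,1)}$ is exactly the paper's partial allocation $\pga{k}$: agent $i$ receives $B_{ik}$ on $F_{ik}=\M\setminus(Y_k\cup Y_{i_{+1}})$. Your transportation fill, faithful implementation, averaging and support reduction also follow the paper's route.

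\textbf{The gap is in the second term of property~2.} You plan to show $U=\M\setminus\{\eiat{1},\ldots,\eiat{2n-r}\}\subseteq\fsetij$, i.e.\ that every item picked in shift $j$ or zeroed for $i$ lies among $i$'s top $2n-r$ items. This is false, because the other agents pick by their own orders and may take items that $i$ ranks last. For instance, take $n=2$, $m=10$, $\per$ the identity, agent~1 ranking $x_1\succ\cdots\succ x_{10}$, and agent~2 ranking $x_{10}$ first. In shift $j=1$, agent~1 picks $x_1$ (so $t=r=1$) and agent~2 picks $x_{10}\in\Yj\subseteq\zsetij\cup\{\eiat{r}\}$. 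Yet $x_{10}\in U=\{x_4,\ldots,x_{10}\}$. So the ``delicate counting'' you anticipate cannot succeed, because the containment it aims at does not hold.

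Your sub-claim that the prefix items ranked after $\eiat{r}$ were all picked after round $t$ is also wrong. Exactly $t-r$ of the earlier picks are ranked below $\eiat{r}$. The total bound $n-r$ you arrive at is nonetheless correct.

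\textbf{The fix is to argue by cardinality rather than containment, as the paper does.} From $|\Yj\cap\Yip|\ge t$ (\Cref{lem:union-picking-size}) you get $|\zsetij|\le 2n-t-1\le 2n-r-1$ for $r=\rij$. Hence $|\fsetij|=m-1-|\zsetij|\ge m-2n+r$. Any set of $k$ items is worth to $i$ at least as much as her $k$ worst items, and $U$ consists of exactly her $m-2n+r$ worst items, so $\val_i(\fsetij)\ge\val_i(U)$. With this in place, $\val_i(\dfa{j}_i)\ge\val_i(\eiat{r})+\portij\,\val_i(U)$, and your faithful-implementation step finishes the proof.

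\textbf{A smaller omission.} In checking that no agent is over-allocated, the zeroing step only settles items $x\in\Yip$. For $x\notin\Yip$ you need the row-sum condition $\sum_k B_{ik}\le 1$ of the portion matrix. That gives $\sum_k g^{(k)}_{i,x}\le 1+\sum_k 1[x=\pitem{i}{k}]=n\cdot\cuqix$, which is the required bound.
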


The algorithm of \Cref{lem:implement-export} is entirely ordinal, that is, it {never makes use of the} agents' cardinal valuations, only a strict ordinal preference profile that is consistent with the additive profile. This is possible even though $\cuqalloc$ depends on $\valprof$, since we show that $\cuqalloc$ can also be computed with ordinal information alone (\Cref{lem:cuq-fractional}).
The valuations are mentioned in the lemma, as the value guarantee holds for them (although they are unknown to the algorithm).

The remainder of section is dedicated to proving \Cref{lem:implement-export}. In \Cref{subsec:cyclic-pick-notation} we present some useful notation, and use it to obtain an explicit characterization of $\cuqalloc$. In \Cref{subsec:partial-frac} we define for portion matrix $\portmat$ its corresponding partial fractional allocations. In \Cref{subsec:completing-partial} we then complete these into full fractional allocations. In \Cref{subsec:ex-post-fairness} we use the faithful implementation technique to obtain some general ex-post fairness results. Finally, in \Cref{subsec:proving-export} we complete the proof of \Cref{lem:implement-export}. 

\subsection{Cyclic Picking Order Notation}
\label{subsec:cyclic-pick-notation}

We now introduce some useful notation for the cyclic picking orders corresponding to the $j\in [n]$ cyclic shift.

\begin{definition}  
    Fix an ordering of the agents $\per\in\PerN$ and a  profile of $n$ strict ordinal preferences $(\en{1},\ldots,\en{n})$. Let $i\in\N$ and $j\in[n]$. Then:
    \begin{itemize}
        \item Let $\ip\in[n]$ denote the index such that agent $i$ picks last in the cyclic shift $\percyc{\ip}$, that is, $\percyc{\ip}_i=n$. Define $i_{-1}$ cyclically similarly.
        \item Let $\pitemij$ denote the good chosen by agent $i$ in the cyclic picking order $\percyc{j}$.
        \item Let $\Yj$ denote the set of $n$ goods chosen by all agents in the cyclic picking order starting at index $j$, that is $\Yj=\left\{\pitem{1}{j},\ldots,\pitem{n}{j} \right\}$.
    \end{itemize}
\end{definition}

Using this notation, we can now give an explicit characterization of the cyclic-$\per$-unit-quota fractional allocation. It uses $1[\cdot]$ to denote value of $1$ when the expression within the brackets holds, and $0$ otherwise.

\begin{lemma}
\label{lem:cuq-fractional}   
    Fix an ordering of the agents $\per\in\PerN$ and  a profile of $n$ strict ordinal preferences  $(\en{1},\ldots,\en{n})$, 
    that is consistent with an additive profile $\valprof \in \Vaddprofs$. Then for every agent $i\in\N$ and good $\ix\in\M$:
    \[\cuqix(\valprof,\per)=\frac{1}{n}\cdot \left({1}\left[\ix\notin\Yip\right]+\sum_{j=1}^n1\left[\ix=\pitemij\right]\right)\]
\end{lemma}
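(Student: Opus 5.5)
The plan is to unfold \Cref{def:rand-unit-quota} directly. The cyclic-$\per$-unit-quota distributional mechanism puts probability $\frac{1}{n}$ on each of the $n$ deterministic $\percyc{j}$-unit-quota allocations, $j\in[n]$. Hence
\[
\cuqix(\valprof,\per)=\frac{1}{n}\sum_{j=1}^n 1\left[\ix\in A^{(j)}_i\right],
\]
where $\vec{A}^{(j)}$ denotes the allocation output by the $\percyc{j}$-unit-quota mechanism. It therefore suffices to describe $A^{(j)}_i$ explicitly for each $j$, and then sum the indicators.

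\textbf{Step 1: relate the picks to the notation.} Because each agent's strict order is fixed and is used in every cyclic shift, the $\percyc{j}$-unit-quota mechanism proceeds in three stages. First, for positions $1,\ldots,n-1$, agent $(\percyc{j})^{-1}_k$ picks her top remaining item. This item is exactly $\pitem{i}{j}$ for $i=(\percyc{j})^{-1}_k$, because the picking order in the first $n-1$ turns is identical to the cyclic picking order defining $\Yj$. Second, the last agent $i$ (the one with $\percyc{j}_i=n$, so $j=\ip$) would, in the picking order, pick $\pitem{i}{\ip}$. Third, in the mechanism she instead receives all remaining items, namely $\M\setminus\{\pitem{k}{\ip}:k\neq i\}$. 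This set equals $\{\pitem{i}{\ip}\}\cup(\M\setminus\Yip)$.

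So for every agent $i$ and every $j$:
\begin{itemize}
\item if $j\neq\ip$, then $A^{(j)}_i=\{\pitemij\}$;
\item $A^{(\ip)}_i=\{\pitem{i}{\ip}\}\cup(\M\setminus\Yip)$, and this union is disjoint because $\pitem{i}{\ip}\in\Yip$.
\end{itemize}

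\textbf{Step 2: sum the indicators.} From Step 1,
\[
1[\ix\in A^{(j)}_i]=1[\ix=\pitemij]+1[j=\ip]\cdot 1[\ix\notin\Yip].
\]
Summing over $j$ gives $\sum_j 1[\ix=\pitemij]+1[\ix\notin\Yip]$. Multiplying by $\frac{1}{n}$ yields the claimed formula.

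\textbf{Main obstacle.} The only subtle point is a consistency issue. The definition of CUQ breaks ties lexicographically from $\valprof$, whereas the lemma is stated in terms of the given consistent strict orders $\en{i}$. I would note that the mechanism is defined via the strict orders, so the picks $\pitemij$ refer to those same orders. Concretely, I would take the unit-quota mechanism to be run with the given orders, which is the convention under which $\pitemij$ is defined. With that identification, the argument is pure bookkeeping.
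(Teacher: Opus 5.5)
Your proof is correct and follows the paper's argument: CUQ is the uniform mixture of the $n$ cyclic-shift unit-quota allocations, and agent $i$ receives $x$ in shift $j$ exactly when $x=y_{ij}$, or when $j=i_{+1}$ (she is last) and $x\notin Y_{i_{+1}}$, and these cases are disjoint because $y_{i,i_{+1}}\in Y_{i_{+1}}$. Your tie-breaking remark is also apt: the identity holds only when CUQ is computed from the same strict orders $\vec{e}^{(i)}$ that define the $y_{ij}$, and the paper adopts this convention implicitly.
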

\begin{proof}
    Recall that the fractional allocation  $\cuqalloc$ was constructed by iterating over all $n$ cyclic picking orders, allocating all remaining items past the $n$ chosen items to the agent who picked last. Thus the fraction of item $\ix$ received by agent $i$ is equal to $\frac{1}{n}$ multiplied by the number of sequences where she receives item $\ix$. This can happen in two cases: either $\ix$ is picked explicitly by agent $i$, in which case $\ix=\pitemij$, or agent $i$ is last (which occurs in ordering $\percyc{\ip}$) and none of the agents picked $\ix$, thus $\ix\notin \Yip$.
\end{proof}

Note that \Cref{lem:cuq-fractional} implies that the cyclic-unit-quota fractional allocation can be computed in polynomial time given only strict ordinal preferences that are consistent with the additive valuations.

The following additional notation is useful when describing the partial fractional decomposition in \Cref{subsec:partial-frac}.

\begin{definition}
    Fix an ordering of the agents $\per\in\PerN$ and a  profile of $n$ strict ordinal preferences $(\en{1},\ldots,\en{n})$. Let $i\in\N$ and $j\in[n]$. Then:
    \begin{itemize}
        \item Let $\rij$ denote the rank of $\pitemij$ to agent $i$, so that $\eiat{\rij}=\pitemij$.
        \item Let $\zsetij=\left(\Yj\cup \Yip\right)\setminus\left\{\eiat{\rij}\right\}$ denote the set of ``zeroed'' goods.
        \item Let $\fsetij=\M\setminus\left(\Yj\cup \Yip\right)$ denote the set of ``fractional'' goods.
    \end{itemize}
\end{definition}

Note that for every $i\in\N,j\in[n]$ we have $\M=\{\eiat\rij\}\cup \zsetij\cup\fsetij$.

\subsection{Decomposition to Partial Fractional Allocations}
\label{subsec:partial-frac}

For a portion matrix $\portmat\in\portmats$, we will construct $n$ partial fractional allocations, one for each cyclic shift $\percyc{j}$, which fully allocate the first $n$ goods according to the picking order $\percyc{j}$, and splits the remaining suffix between the agents according to the $j$-th column of $\portmat$.

\begin{definition}
    \label{def:partial-frac}
    Fix an ordering of the agents $\per\in\PerN$ and a profile of $n$ strict ordinal preferences  $(\en{1},\ldots,\en{n})$.

    Given portion matrix $\portmat\in \portmats$, its corresponding partial fractional allocations $\pga{1},\ldots,\pga{n}$ are defined as follows.
    For every $i\in\N$ and $j\in[n]$:
    \begin{enumerate}
        \item Fully allocate her chosen item 
        $x=\pitemij$ in the  picking order, that is, set $\pgixj=1$.
        \item For every item $\ix\in\Yj\setminus\left\{\pitemij\right\}$ picked by the other agents, set $\pgixj=0$ (a ``zeroed'' good $\ix\in\zsetij$).
        \item For every item $\ix\in \Yip\setminus \left\{\pitemij\right\}$, set $\pgixj=0$ (also a ``zeroed'' good $\ix\in\zsetij$).
        \item Finally, for every remaining ``fractional'' good $\ix\in\fsetij=\M\setminus\left(\zsetij\cup\{\pitemij\}\right)$, set $\pgixj=\portij$.
    \end{enumerate}
\end{definition}

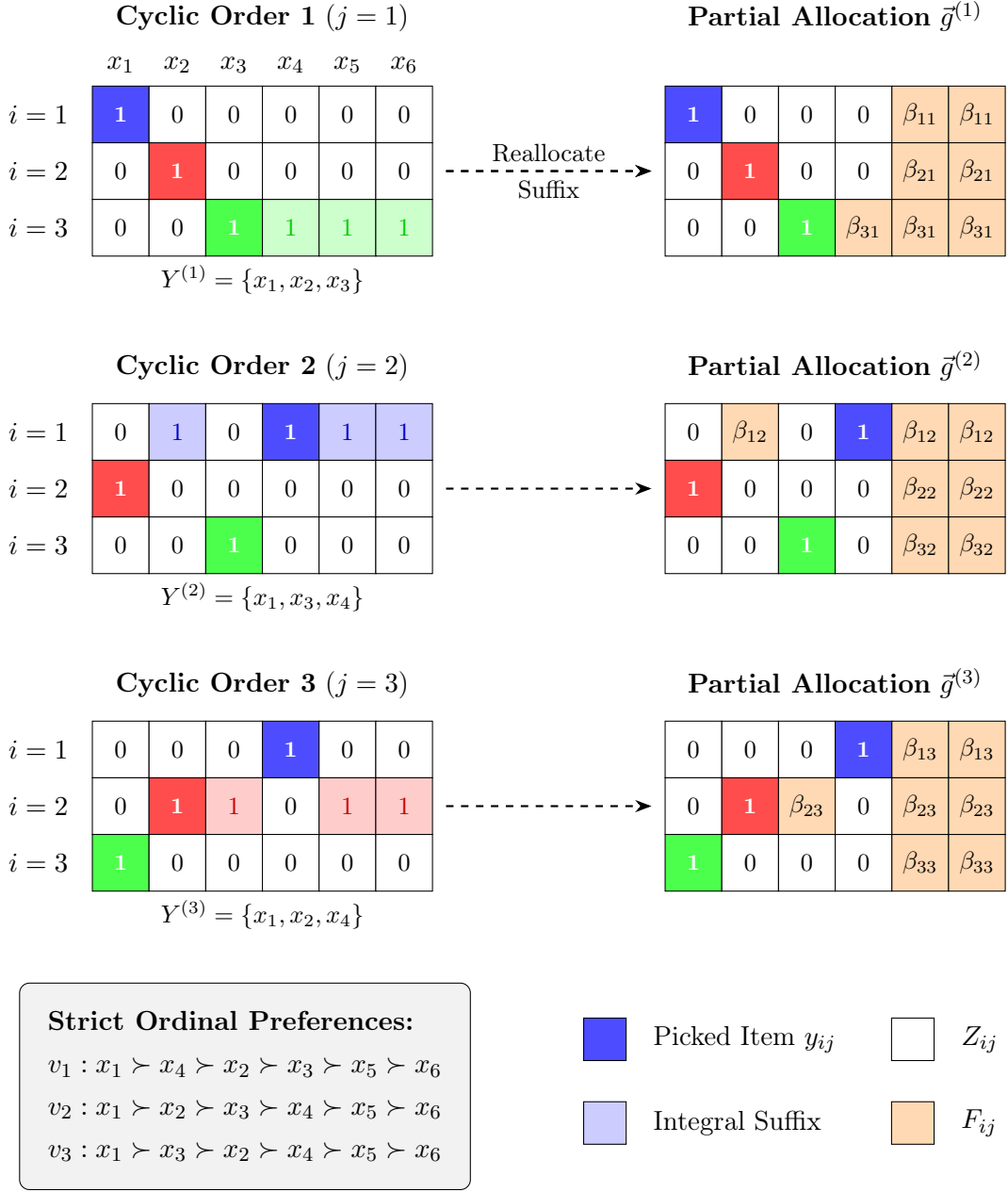
\begin{figure}[htbp]
    \centering

        \begin{tikzpicture}[
        scale=0.5,
        >=Stealth,
        cell/.style={rectangle, draw=black, minimum width=8mm, minimum height=8mm, align=center, anchor=center, font=\small},
        pick1/.style={cell, fill=blue!70, text=white, font=\footnotesize\bfseries},
        pick2/.style={cell, fill=red!70, text=white, font=\footnotesize\bfseries},
        pick3/.style={cell, fill=green!70, text=white, font=\footnotesize\bfseries},
        suff1/.style={cell, fill=blue!20, text=blue!80!black},
        suff2/.style={cell, fill=red!20, text=red!80!black},
        suff3/.style={cell, fill=green!20, text=green!80!black},
        frac/.style={cell, fill=orange!30, text=black}, 
        zero/.style={cell, fill=white, text=black},
        unavail/.style={cell, fill=gray!30, text=black},
        matrix style/.style={matrix of nodes, nodes in empty cells, row sep=-\pgflinewidth, column sep=-\pgflinewidth}
    ]


    \matrix (M1) [matrix style] {
        |[pick1]| 1& |[zero]| 0& |[zero]| 0& |[zero]| 0& |[zero]| 0& |[zero]| 0\\
        |[zero]| 0& |[pick2]| 1& |[zero]| 0& |[zero]| 0& |[zero]| 0& |[zero]| 0\\
        |[zero]| 0& |[zero]| 0& |[pick3]| 1& |[suff3]| 1& |[suff3]| 1& |[suff3]| 1\\
    };
    \node[above=0.5cm of M1] {\textbf{Cyclic Order 1} ($j=1$)};
    \node[left=0.2cm of M1-1-1] {$i=1$};
    \node[left=0.2cm of M1-2-1] {$i=2$};
    \node[left=0.2cm of M1-3-1] {$i=3$};

    \foreach \c/\item in {1/x_1, 2/x_2, 3/x_3, 4/x_4, 5/x_5, 6/x_6} {
        \node[above=0.05cm of M1-1-\c, font=\normalsize] {$\item$};
    }

    \matrix (M2) [matrix style, below=1.8cm of M1] {
        |[zero]| 0& |[suff1]| 1& |[zero]| 0& |[pick1]| 1& |[suff1]| 1& |[suff1]| 1\\
        |[pick2]| 1& |[zero]| 0& |[zero]| 0& |[zero]| 0& |[zero]| 0& |[zero]| 0\\
        |[zero]| 0& |[zero]| 0& |[pick3]| 1& |[zero]| 0& |[zero]| 0& |[zero]| 0\\
    };
    \node[above=0.05cm of M2] {\textbf{Cyclic Order 2} ($j=2$)};
    \node[left=0.2cm of M2-1-1] {$i=1$};
    \node[left=0.2cm of M2-2-1] {$i=2$};
    \node[left=0.2cm of M2-3-1] {$i=3$};
    
    \matrix (M3) [matrix style, below=1.8cm of M2] {
        |[zero]| 0& |[zero]| 0& |[zero]| 0& |[pick1]| 1& |[zero]| 0& |[zero]| 0\\
        |[zero]| 0& |[pick2]| 1& |[suff2]| 1& |[zero]| 0& |[suff2]| 1& |[suff2]| 1\\
        |[pick3]| 1& |[zero]| 0& |[zero]| 0& |[zero]| 0& |[zero]| 0& |[zero]| 0\\
    };
    \node[above=0.05cm of M3] {\textbf{Cyclic Order 3} ($j=3$)};
    \node[left=0.2cm of M3-1-1] {$i=1$};
    \node[left=0.2cm of M3-2-1] {$i=2$};
    \node[left=0.2cm of M3-3-1] {$i=3$};


        

    \node[below=-0.15cm of M1, font=\small] {$Y^{(1)} = \{x_1, x_2, x_3\}$};
    \node[below=-0.15cm of M2, font=\small] {$Y^{(2)} = \{x_1, x_3, x_4\}$};
    \node[below=-0.15cm of M3, font=\small] {$Y^{(3)} = \{x_1, x_2, x_4\}$};


    \matrix (M4) [matrix style, right=3cm of M1] {
        |[pick1]| 1& |[zero]| 0& |[zero]| 0& |[zero]| 0& |[frac]| ${\port_{11}}$& |[frac]| $\port_{11}$\\
        |[zero]| 0& |[pick2]| 1& |[zero]| 0& |[zero]| 0& |[frac]| ${\port_{21}}$& |[frac]| ${\port_{21}}$\\
        |[zero]| 0& |[zero]| 0& |[pick3]| 1& |[frac]| ${\port_{31}}$& |[frac]| ${\port_{31}}$& |[frac]| $\port_{31}$\\
    };
    \node[above=0.5cm of M4] {\textbf{Partial Allocation} $\pga{1}$};
    
    \matrix (M5) [matrix style, right=3cm of M2] {
        |[zero]| 0& |[frac]| ${\port_{12}}$& |[zero]| 0& |[pick1]| 1& |[frac]| ${\port_{12}}$& |[frac]| ${\port_{12}}$\\
        |[pick2]| 1& |[zero]| 0& |[zero]| 0& |[zero]| 0& |[frac]| ${\port_{22}}$& |[frac]| ${\port_{22}}$\\
        |[zero]| 0& |[zero]| 0& |[pick3]| 1& |[zero]| 0& |[frac]| ${\port_{32}}$& |[frac]| ${\port_{32}}$\\
    };
    \node[above=0.05cm of M5] {\textbf{Partial Allocation} $\pga{2}$};

    \matrix (M6) [matrix style, right=3cm of M3] {
        |[zero]| 0& |[zero]| 0& |[zero]| 0& |[pick1]| 1& |[frac]| ${\port_{13}}$& |[frac]| ${\port_{13}}$\\
        |[zero]| 0& |[pick2]| 1& |[frac]|${\port_{23}}$& |[zero]| 0& |[frac]| ${\port_{23}}$& |[frac]| ${\port_{23}}$\\
        |[pick3]| 1& |[zero]| 0& |[zero]| 0& |[zero]| 0& |[frac]| ${\port_{33}}$& |[frac]| ${\port_{33}}$\\
    };
    \node[above=0.05cm of M6] {\textbf{Partial Allocation} $\pga{3}$};

    \draw[->, thick, dashed] ($(M1.east)+(0.1,0)$) -- ($(M4.west)-(0.1,0)$) node[midway, above, font=\small, align=center] {Reallocate};
    \draw[->, thick, dashed] ($(M1.east)+(0.1,0)$) -- ($(M4.west)-(0.1,0)$) 
    node[midway, below, font=\small, align=center] {Suffix};
    \draw[->, thick, dashed] ($(M2.east)+(0.1,0)$) -- ($(M5.west)-(0.1,0)$);
    \draw[->, thick, dashed] ($(M3.east)+(0.1,0)$) -- ($(M6.west)-(0.1,0)$);

    \path (M3.south west) -- (M6.south east) coordinate[midway] (midbottom);

    
    \node (Vals) [draw, fill=gray!10, rounded corners, align=left, font=\normalsize, inner sep=4mm, anchor=north east, execute at begin node={\linespread{1.25}\selectfont}] at ([xshift=-2.2cm, yshift=-2.3cm]midbottom) {
        \textbf{Strict Ordinal Preferences:}\\
        $v_1: x_1 \succ x_4 \succ x_2 \succ x_3 \succ x_5 \succ x_6$\\
        $v_2: x_1 \succ x_2 \succ x_3 \succ x_4 \succ x_5 \succ x_6$\\
        $v_3: x_1 \succ x_3 \succ x_2 \succ x_4 \succ x_5 \succ x_6$
    };

    \matrix (Legend) [matrix, row sep=5.5mm, column sep=2.5mm, anchor=north west] at ([xshift=0.7cm, yshift=-3cm]midbottom) {
        \node[pick1, minimum width=6mm, minimum height=6mm, anchor=center] {}; & 
        \node[font=\normalsize, align=left, anchor=west]{Picked Item $\pitemij$}; &[0.4cm]
        
        \node[zero, minimum width=6mm, minimum height=6mm, anchor=center] {}; & 
        \node[font=\normalsize, align=left, anchor=west]{$\zsetij$}; &
        &\\
        
        \node[suff1, minimum width=6mm, minimum height=6mm, anchor=center] {}; & 
        \node[font=\normalsize, align=left, anchor=west]{Integral Suffix}; &

        \node[frac, minimum width=6mm, minimum height=6mm, anchor=center] {}; & 
        \node[font=\normalsize, align=left, anchor=west]{$\fsetij$};\\
        
        \\
    };

    \end{tikzpicture}
    
    \caption{ Visualization of the construction of the partial fractional allocations $\pga{1}, \pga{2}, \pga{3}$ for $n=3$ agents and $m=6$ goods, with the identity ordering and the strict ordinal preferences as presented above. The left column displays the three integral allocations created by the original cyclic picking sequences, where the final agent receives the entire integral suffix. The sets of picked goods are $\Yitems{1},\Yitems{2},\Yitems{3}$ written below 
    their corresponding allocations. The right column illustrates the result of the transformation into three partial allocations $\pga{1}, \pga{2}, \pga{3}$. For each $j\in \{1,2,3\}$, the explicitly picked item $\pitemij$ is fully allocated ({``fraction''} of $1$). {The white cells denote the ``zero'' goods $\zsetij$ that are allocated with fraction $0$ to agent $i$ in the cyclic shift $j$. The integral suffix is reallocated fractionally between the agents across the three partial allocations, where in cyclic shift $j$ agent $i$ receives fraction $\portij$ of every fractional good in $\fsetij$.}}
    \label{fig:partial_allocations_construction}
\end{figure}

For a visualization of the constructed partial fractional allocations $\pga{1},\ldots,\pga{n}$ for a specific example, see \Cref{fig:partial_allocations_construction}. Note that for any given portion matrix $\portmat\in \portmats$, the partial fractional allocations $\pga{1},\ldots,\pga{n}$ can be constructed in polynomial time.

\begin{lemma}
    \label{lem:partial-valid}
    Fix an ordering of the agents $\per\in\PerN$ and a profile of $n$ strict ordinal preferences  $(\en{1},\ldots,\en{n})$.

    For every portion matrix $\portmat\in\portmats$, the corresponding $\pga{1},\ldots,\pga{n}$ are valid partial fractional allocations, that is, for every $j\in[n]$ and good $\ix\in\M$, it holds that $\sum_{i\in\N}\pgixj\le 1$.
\end{lemma}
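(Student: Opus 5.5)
Fix $j\in[n]$ and a good $\ix\in\M$. We show $\sum_{i\in\N}\pgixj\le 1$ by splitting into two cases: $\ix\in\Yj$ and $\ix\notin\Yj$.

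\textbf{Case $\ix\in\Yj$.} Since in the picking order $\percyc{j}$ each agent picks a distinct item, there is exactly one agent $i^*$ with $\ix=\pitem{i^*}{j}$. By step~1 of \Cref{def:partial-frac}, $\g^{(j)}_{i^*,\ix}=1$. For every other agent $i\neq i^*$ we have $\ix\in\Yj\setminus\{\pitemij\}\subseteq\zsetij$, so step~2 sets $\pgixj=0$. Note that steps~1--4 are consistent: $\{\pitemij\}$, $\zsetij$ and $\fsetij$ partition $\M$, so each entry is assigned exactly once. The column sum is therefore exactly $1$.

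\textbf{Case $\ix\notin\Yj$.} No agent picks $\ix$ in round $j$, so step~1 never applies to $\ix$. For each agent $i$, $\pgixj$ is either $0$ (if $\ix\in\Yip$, a zeroed good) or $\portij$ (if $\ix\in\fsetij$). Hence
\[
\sum_{i\in\N}\pgixj\le\sum_{i\in\N}\portij\le 1,
\]
since $\portij\ge 0$ and the $j$-th column of $\portmat$ sums to at most $1$ by the definition of a portion matrix.

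Finally, every entry lies in $[0,1]$: each is $0$, $1$, or some $\portij\in[0,1]$. Together with the column-sum bound from the two cases, this shows each $\pga{j}$ is a valid partial fractional allocation. The only step needing care is checking that the three sets partition $\M$, so the definition is unambiguous, and that the picked items in $\Yj$ are distinct. Both are immediate from the picking procedure.
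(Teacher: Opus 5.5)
Your proof is correct and follows the paper's argument: the same two cases, where $x$ is picked in $\percyc{j}$ (column sum exactly $1$) or $x\notin\Yj$ (column sum at most $\sum_i \portij\le 1$). Your extra checks that $\{\pitemij\}$, $\zsetij$, $\fsetij$ partition $\M$ and that all entries lie in $[0,1]$ are harmless additions the paper leaves implicit.
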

\begin{proof}
    Fix an index $j\in[n]$ and good $\ix\in\M$. We must prove that the total assigned fraction of item $\ix$ among all agents (every $i\in\N$) 
    does not exceed $1$, or formally, $\sum_{i\in\N}\pgixj\le 1$.
    
    If there exists some agent $i\in\N$ such that $\ix=\pitemij$, meaning agent $i$ picks $\ix$ in the picking order $\percyc{j}$, then item $\ix$ is fully allocated to agent $i$, that is $\pgixj=1$ and $\pg{j}_{i',\ix}=0$ for every agent $i'\in\N\setminus\{i\}$, so  $\sum_{i'\in\N}\pg{j}_{i',\ix}=1$, proving the requirement with equality.
        
    Otherwise, the good $\ix$ is not picked by any agent, so $\ix\in\M\setminus \Yj$. This implies by the construction that for every agent $i\in\N$ either $\pgixj=0$ or $\pgixj=\portij$. Therefore:
    \[\sum_{i\in\N}\pgixj\le \sum_{i\in\N}\portij\le 1\]
     as required.
\end{proof}

\subsection{Completing the Decomposition}
\label{subsec:completing-partial}

{To implement the CUQ fraction mechanism, for every valuation profile we must match its fractional allocation exactly. Thus, we need to ``complete'' the partial fractional allocations to full allocations, such that their average match the fractional allocation exactly.}

For a full fractional allocation $\falloc$ and partial fractional allocation $\galloc$, we say that $\falloc$ \emph{completes} $\galloc$ if $\falloc$ is pointwise larger than $\galloc$, that is $\fix\ge \gix$ for every $i\in\N,x\in\M$.

In order to complete the partial fractional allocations $\pga{1},\ldots,\pga{n}$, we make use of the following general results for ``completing'' matrices to target sums.

\begin{restatable}{claim}{greedyCompletion}
    \label{claim:greedy-completion} There exists a polynomial-time algorithm for the following ``matrix completion'' problem. 
    Given a matrix $\matr\in {[0,1]}^{r\times c}$, row targets $\vec{s}\in {[0,c]}^r$ and column targets $1$ for every column, such that \;\emph{(1)} $\sum_{v=1}^c\matr_{uv}\le s_u$ for every row $u\in[r]$, \;\emph{(2)} $\sum_{u=1}^r\matr_{uv}\le 1$ for every column $v\in[c]$, \;\emph{(3)} $\sum_{u=1}^r s_u=c$, the algorithm outputs $\matr'\in{[0,1]}^{r\times c}$ such that $\matr'_{uv}\ge \matr_{uv}$ for every $u\in[r],v\in[c]$, the row sums equal $\vec{s}$ exactly, and the column sums equal $1$ exactly.
\end{restatable}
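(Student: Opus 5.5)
We can realize this as a greedy fill, or equivalently as a transportation/flow problem. Let the deficits be $a_u = s_u - \sum_v \matr_{uv} \ge 0$ for each row and $b_v = 1 - \sum_u \matr_{uv} \ge 0$ for each column. By condition (3),
\[
\sum_u a_u = c - \sum_{u,v}\matr_{uv} = \sum_v b_v .
\]
So total row deficit equals total column deficit. It suffices to find a nonnegative matrix $\Delta$ with row sums $a$ and column sums $b$, and to set $\matr' = \matr + \Delta$.

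Entrywise we need $\matr'_{uv} \le 1$. This will follow automatically from the column sums: $\matr'$ is nonnegative and column $v$ sums to exactly $1$, so every entry of that column is at most $1$. Hence there are no capacity constraints on $\Delta$ beyond nonnegativity.

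Such a $\Delta$ always exists and can be found greedily, by the northwest-corner rule for transportation problems. Maintain pointers $u=1$, $v=1$. Set $\Delta_{uv} = \min(a_u, b_v)$ and subtract this amount from both $a_u$ and $b_v$. Advance $u$ if $a_u$ becomes $0$, and advance $v$ if $b_v$ becomes $0$ (both if both do). Since the two totals are equal, the procedure exhausts all row and column deficits simultaneously. It runs in $O(r+c)$ steps after computing the deficits in $O(rc)$, so it is polynomial.

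A short invariant argument confirms correctness. At every step the remaining row deficits and remaining column deficits have equal sums. Therefore, whenever one pointer runs past its last index, all deficits on the other side are already $0$. This yields $\matr'_{uv} \ge \matr_{uv}$, row sums equal to $s_u$, and column sums equal to $1$.

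The only point needing care is the equal-totals bookkeeping, which is immediate from condition (3). The bound $\matr' \le 1$ then follows from the column sums, as noted above.
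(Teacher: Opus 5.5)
Your proposal is correct and is essentially the paper's proof. The paper also runs a greedy fill that repeatedly raises one entry in an unsatisfied row and an unsatisfied column by the smaller of the two residual deficits. It terminates in at most $r+c$ steps, uses condition (3) to conclude that all row and column targets are reached simultaneously, and gets $K'_{uv}\le 1$ from nonnegativity plus column sums of at most $1$. Your northwest-corner rule is just one particular choice of which unsatisfied row--column pair to process next, whereas the paper allows an arbitrary pair.
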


We defer the proof of \Cref{claim:greedy-completion} to \Cref{app:logn}. The proof relies on a straightforward greedy algorithm that increases the fraction in cells one-by-one until all rows and columns meet their targets, which will happen concurrently due to the third input constraint.

{We next show that there is a polynomial time algorithm for the problem of ``completing'' the partial fractional allocations to match the CUQ fractional allocation.}

\begin{lemma}
    \label{lem:completion}
    There exists a polynomial time algorithm for the following problem. The algorithm is given an ordering of the agents $\per\in\PerN$, a profile of $n$ strict ordinal preferences  $(\en{1},\ldots,\en{n})$ that is consistent with an additive profile $\valprof \in \Vaddprofs$, and a portion matrix $\portmat$, computes full fractional allocations $\dfa{1}, \ldots, \dfa{n}$ such that:
    \begin{enumerate}
        \item For portion matrix $\portmat$ it computes the corresponding partial fractional allocations $\pga{1},\ldots,\pga{n}$.
        \item $\dfa{j}$ completes $\pga{j}$ for every $j\in[n]$.
        \item $\sum_{j=1}^n\dfa{j}=n\cdot \cuqalloc(\valprof,\per)$.
        \item For every $j\in[n],i\in[n],x\in\M$, $\dfa{j}$ is strictly fractional $0<\dfixj<1$ only when $x\in \fsetij$.
    \end{enumerate}
\end{lemma}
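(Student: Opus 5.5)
We plan to reduce \Cref{lem:completion} to \Cref{claim:greedy-completion}, applied separately to each agent $i$. Fix $i$. Form an $r\times c$ matrix whose rows are the $n$ cyclic shifts $j\in[n]$ and whose columns are the goods $\ix\in\M$, with entries $\matr_{j\ix}=\pgixj$. The column target is the total amount of $\ix$ that agent $i$ must receive across all shifts, namely $n\cdot\cuqix$.

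\Cref{claim:greedy-completion} is stated with row targets arbitrary and column targets $1$. It is more convenient to let goods index the rows and shifts index the columns. This matches the claim's requirement that each column sum to one unit, so the intended orientation is as follows. Fix a shift $j$. For each good $\ix$, the column $(\pgixj)_{i\in\N}$ must be completed to sum to exactly $1$, since $\dfa{j}$ is a full allocation. Simultaneously, for each agent $i$ and good $\ix$, the row $\sum_j \dfixj$ must equal $n\cdot\cuqix$.

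This couples all three indices, so we instead treat each good $\ix$ separately. For a fixed $\ix$, build the $n\times n$ matrix $\matr_{ij}=\pgixj$ with rows indexed by agents and columns by shifts. The row targets are $s_i=n\cdot\cuqix=1[\ix\notin\Yip]+\#\{j:\ix=\pitemij\}\in[0,n]$, and the column targets are $1$. Condition (3) of the claim holds because $\sum_i\cuqix=1$. Condition (2) is \Cref{lem:partial-valid}. Applying the claim to every good yields the $\dfa{j}$. Properties 1--3 are then immediate: completion gives pointwise domination, column sums $1$ give full allocations, and row sums give $\sum_j\dfa{j}=n\cdot\cuqalloc$. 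The running time is polynomial.

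The main obstacle is condition (1), $\sum_j\pgixj\le s_i$, together with property 4. For condition (1), split by cases on $\ix$.
\begin{itemize}
\item If $\ix=\pitemij$ for some $j$, the entry $1$ in that column is matched by the count in $s_i$.
\item If $\ix\in\Yip$ but $\ix\neq\pitemij$, then $\pgixj=0$ by step 3 of \Cref{def:partial-frac}, so no demand is created.
\item If $\ix\notin\Yip$, then $s_i\ge1$. The fractional contributions $\portij$ over the $j$ with $\ix\in\fsetij$ sum to at most the row sum of $\portmat$, which is at most $1$. These contributions occur only in shifts where $i$ did not pick $\ix$.
\end{itemize}
Combining the cases gives $\sum_j\pgixj\le s_i$.

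For property 4, we must argue that the greedy completion never puts a strictly fractional value at a position with $\ix\notin\fsetij$. Such positions are either $\ix=\pitemij$, where the entry is already $1$, or $\ix\in\zsetij$. For $\ix\in\zsetij$, we need the completion to leave the entry at $0$ or raise it only to $1$.

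When $\ix\in\Yj$ is picked by some other agent, the column of shift $j$ is already full, so the entry stays $0$. The delicate case is $\ix\in\Yip\setminus\Yj$. Here $s_i$ equals the number of shifts in which $i$ picks $\ix$, and those entries are already $1$, so row $i$ is already saturated and the entry stays $0$. Thus zeroed entries are never increased, and property 4 holds. This case analysis is the step where the precise form of $\cuqix$ from \Cref{lem:cuq-fractional} matters most.
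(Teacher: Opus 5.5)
Your proposal is correct and matches the paper's proof: for each good $x$ you apply \Cref{claim:greedy-completion} to the agents-by-shifts matrix $K_{ij}=g^{(j)}_{i,x}$ with row targets $n\cdot\mathrm{CUQ}_{i,x}$, and you verify condition (1) by the same split on whether $x\in Y_{i_{+1}}$. You also get property 4 from the same observation, namely that a zeroed entry lies either in an already-full column ($x\in Y_j$) or in an already-saturated row ($x\in Y_{i_{+1}}$), and the detour through other orientations at the start is unnecessary.
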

\begin{proof}
     We first compute (in polynomial time) the partial fractional allocations $\pga{1},\ldots,\pga{n}$ corresponding to the given portion matrix $\portmat$, as well as the cyclic-unit-quota fractional allocation $\cuqalloc=\cuqalloc(\valprof,\per)$. Note the algorithm can compute $\cuqalloc$ using only the given strict ordinal preference profile, as by \Cref{lem:cuq-fractional} $\cuqalloc$ only depends on the agents' strict ordinal preferences.

    Fix an item $\ix\in\M$, we construct the full fractional allocations for good $\ix$. Construct matrix $\matr\in {[0,1]}^{n\times n}$ as follows. For every $i\in\N,j\in[n]$, set $\matr_{ij}=\pgixj$.  Set row targets $\vec{s}\in\Real^{n}$ defined to be $s_i=n\cdot \cuqix$ for every $i\in \N$. We now prove this construction satisfies all the requirements of \Cref{claim:greedy-completion}.
    
    \begin{claim}
        \label{claim:row-column-targets}
        For item $\ix$ fixed above, for $r=c=n$ the defined matrix $\matr$ and row target $\vec{s}$ satisfy all three requirements of \Cref{claim:greedy-completion}.  
    \end{claim}
    \begin{proof} We prove each requirement separately.
    
        \textit{Requirement 1:} Let $i\in \N$ be some agent (corresponding to a row in $\matr$). Recall from \Cref{lem:cuq-fractional} that  $n\cdot\cuqix=\left({1}\left[\ix\notin\Yip\right]+\sum_{j=1}^n1\left[\ix=\pitemij\right]\right)$. Separate into cases. If $\ix\in\Yip$, then by our construction of $\pga{1},\ldots,\pga{n}$, for every $j\in[n]$, either $\pitemij=\ix$ and then $\pgixj=1$ (good $\ix$ is fully allocated to agent $i$), or otherwise $\pgixj=0$. Thus in this case:
        \[\sum_{j=1}^n \matr_{ij}=\sum_{j=1}^n \pgixj=\sum_{j=1}^n1\left[\pitemij=\ix\right]=n\cdot \cuqix=s_i\]
        So the requirement holds with equality. Otherwise, assume $\ix\notin\Yip$. In this case, again if $\pitemij=\ix$ then $\pgixj=1$, but now otherwise either $\pgixj=0$ or $\pgixj=\portij$. Note that in all cases, $\pgixj\le \portij+1\left[\ix=\pitemij\right]$. Therefore: 
        \[\sum_{j=1}^n \matr_{ij}=\sum_{j=1}^n \pgixj\le \sum_{j=1}^n\left(\portij+1\left[\ix=\pitemij\right]\right)\le 1+\sum_{j=1}^n1\left[\ix=\pitemij\right]=n\cdot \cuqix=s_i\]   
        where the last inequality follows from the definition of a portion matrix.

        \textit{Requirement 2:} Let $j\in[n]$ be some index (corresponding to a column in $\matr$). Then the validity of the construction of the partial allocation $\pga{j}$ shown in \Cref{lem:partial-valid} implies that:
        \[\sum_{i\in\N}\matr_{ij}=\sum_{i\in\N}\pgixj\le 1\]
        as required.
    
        \textit{Requirement 3:} The sum of row targets is $\sum_{i\in\N}s_i=\sum_{i\in\N}n\cdot \cuqix$. Since $\cuqalloc$ is a full fractional allocation, it must fully allocate $\ix$ across all agents, that is $\sum_{i\in\N} \cuqix=1$, proving that $\sum_{i\in\N}s_i=n$, as required.
    \end{proof}
    
    \Cref{claim:row-column-targets} implies that we can apply \Cref{claim:greedy-completion} to obtain a new matrix $\matr'\in {[0,1]}^{n\times n}$ such that the target row and column sums are reached exactly. Now, simply set $\dfixj=\matr'_{ij}$ for every $i\in\N,j\in[n]$. Note that for every  $i\in\N,j\in[n]$ this implies $\dfixj=\matr'_{ij}\ge\matr_{ij}=\pgixj$. Because the columns of $\matr'$ sum to $1$, for every $j\in[n]$ item $\ix$ is fully allocated in $\dfixj$ among all agents. For every agent $i\in\N$, because the $i$-th row of $\matr'$ sums to $n\cdot \cuqix$, we have $\frac{1}{n}\sum_{j=1}^n\dfixj=\cuqix$.
    
    Altogether, by applying this construction for each item $\ix\in\M$ in turn, we obtain full fractional allocations $\dfa{1}, \ldots, \dfa{n}$ such that for every $j\in[n]$ fractional allocation $\dfa{j}$ completes $\pga{j}$ and the mean $\sum_{j=1}^n\dfa{j}$ equals $n\cdot \cuqix$.
    
    We now show the final requirement. Let $j\in[n]$, $i\in\N$, $\ix\in\M$ so that $\ix\notin \fsetij$ ($\ix\in \Yj\cup \Yip$), we show that $\dfixj\in \{0,1\}$. If $\ix=\pitemij$ then $\pgixj=1$, and so since $\dfixj\ge\pgixj$, also $\dfixj=1$. Otherwise, $\ix \in \left(\Yj\cup \Yip\right)\setminus\{\pitemij\}$, then $\pgixj=0$. If $\ix\in \Yj\setminus\{\pitemij\}$, then $\sum_{i'\in\N}\pga{j}_{i',\ix}=1$, and so $\dfixj$ cannot be strictly greater than $\pgixj=0$, otherwise $\sum_{i'\in\N}\dfa{j}_{i',\ix}> 1$. Otherwise $\ix\in \Yip\setminus\{\pitemij\}$, then $\sum_{j'\in[n]}\pga{j'}_{i,\ix}=\sum_{j'\in[n]}1\left[x=\pitem{i}{j'}\right]=n\cdot \cuqix$, and so $\dfixj$ cannot be strictly greater than $\pgixj=0$, otherwise $\sum_{j'\in[n]}\dfa{j'}_{i,\ix}> n\cdot \cuqix$. In both cases we get that $\dfixj=\pgixj=0$.
\end{proof}

\subsection{Fairness Guarantees and Support Size Reduction}
\label{subsec:ex-post-fairness}

We first present the faithful implementation technique, as presented in \cite{BEF-22}. Restated in our notation:

\begin{lemma}
    \label{lem:faithful}
    There exists a polynomial time \emph{ordinal} algorithm that, 
    given a profile of $n$ strict ordinal preferences $(\en{1},\ldots,\en{n})$ that is consistent with an additive profile $\valprof \in \Vaddprofs$, and a fractional allocation $\falloc\in\fracallocs$, outputs a distributional allocation $\D$ that induces $\falloc$, with support of polynomial size in $n$ and $m$, such that for every agent $i\in\N$, the difference in value between any two ex-post allocations is at most the value of one item $\ix\in\M$ which is allocated to agent $i$ strictly fractionally in $\falloc$ ($0<\fix<1$).

    Moreover, if every strictly fractional value in $\falloc$ is equal to $\frac{1}{k}$ for some integer $k$ then the support has size exactly $k$.
\end{lemma}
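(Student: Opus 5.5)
The plan is to follow the classical ``sorted round-robin'' (or bihierarchy) construction of \cite{BEF-22}. First we reduce to a bipartite fractional matching. For each agent $i$, list the items in her strict order $\ei$. Call the total fraction $\sum_{x}\fix$ that $i$ receives $F_i$. Cut the sequence of fractions $f_{i,\eiat{1}},f_{i,\eiat{2}},\ldots$ into consecutive ``slots'' of total mass exactly $1$, with a last slot of mass $F_i-\lfloor F_i\rfloor$ (possibly zero). An item may straddle two consecutive slots. This yields a bipartite graph between slots and items. Each item has total degree-mass $\sum_i \fix = 1$, and each slot has mass at most $1$. We will work with this as a fractional matching polytope, adding a dummy slot if needed to handle the rounding of fractional $F_i$.

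Next, we decompose this fractional matching. The polytope in which every item is matched exactly once and every slot at most once (full slots exactly once, the partial last slot at most once) is a bipartite matching polytope and hence integral. By Birkhoff--von Neumann or Carathéodory, we can compute in polynomial time a convex combination of at most polynomially many integral matchings realizing the fractional one. Each integral matching gives an allocation: agent $i$ receives the items matched to her slots. Because each item's fractions are preserved in expectation, the distribution induces $\falloc$.

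The faithfulness bound follows from the sorted structure of the slots. In any integral matching, agent $i$ receives exactly one item from each full slot $t$ (and possibly one from the last slot). Items in slot $t$ are all weakly better for $i$ than those in slot $t+1$. Comparing two ex-post bundles slot by slot, the item from slot $t$ in the first bundle is at least the item from slot $t+1$ in the second. This telescopes, so the difference in value is at most the value of the single item drawn from the first slot or the boundary slot. The step I expect to be the main obstacle is proving that this one leftover item is strictly fractional for $i$, and is not an item with $\fix=1$. To handle it, we pre-assign all items with $\fix\in\{0,1\}$ integrally and remove them, so that slots are built only from strictly fractional items. Integral items then contribute equally to every ex-post bundle, and the leftover item in the telescoping argument is strictly fractional. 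We must also check that partially filled slots caused by non-integer $F_i$ lose at most one item's worth, again bounded by a strictly fractional item.

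For the ``moreover'' clause, suppose every strictly fractional entry equals $\frac1k$. Then we instead split each item into $k$ unit copies and do $k$ rounds of assignments. Equivalently, we take a $k$-regular bipartite multigraph (slots duplicated by $k$, items with $k$ copies) and decompose it into $k$ perfect matchings by König's edge-colouring theorem. Each matching gets probability $\frac1k$, so the support has size exactly $k$. Faithfulness is inherited from the same slot-by-slot argument.
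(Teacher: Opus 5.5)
Your proposal is correct and is essentially the slot-based (``eating''/bihierarchy) construction of \cite{BEF-22}; the paper gives no proof of its own, only citing that work and noting that ordinality and the support-size claim follow from the construction. Pre-assigning the items with $\fix=1$ is exactly what makes the leftover item in your telescoping strictly fractional, partial last slots need no separate treatment since the telescoping simply discards the partial-slot item of the bundle being lower-bounded, and in the $\frac{1}{k}$ case you should add that the $k$ colour classes are pairwise distinct allocations (no item straddles two slots, so each strictly fractional item goes to $k$ different agents across them), which gives support of size exactly $k$ rather than at most $k$.
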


{Note the above implies in particular that agent $i$ is guaranteed at least her ex-ante value $\val_i(\falloc_i)$ up to the value of one strictly fractional good.

In the above presentation we add two additions to this result which were not stated explicitly \cite{BEF-22}, but follow immediately from the proof. The first is that the algorithm is purely ordinal, which follows immediately from the construction. The second is the $k$ bound on the size of the support, which also follows directly from the construction, and is used exclusively in \Cref{sec:two-agents} in the proof of \Cref{thm:two-agents}.}

The following lemma uses the faithful implementation technique to generate distributional allocations which implement the fractional allocations $\dfa{1},\ldots,\dfa{n}$ as defined in the previous section,  with an ex-post guarantee on the value each agent receives.

\begin{lemma}
    \label{lem:gen-ex-post}
    There exists a polynomial time {ordinal} algorithm that receives as input:
    \begin{enumerate}
        \item An ordering $\per\in\PerN$ over  $n$ agents.
        \item A profile of $n$ strict ordinal preferences $(\en{1},\ldots,\en{n})$ that is consistent with {$\valprof\in\Vaddprofs$.} 
        \item A portion matrix $\portmat\in\portmats$, with corresponding partial fractional allocations $\pga{1},\ldots,\pga{n}$. 
        \item The fractional allocations $\dfa{1},\ldots,\dfa{n}$ generated by the algorithm in \Cref{lem:completion} for input $\per,(\en{1},\ldots,\en{n}),\portmat$. 
    \end{enumerate}
    Outputs distributional allocations $\Dnum{1},\ldots,\Dnum{n}$, so that $\Dj$ has support of polynomial size in $n$ and $m$ and induces $\dfa{j}$ for every $j\in[n]$, and additionally, for every allocation $\alloc\in \supp{\Dj}$ and agent $i\in\N$:
    \[\val_i(\allocset_i)\ge \max\left(\val_i(\eiat{\rij})\;,\;\portij\cdot \val_i(\fsetij)\right)\]
\end{lemma}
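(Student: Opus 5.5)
The plan is to apply the faithful implementation of \Cref{lem:faithful} separately to each $\dfa{j}$, producing $\Dj$. Then $\Dj$ induces $\dfa{j}$, has polynomial support, and is computed by a polynomial time ordinal algorithm, since $\dfa{j}$ is computed ordinally by \Cref{lem:completion}. What remains is the ex-post bound. Fix $j$, an agent $i$, and $\alloc\in\supp{\Dj}$. Item~4 of \Cref{lem:completion} says that every good $\ix\notin\fsetij$ has $\dfixj\in\{0,1\}$. The faithful implementation must respect these integral entries in every realized allocation, because a marginal of $0$ or $1$ forces the same outcome in every support allocation. Hence $\eiat{\rij}=\pitemij\in\allocset_i$, since $\dfixj\ge\pgixj=1$. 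This gives the first term of the maximum, $\val_i(\allocset_i)\ge\val_i(\eiat{\rij})$.

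For the second term, write $\allocset_i=\{\pitemij\}\cup(\allocset_i\cap\fsetij)$. The only goods outside $\fsetij$ that $i$ can hold are those with $\dfixj=1$, and the argument only needs a lower bound, so I will not try to identify them. The ex-ante value of $i$ from the fractional goods satisfies $\sum_{\ix\in\fsetij}\dfixj\val_i(\ix)\ge\portij\cdot\val_i(\fsetij)$, because $\dfa{j}$ completes $\pga{j}$ and $\pgixj=\portij$ on $\fsetij$. Restricting \Cref{lem:faithful} to the goods of $\fsetij$ then shows that the realized value $\val_i(\allocset_i\cap\fsetij)$ is at least $\portij\val_i(\fsetij)-\val_i(y)$ for some $y$ that is strictly fractional for $i$, and hence $y\in\fsetij$. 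Justifying this restriction needs care: the lemma bounds the gap between realized and ex-ante value over all goods, and the goods outside $\fsetij$ are integral, so they contribute the same amount to both sides and cancel.

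The key ordinal observation is that every $y\in\fsetij$ lies outside $\Yj$. So $y$ was still available when $i$ picked in $\percyc{j}$, and $i$ picked her favorite remaining good, which gives $\val_i(\pitemij)\ge\val_i(y)$ by consistency of $\en{i}$ with $\val_i$. Combining the pieces,
\[\val_i(\allocset_i)\ge\val_i(\pitemij)+\portij\val_i(\fsetij)-\val_i(y)\ge\portij\val_i(\fsetij).\]
Together with the first term this yields the claimed maximum.

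I expect the main obstacle to be stating the faithful implementation guarantee precisely enough for this use. It must give a one-sided per-realization bound, realized value at least ex-ante value minus one strictly fractional item, rather than only a bound on the difference between realizations. It must also guarantee that integral entries are preserved in every support allocation. If \Cref{lem:faithful} only bounds differences between realizations, I would recover the one-sided bound by averaging: some realization attains at least the ex-ante value, so every realization is at least ex-ante minus one fractional item.
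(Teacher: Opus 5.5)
Your proposal is correct and follows essentially the same route as the paper: apply \Cref{lem:faithful} to each $\dfa{j}$, use that entries outside $\fsetij$ are integral so $\pitemij\in\allocset_i$ in every realization, and absorb the one-item loss of the faithful implementation via $\val_i(\pitemij)\ge\val_i(\ix)$ for every $\ix\in\fsetij$, since such goods were still available when $i$ picked. The only difference is cosmetic: the paper splits into cases according to which term attains the maximum, whereas you prove both lower bounds directly. Your averaging argument for the one-sided bound is the same as the paper's remark that the faithful guarantee yields ex-ante value up to one strictly fractional good.
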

\begin{proof}
    The algorithm is straightforward, and simply applies the faithful implementation technique (\Cref{lem:faithful}) on $\dfa{j}$ to obtain (in polynomial time) a distributional allocation $\Dj$ that induces $\dfa{j}$, with the ex-post guarantees written in the lemma. The algorithm then outputs $\Dj$.
    
    It remains to show that this construction achieves the required ex-post guarantees. Fix an agent $i\in\N$, we will prove the claim for agent $i$. For brevity we denote $y=\eiat{\rij}$ and $\gamma=\max\left(\val_i(y)\;,\;\portij\cdot \val_i(\fsetij)\right)$. Let $\alloc\in\supp{\Dj}$, we must prove that $\val_i(\allocset_i)\ge \gamma$.

    In the partial allocation $\pga{j}$, agent $i$ receives her $\rij$-best item $y=\eiat{\rij}=\pitemij$ with fraction $1$, her ``fractional'' items $\ix\in \fsetij$ with fraction $\portij$, and all remaining goods $\ix\in \M\setminus(\fsetij\cup \{y\})$ with fraction $0$. Therefore, her ex-ante value for $\pga{j}$ is $\val_i(\pga{j})=\val_i(y)+\portij\cdot \val_i(\fsetij)$. Therefore, because $\dfa{j}$ completes $\pga{j}$, we have $\val_i(\dfa{j})\ge\val_i(y)+\portij\cdot \val_i(\fsetij)$. We now separate into cases:

    If $\val_i(y)\ge \gamma$, then because $\Dj$ induces $\dfa{j}$ and $\dfa{j}_{i,y}=\pga{j}_{i,y}=1$, every allocation in the support of $\Dj$ must allocate $y$ to agent $i$, meaning $y\in \allocset_i$, and so $\val_i(\allocset_i)\ge \val_i(y)\ge \gamma$ as required.

    Otherwise, assume $\val_i(y)< \gamma$, then $\gamma=\portij\cdot\val_i(\fsetij)$. By \Cref{lem:completion}, a good $\ix\in\M$ can only be allocated to agent $i$ with strict fraction in $\dfa{j}$ ($0<\dfixj<1$)  if $\ix\in\fsetij$. Note that such good must be ordered in $\ei$ later (worse) than $y=\eiat{\rij}$, otherwise agent $i$ would have picked the good $\ix$ over $y$. We now make use of the ex-post guarantees of \Cref{lem:faithful}, that $\allocset_i$ is guaranteed to be worth to agent $i$ at least her ``ex-ante'' value $\val_i(\dfa{j}_i)$ up to one strictly fractional allocated good. Therefore:
    \[\val_i(\allocset_i)\ge \val_i(y)+\portij\cdot \val_i(\fsetij)-\max_{x\in \fsetij}\val_i(\ix)\ge \portij\cdot \val_i(\fsetij)=\gamma\]
    as required.
\end{proof}

{
In order to get \Cref{lem:implement-export} from \Cref{lem:gen-ex-post}, we need the intermediary result \Cref{lem:union-picking-size}. Before showing this result, we first prove the following general result for picking orders, showing that in a subsequence of a picking order, agents will pick a subset of the items picked in the full ordering.
}

\begin{claim}
    \label{claim:subsequence-picking}    
    {Let $\per\in\PerN$ be an ordering of the agents $\omega_1,\ldots,\omega_n$, and let $A=\{a_1,\ldots,a_n\}$ denote the items chosen in the corresponding picking order. Then for any subsequence $\omega_{i_1}, \ldots, \omega_{i_k}$ for indices $1\leq i_1<\cdots<i_k\leq n$, letting $B=\{b_1,\ldots,b_k\}$ denote the set of items chosen by the picking order $\omega_{i_1},\ldots,\omega_{i_k}$, it holds that $B\subseteq A$.}
\end{claim}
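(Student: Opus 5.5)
We will prove that for every $t\in\{1,\ldots,k\}$, the item $b_t$ picked by $\omega_{i_t}$ in the subsequence belongs to $A$. The argument is by strong induction on $t$, together with a comparison against what $\omega_{i_t}$ picks in the full order. The key invariant we aim for is
\[
\{b_1,\ldots,b_{t-1}\}\subseteq\{a_1,\ldots,a_{i_t-1}\}.
\]
In words: before $\omega_{i_t}$'s turn, the set of items removed in the subsequence run is contained in the set removed in the full run up to that agent's turn.

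Suppose the invariant holds. In the full order, agent $\omega_{i_t}$ picks $a_{i_t}$, her favorite item in $\M\setminus\{a_1,\ldots,a_{i_t-1}\}$. In the subsequence run she picks $b_t$, her favorite in $\M\setminus\{b_1,\ldots,b_{t-1}\}$, which is a superset of the first available set. So either $b_t=a_{i_t}$, or $b_t$ is strictly preferred to $a_{i_t}$. In the second case $b_t$ cannot lie in $\M\setminus\{a_1,\ldots,a_{i_t-1}\}$, since that would contradict the choice of $a_{i_t}$. Hence $b_t\in\{a_1,\ldots,a_{i_t-1}\}$. In both cases $b_t\in\{a_1,\ldots,a_{i_t}\}\subseteq A$.

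This same conclusion, $b_t\in\{a_1,\ldots,a_{i_t}\}$, gives the invariant at step $t+1$. We get $\{b_1,\ldots,b_t\}\subseteq\{a_1,\ldots,a_{i_t}\}\subseteq\{a_1,\ldots,a_{i_{t+1}-1}\}$, using $i_t<i_{t+1}$. The base case $t=1$ is trivial because the left-hand side is empty. Strict preferences (the consistent strict orders fixed earlier) make each pick well defined, so there are no ties to handle.

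The only delicate point is choosing the right invariant: we need containment in the prefix $\{a_1,\ldots,a_{i_t}\}$, not just in $A$, so that the induction closes. Once this is set up, the rest is the short exchange argument above.
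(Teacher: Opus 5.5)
Your proof is correct and follows the paper's route. The paper also inducts on the claim that the first $t$ picks of the subsequence lie in the prefix $\{a_1,\ldots,a_{i_t}\}$ of the full picking order, using the same exchange step: any item $\omega_{i_t}$ strictly prefers to $a_{i_t}$ must already have been taken by time $i_t$. Your invariant, stated as containment in $\{a_1,\ldots,a_{i_t-1}\}$ before $\omega_{i_t}$'s turn, is the paper's invariant shifted by one, and it makes explicit both why $a_{i_t}$ is still available and why the induction closes.
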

\begin{proof}
    We prove by induction on index $j$ from $1$ to $k$ that $\{b_1,\ldots,b_j\}\subseteq\{a_1,\ldots,a_{i_j}\}$.
    
    \emph{Base Case:} $j=1$. Agent $\omega_{i_1}$ can pick any item in $\M$, and so her chosen item $b_1$ will be her favorite item. Assume for contradiction that $b_{1}\notin\{a_1,\ldots,a_{i_1}\}$. Then in the original picking order agent $\omega_{i_1}$ would have also picked $b_1$ over her chosen item $a_{i_1}$, a contradiction.

    \emph{Induction Case:} Let $j>1$, assume that $\{b_1,\ldots,b_{j-1}\}\subseteq\{a_1,\ldots,a_{i_{j-1}}\}$. Then the item $a_{i_j}$ must be available for selection by agent $\omega_{i_j}$ in the subsequence. Assume for contradiction that $b_j\notin \{a_1,\ldots,a_{i_{j-1}},a_{i_j}\}$. Then in the original picking order agent $\omega_{i_j}$ would have also been able to pick $b_j$, and she would have also preferred $b_j$ over $a_{i_j}$, a contradiction.
\end{proof}

We remark that \Cref{claim:subsequence-picking} holds more generally, it holds for picking ``sequences'' (and not only picking orders) in which an agent can appear in the sequence multiple times.

The following result, which follows from \Cref{claim:subsequence-picking}, allows us to bound the size of the set of ``zeroed goods''.

\begin{lemma}
    \label{lem:union-picking-size}
    Fix an ordering of the agents $\per\in\PerN$ and strict ordinal preference profile $(\en{1},\ldots,\en{n})$.
    For every $i\in\N$ and $j\in[n]$ it holds that $\left|\zsetij\right|\leq 2n-\rij-1$.
\end{lemma}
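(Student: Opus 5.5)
The approach is to bound $|\zsetij|$ through the intersection $\Yj\cap\Yip$, using \Cref{claim:subsequence-picking}. Since $\pitemij\in\Yj$, we have
\[
|\zsetij| = |\Yj\cup\Yip|-1 = |\Yj|+|\Yip|-|\Yj\cap\Yip|-1 = 2n-1-|\Yj\cap\Yip|.
\]
So it suffices to show $|\Yj\cap\Yip|\ge \rij$.

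Let $p=\percycij$ be agent $i$'s position in the picking order $\percyc{j}$. The agents in positions $1,\ldots,p$ of $\percyc{j}$ are, cyclically in the original order $\per$, a contiguous block of agents that ends with agent $i$. In $\percyc{\ip}$ agent $i$ is last, and the agents preceding her appear in the same cyclic order. Hence this block is a subsequence (in fact a contiguous tail) of the ordering $\percyc{\ip}$.

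Now consider the picking order consisting only of these $p$ agents in this order. Its picks are exactly the first $p$ picks of the order $\percyc{j}$, because a prefix of a picking order is run identically. \Cref{claim:subsequence-picking} then says these $p$ items, which include $\pitemij$, all lie in $\Yip$. They also lie in $\Yj$ by definition, so $|\Yj\cap\Yip|\ge p$.

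It remains to show $p\ge\rij$. When agent $i$ picks in position $p$, only $p-1$ items have been removed. Every item she ranks above $\pitemij$ must already be removed, otherwise she would have picked it. Hence $\rij-1\le p-1$, so $|\Yj\cap\Yip|\ge p\ge \rij$. Plugging this into the identity above gives $|\zsetij|\le 2n-\rij-1$.

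The only delicate step is checking that the first $p$ agents of $\percyc{j}$ really form a subsequence of $\percyc{\ip}$, including the case $j=\ip$ (where $p=n$ and the two orders coincide). This should follow directly from \Cref{def:cyclic-shifts}, since both orders are cyclic shifts of $\per$ and $i$ is last in $\percyc{\ip}$.
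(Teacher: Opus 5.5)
Your proof is correct and follows essentially the same route as the paper's: you use the identity $|\zsetij| = 2n-1-|\Yj\cap\Yip|$, observe that the first $\percycij$ agents of $\percyc{j}$ are exactly the tail of $\percyc{\ip}$ so that \Cref{claim:subsequence-picking} gives $|\Yj\cap\Yip|\ge\percycij$, and finish with $\rij\le\percycij$. The one addition is that you justify $\rij\le\percycij$ explicitly (only $\percycij-1$ items are removed before agent $i$ picks), which the paper states without argument.
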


\begin{proof}
    Let $a_1,\ldots,a_n$ denote the agents ordered by $\per$. Then for every $j\in[n]$, in $\percyc{j}$ agents are ordered $a_{j},\ldots,a_n,a_1,\ldots,a_{j_{-1}}$.  By definition, $\zsetij=\left(\Yj\cup \Yip\right)\setminus \left\{\eiat{\rij}\right\}$, where $\Yj$ and $\Yip$ are the sets of items selected by the agents in the cyclic picking orders $\percyc{j}$ and $\percyc{\ip}$ respectively. Agent $i$ picks in position $\percycij$ in the ordering $\percyc{j}$, so she receives good $\eiat{\rij}$ of rank at most $\rij\le \percycij$.  Consider the sequence starting at index $\ip$: the first $\percyc{\ip}_{a_j}-1$ agents will be $a_{\ip},\ldots,a_{j_{-1}}$, followed by the $\percycij$ agents $a_j,\ldots,a_{\per_i}=i$. Therefore, the first $\percycij$ agents in the ordering $\percyc{j}$ are a subsequence (more precisely a suffix) of the full ordering of agents starting at $\ip$. Therefore, by \Cref{claim:subsequence-picking} the first $\percycij$ items selected to $\Yj$ are also within $\Yip$. Thus $\left|\Yj\cap \Yip\right|\geq \percycij$, and so:    
    \[\left|\zsetij\right|=\left|\left(\Yj\cup \Yip\right)\setminus \{\eiat{\rij}\}\right|=\left|\Yj\cup\Yip\right|-1\]
    \[=\left|\Yj\right|+ \left|\Yip\right|-\left|\Yj\cap \Yip\right|-1\leq 2 n-\percycij-1\le 2n-\rij-1
    \]
    as required. A visualization of the proof is shown in \Cref{fig:union-picking-size}. 
\end{proof}

\begin{figure}[htbp]
    \centering

        \begin{tikzpicture}[
        scale=1, transform shape,
        font=\normalsize,
        block/.style={
            draw=black!50,
            semithick,
            rectangle,
            minimum height=1cm,
            align=center,
            rounded corners=2pt
        },
        disjoint/.style={
            block,
            fill=gray!5
        },
        common/.style={
            block,
            fill=blue!5,
            draw=blue!40!black, 
            thick
        },
        brace/.style={
            draw=black!70,
            thick,
            decorate,
            decoration={brace, amplitude=6pt, raise=3pt}
        },
        brace mirror/.style={
            draw=black!70,
            thick,
            decorate,
            decoration={brace, amplitude=6pt, mirror, raise=3pt}
        }
    ]

        \def\blockAwidth{3cm} 
        \def\blockBwidth{3cm} 
        \def\rowsep{1.3cm}

        \node[common, minimum width=\blockAwidth] (r1a) at (0, 0) {$a_{\ip} \; , \; \dots \; , \; a_{j_{-1}}$};
        \node[common, minimum width=\blockBwidth, anchor=west] (r1b) at (r1a.east) {$a_j \; , \; \dots \; , \; a_{\per(i)}=i$};
        
        \node[common, minimum width=\blockBwidth, anchor=north west] (r2a) at ([yshift=-\rowsep]r1b.south west) {$a_{j} \; , \; \dots \; , \; a_{\per(i)}=i$};
        \node[disjoint, minimum width=\blockAwidth, anchor=west] (r2b) at (r2a.east) {$a_{\ip} \; , \; \dots \; , \; a_{j_{-1}}$};

        \draw[densely dotted, thick, blue!50!black] (r1a.south west) -- (r2a.north west);
        \draw[densely dotted, thick, blue!50!black] (r1b.south east) -- (r2a.north east);

        \coordinate (LabelX) at ([xshift=-0.3cm]r1a.west);
        \node[anchor=east, font=\bfseries] at (LabelX |- r1a.west) {$\Yitems{\ip}$};
        \node[anchor=east, font=\bfseries] at ([xshift=-0.3cm]r2a.west) {$\Yitems{j}$};

        \draw[brace] (r1a.north west) -- (r1a.north east)
            node[midway, above=10pt] {$n - \percycij$ agents};

        \draw[brace] (r1b.north west) -- (r1b.north east)
            node[midway, above=10pt] {$\percycij$ agents};

        \draw[brace mirror] (r2a.south west) -- (r2a.south east)
            node[midway, below=10pt] {$\percycij$ agents};

        \draw[brace mirror] (r2b.south west) -- (r2b.south east)
            node[midway, below=10pt] {$n - \percycij$ agents};

    \end{tikzpicture}
        
    \caption{Alignment of the cyclic picking orders. The first $\percycij$ agents in the ordering $\percyc{j}$ perfectly match the last $\percycij$ agents in the ordering $\percyc{\ip}$. As indicated by the dotted lines mapping the boundaries, by \Cref{claim:subsequence-picking} the items picked by this common subsequence in $\Yj$ are a subset of the items picked by the entire picking order $\Yip$, thus proving that $\left|\Yj\cap \Yip\right|\geq \percycij$.}
    \label{fig:union-picking-size}
\end{figure}
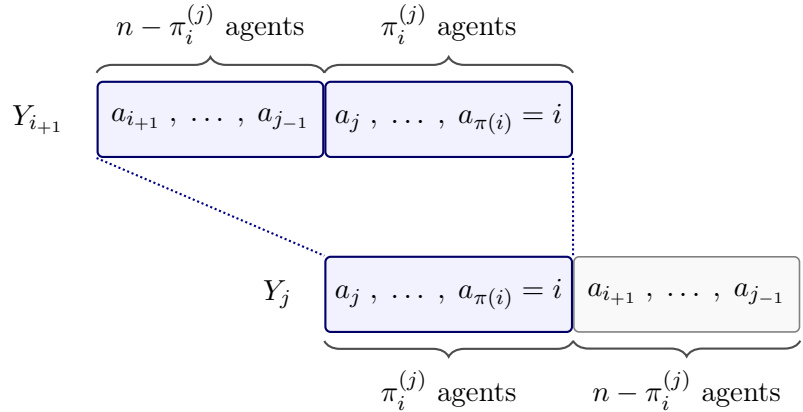

Additionally, we use the following support-size reduction, based on standard techniques. For completeness, we sketch its proof.

\begin{lemma}
    \label{lem:support-reduction}
    There exists a polynomial-time algorithm that for an $n$ agents, $m$ goods setting, given a distributional allocation $\D=\{(\pk,\allock)\}_{k\in[K]}$ with induced fractional allocation $\fracD$, computes a new distributional allocation $\Dprime$ which implements $\fracD$ on a support of size at most $n\cdot m$, satisfying $\supp{\Dprime}\subseteq \supp{\D}$.
\end{lemma}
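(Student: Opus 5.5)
We will use the standard Carathéodory / basic-solution argument. Identify each allocation $\allock$ with its $0/1$ incidence vector $\chi^{(k)}\in\{0,1\}^{n\times m}$, where $\chi^{(k)}_{i,\ix}=1[\ix\in\allocki]$. Then $\D$ induces $\fracD$ exactly when $\sum_k \pk\chi^{(k)}=\fracD$, $\sum_k\pk=1$ and $\pk\ge 0$.

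The first step is to remove redundant constraints. Every incidence vector satisfies $\sum_i \chi^{(k)}_{i,\ix}=1$ for each $\ix$. Hence, for each item, the coordinate of agent $n$ is determined by the other $n-1$ coordinates together with $\sum_k\pk=1$. So the feasible set of weight vectors $p\in\Real^K$ is the polytope
\[
P=\Bigl\{p\ge 0:\ \textstyle\sum_k p_k\chi^{(k)}_{i,\ix}=\fracDix\ \ \forall i\le n-1,\ \ix\in\M,\ \ \sum_k p_k=1\Bigr\}.
\]
It is cut out by at most $(n-1)m+1\le nm$ equality constraints (using $n\ge 2$, $m\ge 1$). The current weight vector lies in $P$, so $P$ is nonempty and bounded.

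The second step is the reduction itself. Any vertex (basic feasible solution) of $P$ has at most as many nonzero entries as there are linearly independent equality constraints, so it has support of size at most $(n-1)m+1\le nm$. Such a vertex can be reached in polynomial time from the given $p$ by the usual pivoting procedure. While the columns $\{(\chi^{(k)},1)\}_{k:\pk>0}$ are linearly dependent, take a nonzero $\lambda$ in their kernel, supported on the current support. Because the last coordinate of each column is $1$, we get $\sum_k\lambda_k=0$, so $\lambda$ has a negative entry. Move $p\leftarrow p+t\lambda$ with the largest $t$ keeping $p\ge0$. This preserves every equality, and therefore the induced fractional allocation, and it zeroes at least one coordinate. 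After at most $K$ iterations, each a Gaussian elimination on a matrix of size $(nm+1)\times K$, the remaining columns are independent, so their number is at most the rank, which is at most $(n-1)m+1$. Since we only ever delete allocations, $\supp{\Dprime}\subseteq\supp{\D}$.

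Nothing here is truly hard. The only points needing care are the running time and the support bound. The running time is polynomial in the input size, which includes $K$; this is acceptable because the input is an explicit distribution. The support bound requires noting the $m$ redundant per-item constraints, and the simple bound of $nm+1$ would not suffice without them.
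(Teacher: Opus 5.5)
Your proposal is correct and matches the paper's argument: an LP over the weights of the existing allocations, with one agent's per-item constraints dropped as redundant, so that a basic feasible solution has at most $(n-1)m+1\le nm$ nonzero entries. The only difference is that you spell out the Carath\'eodory-style elimination that reaches such a vertex from the given weights, whereas the paper simply cites the polynomial-time computability of basic feasible solutions.
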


\begin{proof}
Set up a linear program (LP) with variables $q^{(k)}$ that are intended to replace the $\pk$ in the distributional allocation $\D=\{(\pk,\allock)\}_{k\in[K]}$, without changing the induced fractional allocation $\fracD$. As such, the $q^{(k)}$ need to satisfy non-negativity constraints, the constraint $\sum_k q^{(k)}=1$, and at most $n\cdot m$ additional constraints, preserving each $\fracD_{i,x}$. However, given that each $\allock$ is an allocation, the number of independent additional constraints is only $(n-1)m$. (We can drop the constraints for one agent, and they will hold automatically if all other constraints hold). Thus, any basic feasible solution of the LP has at most $(n-1)m + 1 \le n\cdot m$ non-zero variables. Finally, basic feasible solutions of LPs can be computed in polynomial time.
\end{proof}

\subsection{\texorpdfstring{Proof of \Cref{lem:implement-export}}{Proof of~\ref{lem:implement-export}}}
\label{subsec:proving-export}

We are now ready to prove \Cref{lem:implement-export}.

\implementExport*
\begin{proof}
    Apply the algorithm in \Cref{lem:completion}, using the ordering $\per$, the agents' strict ordinal preferences, and the portion matrix $\portmat$, to obtain the partial allocations $\pga{1},\ldots,\pga{n}$ corresponding to $\portmat$, and full fractional allocations $\dfa{1},\ldots,\dfa{n}$ satisfying the properties in the lemma, namely that $\sum_{j=1}^n\dfa{j}=n\cdot \cuqalloc(\valprof,\per)$, and for every $j\in[n]$, it holds that $\dfa{j}$ completes $\pga{j}$.

    We now run the algorithm of \Cref{lem:gen-ex-post} using the ordering $\per$, the agents' strict ordinal preferences, the portion matrix $\portmat$ with corresponding partial allocations $\pga{1},\ldots,\pga{n}$, and the fractional allocations $\dfa{1},\ldots,\dfa{n}$ constructed by \Cref{lem:completion}. The algorithm outputs $n$ distributional allocations $\Dnum{1},\ldots,\Dnum{n}$, satisfying the properties of the lemma.

    We first construct a distributional allocation $\Dprime$ which is the ``mean'' of $\Dnum{1},\ldots,\Dnum{n}$. That is, $\Dprime$ is the distribution which corresponds to first uniformly sampling an index $j\in[n]$, then sampling an ex-post integral allocation from $\Dj$. To compute this distributional allocation explicitly, we simply take each of the $n$ distributions, multiply the probability of each ex-post allocation by $\frac{1}{n}$, and then add together the probabilities for every allocation. Since the support of each is of polynomial size in $n,m$, the combined support is also of polynomial size in $n,m$. Because $\Dj$ induces $\dfa{j}$ for every $j\in[n]$, the fractional allocation induced by $\Dprime$ is $\frac{1}{n} \cdot \sum_{j=1}^n \dfa{j}=\cuqalloc(\valprof,\per)$.

    Finally, using \Cref{lem:support-reduction}, we compute a new distributional allocation $\D$ that also induces $\cuqalloc(\valprof,\per)$, has support of size at most $n\cdot m$, and satisfies $\supp{\D}\subseteq\supp{\Dprime} =\bigcup_{j=1}^n\supp{\Dj}$. The algorithm then outputs $\D$ {and the $n$ distributional allocations $\Dnum{1},\ldots,\Dnum{n}$}. 
    
    All that remains to be shown is the ex-post value guarantee holds for these $\Dnum{1},\ldots,\Dnum{n}$. Fix $j\in[n]$ and let $\alloc\in\supp{\Dj}$. Fix an agent $i\in \N$. The construction of \Cref{lem:gen-ex-post} guarantees that:
    \[\val_i(\allocset_i)\ge \max\left(\val_i(\eiat{\rij})\;,\;\portij\cdot \val_i(\fsetij)\right)\]

    Letting $r=\rij\le \percycij$, we use \Cref{lem:union-picking-size} to get that $|\zsetij|\le 2n-r-1$. Then $|\fsetij|=m-|\zsetij|-1\ge m-2n+r$. Then the lowest possible value the set $\fsetij$ could have for agent $i$ is achieved when she receives the $m-2n+r$ worst goods according to $i$'s strict ordering:

    \[\val_i(\fsetij)\ge  \val_i(\M\setminus\{\eiat{1},\ldots,\eiat{2n-r}\})\]
    as required.
\end{proof}

\section{\texorpdfstring{An Ordinal TIE Mechanism Obtaining $\Omega(1/\ln n)\hyp\MMS$}{An Ordinal TIE Mechanism Obtaining Omega(1/ln n)-MMS}}
\label{sec:logn}

In this section we present our first main result (\Cref{thm:logn-apx}), a TIE mechanism which guarantees
$\frac{1}{H_{n-1}+2}> \frac{1}{(\ln n) +3 }$ approximation to the $\MMS$ ex-post. Moreover, the mechanism is ex-ante proportional, and thus the mechanism is also a ``Best-of-Both-Worlds'' mechanism.

Recall that the previous best ex-post $\MMS$ approximation by a TIE mechanism is the $1/n$ guarantee of \cite{BT-24}. Since $H_{n-1} + 2 = \Theta(\ln n)$, our result provides an exponential improvement over the previous bound as the number of agents grows large. Furthermore, this improvement takes effect even for small numbers of agents, as $1/(\harmonic{n-1}+2)$ strictly surpasses the $1/n$ bound for all $n \geq 4$. For the case of $n=2$, we later provide a separate, refined implementation, that achieves an even better fraction of $2/3$ (see \Cref{sec:two-agents}). Consequently, our work strictly improves upon the prior $1/n$ benchmark for all values of $n$, except $n=3$.

\lognapx*
We prove the ex-post guarantee with respect to the $\TPS$ (\Cref{def:TPS}), which is at least as large as the $\MMS$ (\Cref{{lem:share-inequality}}).
We use the following basic property of the $\TPS$ to derive \Cref{lem:tps-r-monotone}, which is used in the proof of the theorem.

\begin{claim}
    \label{claim:tps-monotonicity}
    For any $n,\M$, additive valuation $\val\in\Vadd$, and subset $S\subseteq \M$, for any good $\ix\in S$ it holds that $\TPS_n(\val, S)\leq \TPS_{n-1}(\val, S\setminus\{\ix\})$.
\end{claim}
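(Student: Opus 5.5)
We will prove \Cref{claim:tps-monotonicity} by induction on $n$, unfolding the recursive definition of the $\TPS$ (\Cref{def:TPS}) on both sides. Write $b=\best_v(S)$, and set $S'=S\setminus\{\ix\}$. We compare
\[\TPS_n(\val,S)=\min\left(\tfrac{\val(S)}{n},\TPS_{n-1}(\val,S\setminus\{b\})\right)\]
with
\[\TPS_{n-1}(\val,S')=\min\left(\tfrac{\val(S')}{n-1},\TPS_{n-2}(\val,S'\setminus\{b'\})\right),\qquad b'=\best_v(S').\]
It suffices to show that $\TPS_n(\val,S)$ is at most each of the two terms on the right.

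\emph{First term.} By the definition, $\TPS_n(\val,S)\le \TPS_{n-1}(\val,S\setminus\{b\})\le \frac{\val(S)-\val(b)}{n-1}$. Since $\val(b)\ge \val(\ix)$, this is at most $\frac{\val(S)-\val(\ix)}{n-1}=\frac{\val(S')}{n-1}$.

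\emph{Second term.} We need $\TPS_n(\val,S)\le \TPS_{n-2}(\val,S'\setminus\{b'\})$, and we split into two cases.
\begin{itemize}
\item If $\ix=b$, then $S'=S\setminus\{b\}$. Hence $\TPS_{n-1}(\val,S')$ equals the second argument of the min defining $\TPS_n(\val,S)$, and the claim is immediate.
\item If $\ix\ne b$, then $b\in S'$ and $b$ is still a highest-value item there, so take $b'=b$ (tie-breaking consistently). Now $S'\setminus\{b\}=(S\setminus\{b\})\setminus\{\ix\}$ with $\ix\in S\setminus\{b\}$. The induction hypothesis, applied with $n-1$ and the set $S\setminus\{b\}$, gives $\TPS_{n-1}(\val,S\setminus\{b\})\le \TPS_{n-2}(\val,S'\setminus\{b\})$. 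Chaining with $\TPS_n(\val,S)\le\TPS_{n-1}(\val,S\setminus\{b\})$ finishes this case.
\end{itemize}

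\emph{Base case.} For $n=2$ we must show $\TPS_2(\val,S)\le \TPS_1(\val,S')=\val(S')$. This follows from the first-term argument: $\TPS_2(\val,S)\le \val(S)-\val(b)\le \val(S)-\val(\ix)$.

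\emph{Main obstacle.} The main point needing care is bookkeeping rather than mathematics. First, the tie-breaking in $\best_v$ must be consistent so that $b'=b$ when $\ix\ne b$. Second, the degenerate cases ($|S|<n$, where shares may be $0$, and small $n$) must be covered by the same recursion. These cases go through because the recursion is well defined down to $\TPS_1$, and zero values only make the desired inequalities easier.
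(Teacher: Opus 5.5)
Your proof is correct. The paper gives no argument beyond saying the claim ``follows directly from \Cref{def:TPS}''. That is literally immediate only when $\ix=\best_\val(S)$, which is your first case, where $\TPS_{n-1}(\val,S\setminus\{\ix\})$ is one of the two terms in the minimum defining $\TPS_n(\val,S)$.

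For a general $\ix\in S$, one also needs the fact that removing a weaker good instead of the best one cannot lower the share. Your induction on $n$ supplies exactly this. The chain $\TPS_n(\val,S)\le\TPS_{n-1}(\val,S\setminus\{b\})\le\frac{\val(S)-\val(b)}{n-1}\le\frac{\val(S\setminus\{\ix\})}{n-1}$ handles the proportional term. The observation that $b$ remains a best good of $S\setminus\{\ix\}$ lets the hypothesis at $n-1$, applied to $S\setminus\{b\}$, handle the recursive term.

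This general case is genuinely needed: \Cref{lem:tps-r-monotone} applies the claim repeatedly to an arbitrary removed set, not to the top goods. So your argument is a rigorous completion of the paper's one-line justification rather than a different route.

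Tie-breaking in $\best_\val$ is harmless in any event. By induction, the value of $\TPS$ depends only on the multiset of item values, so which best good is removed does not matter.
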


\Cref{claim:tps-monotonicity} follows directly from \Cref{def:TPS}, and implies the following useful property.

\begin{lemma}
    \label{lem:tps-r-monotone}
    For any $n,\M$, and an additive valuation $\val\in\Vadd$, the remaining value to $\val$ after removing any $r$ goods is at least $n-r$ times her $\TPS$. That is, for any $S\subseteq \M$ of size $|S|=r\le n$, we have $\val(\M\setminus S)\ge (n-r)\cdot \TPS_n(\val,\M)$.
\end{lemma}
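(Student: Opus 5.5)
We prove \Cref{lem:tps-r-monotone} by induction on $r$, peeling off one removed good at a time and using \Cref{claim:tps-monotonicity} to pass from $n$ agents to $n-1$ agents. The base case $r=0$ is immediate: $\val(\M)\ge n\cdot\PROP_n(\val,\M)\ge n\cdot \TPS_n(\val,\M)$, since by \Cref{def:TPS} (or \Cref{lem:share-inequality}) the $\TPS$ never exceeds the proportional share. The case $r=n$ is trivial, as the right-hand side is $0$ and values are nonnegative.

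For the inductive step, we prove the statement for all ground sets and all numbers of agents simultaneously. Fix $1\le r\le n$ and $S\subseteq\M$ with $|S|=r$, and pick any $x\in S$. Let $\M'=\M\setminus\{x\}$ and $S'=S\setminus\{x\}$, so that $|S'|=r-1\le n-1$ and $\M\setminus S=\M'\setminus S'$. The induction hypothesis, applied to ground set $\M'$ with $n-1$ agents, gives
\[
\val(\M\setminus S)=\val(\M'\setminus S')\ge (n-1-(r-1))\cdot\TPS_{n-1}(\val,\M')=(n-r)\cdot\TPS_{n-1}(\val,\M\setminus\{x\}).
\]
By \Cref{claim:tps-monotonicity} with $S=\M$, we have $\TPS_{n-1}(\val,\M\setminus\{x\})\ge \TPS_n(\val,\M)$. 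Since $n-r\ge0$, this yields $\val(\M\setminus S)\ge (n-r)\cdot\TPS_n(\val,\M)$, as required.

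The only delicate point is \Cref{claim:tps-monotonicity} itself, which is stated for an arbitrary removed good, whereas \Cref{def:TPS} recurses only on removal of the best good. If it must be justified, we argue by induction on $n$. Removing an arbitrary good $x$ instead of the best good $b$ can only increase the value of the remaining set, so $\val(S\setminus\{x\})\ge \val(S\setminus\{b\})$. We then compare the two recursive minima term by term. The proportional term satisfies $\frac{\val(S\setminus\{x\})}{n-1}\ge\frac{\val(S\setminus\{b\})}{n-1}\ge\TPS_n$. The recursive terms are compared using the inductive hypothesis on the reduced instances, after noting that the best good of $S\setminus\{x\}$ is $b$ unless $x=b$. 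Since the paper states that the claim follows directly from the definition, we take it as given and expect the induction above to be routine.
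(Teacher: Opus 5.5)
Your proof is correct and is the paper's argument unrolled as an induction on $r$. The paper writes it as a single chain, $\val(\M\setminus S)=(n-r)\cdot\PROP_{n-r}(\val,\M\setminus S)\ge (n-r)\cdot\TPS_{n-r}(\val,\M\setminus S)$, followed by $r$ applications of \Cref{claim:tps-monotonicity}. Your separate handling of $r=n$ (where $\PROP_0$ and $\TPS_0$ would be undefined) is a bit more careful than the paper, and your sketch of why the claim holds for an arbitrary removed good is sound.
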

\begin{proof}
Recall that by \Cref{lem:share-inequality} the $\TPS$ is at most the proportional share, therefore
\[\val(\M\setminus S) = (n-r)\cdot \PROP_{n-r}(\val, \M\setminus S) \geq (n-r)\cdot \TPS_{n-r}(\val, \M\setminus S)\]
Now, by applying \Cref{claim:tps-monotonicity} $r$ times we get $\TPS_{n-r}(\val, \M\setminus S)\geq \TPS_n(\val,\M)$, proving that  $ \val(\M\setminus S) \ge (n-r)\cdot \TPS_n(\val,\M)$, as required.
\end{proof}

We now prove \Cref{thm:logn-apx}. 

\begin{proof} [Proof of \Cref{thm:logn-apx}]
    Fix an arbitrary ordering $\per\in\PerN$. We design a mechanism which implements the cyclic-$\per$-unit-quota fractional mechanism. Let $\portmat\in\portmats$ be the portion matrix defined $\portij=\frac{1}{(n-\percycij+1)\cdot \harmonic{n}}$ for every $i\in\N,j\in[n]$. We note that this construction is valid, that is, columns and rows sum to at most $1$, as for every fixed $i\in\N$:
    \[\sum_{j=1}^n\portij
    =\frac{1}{\harmonic{n}}\cdot\sum_{j=1}^n\frac{1}{n-\percycij+1}
    =\frac{1}{\harmonic{n}}\cdot\sum_{k=1}^n\frac{1}{k}=1\]
    similarly, it holds that $\sum_{i=1}^n\portij =1$ for every $j\in [n]$.

    The statement of the theorem presents it as a direct-revelation mechanism, that is, it receives as input a valuation profile $\valprof\in \Vaddprofs$. Yet, the mechanism itself is purely ordinal, so its input can be any profile of $n$ strict ordinal preferences $(\en{1},\ldots,\en{n})$ that is consistent with $\valprof$. A direct-revelation mechanism can generate such strict ordering by breaking ties in each valuation lexicographically.

    We now apply the algorithm in \Cref{lem:implement-export} on $\per$, the strict ordinal preferences, and $\portmat$, to obtain a distributional allocation $\D$ (and $n$ additional distributional allocations $\Dnum{1} ,\ldots, \Dnum{n}$). The mechanism simply outputs $\D$.

    First, note all steps above are done in polynomial time. \Cref{lem:implement-export} implies that the generated $\D$ has support of size at most $n\cdot m$, and that $\D$ induces $\cuqalloc(\valprof,\per)$. Thus overall our mechanism induces the cyclic-$\per$-unit-quota fractional mechanism, and so by \Cref{lem:cuq-tie} is TIE and proportional ex-ante. It remains to show the ex-post fairness guarantee. Fix $\apx=\frac{1}{\harmonic{n-1}+2}$.

    Let $\alloc\in\supp{\D}$, then since $\supp{\D}\subseteq\bigcup_{j=1}^n\supp{\Dj}$ there must exist $j\in[n]$ such that $\alloc\in \supp{\Dj}$. Fixing an agent $i\in \N$, \Cref{lem:implement-export} guarantees that there exists $r\le \percycij$ such that:
    \[\val_i(\allocset_i)\ge \max\left(\val_i(\eiat{r})\;,\;\portij\cdot \val_i(\M\setminus\{\eiat{1},\ldots,\eiat{2n-r}\})\right)\]

    If $\val_i(\eiat{r})\ge\apx\cdot\TPS_n(\val_i)$ then we are done. Otherwise, we get $\val_i(\eiat{k})<\apx\cdot\TPS_n(\val_i)$ for every $k\ge r$. Using this in conjunction with \Cref{lem:tps-r-monotone}:
    \[\val_i(\M\setminus\{\eiat{1},\ldots,\eiat{2n-r}\})\ge \val_i(\M\setminus\{\eiat{1},\ldots,\eiat{r-1}\})-\sum_{k=r}^{2n-r}\val_i(e_k)\]
    \[\ge (n-r+1)\cdot\TPS_n(\val_i)-(2n-2r+1)\cdot \apx\cdot \TPS_n(\val_i)\]

    Therefore:
    \[\frac{\val_i(\allocset_i)}{\TPS_n(\val_i)}\ge \portij\cdot \frac{\val_i(\M\setminus\{\eiat{1},\ldots,\eiat{2n-r}\})}{\TPS_n(\val_i)}\ge \frac{(n-r+1)-(2n-2r+1)\cdot \apx}{(n-\percycij+1)\cdot\harmonic{n}}\]

    Using $1\leq r\le \percycij$ and $\apx=\frac{1}{\harmonic{n-1}+2}=\frac{1}{\harmonic{n}-\frac{1}{n}+2}$:

    \[\frac{\val_i(\allocset_i)}{\TPS_n(\val_i)}\ge \frac{1-(2-\frac{1}{n-r+1})\cdot \apx}{\harmonic{n}}\ge\frac{1-(2-\frac{1}{n})\cdot \frac{1}{\harmonic{n}-\frac{1}{n}+2}}{\harmonic{n}} =\frac{1}{\harmonic{n}-\frac{1}{n}+2}=\apx\]
    as required.
\end{proof}

\section{\texorpdfstring{A $(1-\varepsilon(n))$-TIE Mechanism Obtaining $\Omega(1/\log\log n)\hyp\MMS$}{A (1-eps(n))-TIE Mechanism Obtaining Omega(1/log log n)-MMS}}
\label{sec:loglogn}

In this section, we present our second main result, which is a mechanism that uses a small amount of cardinal information on top of the ordinal information, and achieves an exponentially better $\MMS$ approximation (with a slight relaxation of incentives). 
Throughout the section we use $\log$ to denote the base 2 logarithm $\log_2$. Specifically, we present a $(1-\varepsilon(n))$-TIE mechanism which guarantees $\Omega(1/\log\log n)$-$\MMS$ ex-post, where $\varepsilon(n)$ is a negligible function in the number of agents $n$. 

\loglognapx*

We now provide a brief overview of our approach for this proof. In \Cref{subsec:deficiencies}, we introduce the notion of an agent's ``$\defapx$-deficiency'', defined as the number of items among her top $n$ choices that do not satisfy the desired $\defapx$ fraction of her $\TPS$. On top of the ordinal information, our mechanism makes use of only this minimal cardinal information: just a single integer between $0$ and $n$ for each agent. (We remark that an agent can instead report this deficiency rounded up to the nearest power of 2, requiring only $O(\log \log n)$ bits, at the cost of a small constant factor in the fairness guarantee). We also assign a ``weight'' to an ordering, which bounds the fractional ``demand'' agents place on the suffix of the allocation based on their positions and reported deficiencies. We use these definitions to prove \Cref{lem:general-weight}, which provides a generic method to implement the cyclic-unit-quota fractional mechanism, yielding an ex-post approximation that depends on the total weight of the ordering of the agents.

In \Cref{subsec:public-defi} we consider an idealized setting where these deficiencies are public knowledge. In \Cref{prop:constant-with-public-defi}, we show that if the mechanism knows the true deficiencies in advance (with no other cardinal information), we can specifically order the agents by non-decreasing deficiency. For this fixed ordering, applying our generic method yields a distributional mechanism that is TIE (with respect to the private ordinal preferences) and guarantees every agent at least a $\frac{1}{4}$-$\MMS$ approximation ex-post.

Building on this insight, in \Cref{subsec:epsilon-tie} we address the general setting where the mechanism has no prior information about the valuations, and the deficiencies are private. We utilize the statistical properties of random orderings: it is highly unlikely that agents will be positioned unfavorably (e.g., agents with high deficiencies appearing late in the order) often enough for the weight to be too large (\Cref{lem:random-ordering}). Our mechanism functions by sampling a random ordering. With \emph{overwhelming probability}, this ordering is ``good'', meaning each of its $n$ cyclic shifts has low weight. By bucketing the reported deficiencies into powers of $2$, we can bound this weight to $O(\log \log n)$. This allows us to implement the cyclic-unit-quota mechanism with a target approximation of $\Omega(1/\log \log n)$-$\MMS$.

In the rare event (occurring with negligible probability) that the ordering is ``bad'', we fall back to a non-truthful Best-of-Both-Worlds polynomial time algorithm, which guarantees ex-ante proportionality and a good enough ex-post MMS approximation. We specifically choose \Cref{prop:constant-with-public-defi} as our ``fallback'' mechanism, as the mechanism has the additional advantage of being purely ordinal once the deficiencies are known. Although this fallback step is not exactly truthful (agents might manipulate their reported deficiency to trigger it), it is triggered so rarely that the expected gain from lying is strictly bounded. This dilution of incentive compatibility is negligible, resulting in our overall $(1-\varepsilon(n))$-TIE guarantee.

\subsection{Incorporating Cardinal Information through Deficiencies}
\label{subsec:deficiencies}

We now present two central definitions that we  use to achieve the improved approximation presented in \Cref{thm:loglog-apx}. For $\defapx>0$, the \emph{deficiency} of an agent with an additive valuation is the number of items out of her top $n$ highest value items which have value less than $\alpha$ fraction of her $\TPS$. Notably, this is our only use of cardinal information throughout this entire section. Besides the deficiencies, the rest of the logic relies purely on the agents' strict ordinal preferences.

The use of the $\TPS$ in the definition (instead of the $\MMS$), is mainly for computational purposes, as computing the $\MMS$ is NP-hard (while the $\TPS$  is poly-time computable), and we want the mechanism (or the agents themselves) to be able to compute the deficiencies in polynomial time. Since the $\TPS$ is always at least as large as the $\MMS$ (see \Cref{lem:share-inequality}), any ex-post fairness guarantees we show for the $\TPS$ hold for the $\MMS$ as well.

\begin{definition}
    Given factor $\defapx>0$, item $\ix\in\M$ is \emph{$\defapx$-satisfactory} to agent $i\in\N$ with {additive} valuation $\val_i\in{\Vadd}$ if $\val_i(\ix)\geq \defapx\cdot \TPS_n(\val_i)$.
    
    The \emph{$\defapx$-deficiency} of agent $i\in\N$ with valuation $\val_i\in\Vadd$ is $\defi_i\in [n]\cup\{0\}$ that is $n$ minus the number of $\defapx$-satisfactory items for $i$ (but not less than 0). That is, letting $ S$ denote the set of $\defapx$-satisfactory items to agent $i$ we have $\defi_i=\max(n-| S|, 0)$. 
\end{definition}

Consider the picking order corresponding to some ordering  $\per\in\PerN$ of the agents. We can use the agents' deficiencies to more accurately define the portion of remaining goods agent $i$ ``demands'' (in order to receive her full $\TPS$) based on her position and deficiency, for her to get at least $\defapx\cdot \TPS_n(\val_i)$. If $\per_i \le n - \defi_i$ then agent $i$ is guaranteed to get an $\defapx$-satisfactory item and so she requires $0$. Otherwise, agent $i$ may not get a $\defapx$-satisfactory item, but because she has $n-\defi_i$ such items, the set of all non-$\defapx$-satisfactory items is worth at least $\defi_i$ times her $\TPS$ (and $\MMS$) by \Cref{lem:tps-r-monotone}. We formalize this intuition in the following definition, which assigns to each agent a \emph{weight} for every ordering $\per\in\PerN$.

\begin{definition}
    For a given approximation factor $\defapx>0$ and ordering of the agents $\per\in\PerN$, the \emph{demand}     $\walpha_i(\per)$ imposed on $\per$ by agent $i$ with $\defapx$-deficiency $\defi_i$, is $0$ if her position $\pi(i)$ is less than or equal to $n-\defi_i$, and $\frac{1}{\defi_i}$ otherwise. That is:
    \[\walpha_i(\per)=\begin{cases}
    0 & \per_i \le n - \defi_i \\
    \frac{1}{\defi_i} & \per_i > n - \defi_i 
    \end{cases}\]
    The weight {$\walpha(\per)$} of an ordering $\per$, is {the total demand over all agents:} $\walpha(\per) = \sum_{i\in\N} \walpha_i(\per)$.
\end{definition}

Note that when $\defi_i=0$, we have $\pi(i)\leq n-\defi_i$ for all $\per\in\PerN$, so $\walpha_i(\per)=0$ and $\walpha_i(\per)$ is always well defined, as no division by zero occurs.

In the worst case, the weight of an ordering may grow up to $\harmonic{n}$. For every position $k\in[n]$, the agent $i$ who picks at position $\pi(i)=k$ can {impose a demand} of at most $\frac{1}{n-k+1}$. Therefore for any ordering $\per$, the total weight is bounded by $\walpha(\per)\le \sum_{k=1}^n \frac{1}{n-k+1}=\harmonic{n}$, and this bound can be tight. Yet, maybe surprisingly, regardless of the $\alpha$-deficiencies, the expected weight of a random ordering is~1. This is by linearity of expectations, using the fact that for every agent $i$, the expectation of {$\walpha_i(\per)$} over a random choice of $\per$ is $\frac{\defi_i}{n} \cdot \frac{1}{\defi_i} = \frac{1}{n}$. This motivates a randomized approach. While our $\Omega({1}/{\harmonic{n}})$ approximation from \Cref{thm:logn-apx} automatically assumed the worst-case, by incorporating agents' deficiencies into the construction of the distributional allocation we can make use of the fact that the weight is low in the majority of cases. As we will show in \Cref{lem:random-ordering}, in the overwhelming majority of orderings the weight can be bound much more tightly at $O(\log\log n)$.

Because we implement the cyclic-{$\per$}-unit-quota fractional mechanism, we are interested in the maximum weight of all the $n$ cyclic shifts (see \Cref{def:cyclic-shifts}) of the ordering $\per$.

\begin{definition}
    Fix an additive valuation profile $\valprof\in\Vaddprofs$. For factor $\defapx>0$, the \emph{cyclic weight} $\Wcycapx(\per)$ of ordering $\per\in\PerN$ is the maximum weight of its $n$ cyclic shifts (capped below by $1$), that is $\Wcycapx(\per)=\max(\max_{j\in[n]}(\walpha(\percyc{j}), 1))$. 
\end{definition}

The outer $\max$ with $1$ is used to guarantee that $\Wcycapx$ is never below $1$, as this interferes with the proof of \Cref{lem:general-weight}. Note that a higher weight is worse, so increasing the weight only makes showing results harder, and is therefore without loss.

We now prove the following central lemma, which is used in both the proofs of \Cref{prop:constant-with-public-defi} and \Cref{thm:loglog-apx}. The lemma introduces a ``black-box'' method for designing distributional mechanisms: if for some factor $\defapx>0$ we can guarantee the ordering of the agents will have low enough cyclic weight $\Wcycapx(\per)$ such that $\defapx\le \frac{1}{2+\Wcycapx(\per)-\frac{1}{n}}$ for any input valuation profile, this will lead to a poly-time distributional mechanism which is ex-ante proportional and TIE (by \Cref{lem:cuq-tie}) and $\defapx$-$\TPS$ ex-post. We note that the factor $\defapx>0$ must be fixed ahead of time, and cannot depend on the agents' reported valuations, or in particular, on their reported deficiencies, since the deficiencies are defined using $\defapx$. The proof of the lemma closely follows the proof of \Cref{thm:logn-apx}, using the decomposition technique presented in \Cref{sec:gen-imp}.

\begin{restatable}{lemma}{generalweight}
    \label{lem:general-weight}
    Fix a factor $\defapx>0$. For every additive valuation profile $\valprof\in\Vaddprofs$ and ordering $\per\in\PerN$, if $\defapx\le \frac{1}{2+\Wcycapx(\per)-\frac{1}{n}}$ then the cyclic-$\per$-unit-quota fractional allocation $\cuqalloc(\valprof,\per)$ can be implemented by a distributional allocation which is $\defapx$-$\TPS$. Moreover, the distribution can be computed explicitly in polynomial time and has support of size at most $n\cdot m$.
\end{restatable}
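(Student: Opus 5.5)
We apply \Cref{lem:implement-export} with a portion matrix built from the demands. Write $W=\Wcycapx(\per)$. We define
\[\portij=\frac{\walpha_i(\percyc{j})}{W},\]
except that when $\walpha_i(\percyc{j})=0$ we set $\portij=0$. The first task is to check that $\portmat\in\portmats$. Column $j$ sums to $\walpha(\percyc{j})/W\le 1$ by the definition of the cyclic weight. For row $i$, agent $i$ occupies each position exactly once as $j$ ranges over $[n]$. She has positive demand exactly at the $\defi_i$ positions $k>n-\defi_i$, each contributing $1/\defi_i$, so the row sum is $1/W\le 1$ (using $W\ge 1$; this is where the cap at $1$ is needed). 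Hence \Cref{lem:implement-export} yields in polynomial time a distribution $\D$ inducing $\cuqalloc(\valprof,\per)$, supported on at most $n\cdot m$ allocations, with $\supp{\D}\subseteq\bigcup_j\supp{\Dj}$. It remains to show that every $\alloc\in\supp{\Dj}$ is $\defapx$-$\TPS$ for every agent $i$.

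Fix $i$ and $j$, and put $p=\percycij$. \Cref{lem:implement-export} gives some $r\le p$ with
\[\val_i(\allocset_i)\ge\max\left(\val_i(\eiat{r}),\ \portij\cdot\val_i(\M\setminus\{\eiat{1},\ldots,\eiat{2n-r}\})\right).\]
We split into three cases.
\begin{enumerate}
\item If $p\le n-\defi_i$, then $r\le n-\defi_i$. The top $n-\defi_i$ items of agent $i$ are exactly her $\defapx$-satisfactory ones (up to ties), so $\val_i(\eiat{r})\ge\defapx\cdot\TPS_n(\val_i)$ and we are done.
\item If $p>n-\defi_i$ and $\val_i(\eiat{r})\ge\defapx\cdot\TPS_n(\val_i)$, we are again done.
\item Otherwise $p>n-\defi_i$, so $\portij=\frac{1}{\defi_i W}$, and every item of rank at least $r$ is worth less than $\defapx\cdot\TPS_n(\val_i)$.
\end{enumerate}

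In the third case, the failure of $\eiat{r}$ means the satisfactory items all have rank below $r$, so $n-\defi_i\le r-1$. Mimicking the proof of \Cref{thm:logn-apx}, write $T=\TPS_n(\val_i)$ and use \Cref{lem:tps-r-monotone} to remove only the $n-\defi_i$ satisfactory items. This gives $\val_i$ of the remainder at least $\defi_i T$. Then subtract the remaining items of rank at most $2n-r$, all of which are unsatisfactory; there are $2n-r-(n-\defi_i)=n+\defi_i-r$ of them, each worth less than $\defapx T$. Since $r\ge n-\defi_i+1$, this count is at most $2\defi_i-1$. Therefore
\[\val_i(\allocset_i)\ge\frac{\defi_i-(2\defi_i-1)\defapx}{\defi_i W}\,T=\frac{1-(2-1/\defi_i)\defapx}{W}\,T\ge\frac{1-(2-1/n)\defapx}{W}\,T,\]
using $\defi_i\le n$. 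Requiring the right-hand side to be at least $\defapx T$ is equivalent to $1\ge\defapx(W+2-1/n)$, which is exactly the hypothesis on $\defapx$.

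The main obstacle is the bookkeeping in the third case: we must count only the unsatisfactory items among the top $2n-r$, using the deficiency in place of the crude bound from \Cref{thm:logn-apx}, and verify that this count is at most $2\defi_i-1$. A secondary point is tie-breaking: we need that the satisfactory items occupy the first $n-\defi_i$ ranks of the consistent strict order. This holds because a consistent order places strictly more valuable items first, and satisfactoriness is monotone in value.
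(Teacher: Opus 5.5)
Your proposal is correct and follows essentially the same route as the paper. It uses the same portion matrix $\portij=\walpha_i(\percyc{j})/\Wcycapx(\per)$ with the same row and column checks, applies \Cref{lem:implement-export}, and obtains the same bound by removing the $n-\defi_i$ satisfactory items via \Cref{lem:tps-r-monotone} and charging at most $2\defi_i-1$ unsatisfactory items at $\defapx\cdot\TPS_n(\val_i)$ each; your extra case $\percycij\le n-\defi_i$ is only a reorganization of the paper's two cases, and the one small slip is that the row sum is $0$, not $1/\Wcycapx(\per)$, when $\defi_i=0$ (still at most $1$).
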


\begin{proof}
    The proof follows a similar structure to the proof of \Cref{thm:logn-apx}, but with a different portion matrix. Let $\valprof\in\Vaddprofs$ be an additive valuation profile and let $\per\in\PerN$ be an ordering of the agents. Let $\defi_1,\ldots,\defi_i$ denote the agents' $\defapx$-deficiencies. Let $\portmat\in\portmats$ be the portion matrix defined $\portij=\frac{\walpha_i(\percyc{j})}{\Wcycapx(\per)}$ for every $i\in\N,j\in[n]$. Note that both $\walpha_i,\Wcycapx$ depend on the deficiencies, which are cardinal information. This is the only cardinal information we will use in the construction.

    We show this construction is valid, that is, columns and rows sum to at most $1$. Fixing an index $j\in[n]$, by definition of the cyclic weight $\walpha(\percyc{j})\le\Wcycapx(\per)$. Therefore:
    \[
        \sum_{i\in\N}\portij
        =\frac{\sum_{i\in\N}{\walpha_i(\percyc{j})}}{\Wcycapx(\per)}
        =\frac{\walpha(\percyc{j})}{\Wcycapx(\per)}
        \le \frac{\Wcycapx(\per)}{\Wcycapx(\per)}=1
    \]
    as required. Fixing an agent $i\in\N$, if $\defi_i=0$ then $\sum_{j=1}^n\portij=0\le 1$, as required. Otherwise $\defi_i>1$, then among the $n$ cyclic shifts $\percyc{1},\ldots,\percyc{n}$, agent $i$'s demand is $\frac{1}{\defi_i}$ for exactly $\defi_i$ indices (the ones where $i$ is in the last $\defi_i$ positions), and $0$ for all other indices. Therefore (using $\Wcycapx(\per)\ge1$): 
    \[\sum_{j=1}^n\portij
    =\sum_{j=1}^n\frac{\walpha_i(\percyc{j})}{\Wcycapx(\per)}
    =\defi_i\cdot \frac{1}{\defi_i\cdot \Wcycapx(\per)}=\frac{1}{\Wcycapx(\per)}\le 1\]
    as required.
    
    We now apply the algorithm in \Cref{lem:implement-export} on $\per$, the strict ordinal preferences, and $\portmat$, to obtain a distributional allocation $\D$ (and $n$ additional distributional allocations $\Dnum{1} ,\ldots, \Dnum{n}$). The mechanism simply outputs $\D$.

    First, note all steps above are done in polynomial time. \Cref{lem:implement-export} implies that the generated $\D$ has support of size at most $n\cdot m$, and that $\D$ induces $\cuqalloc(\valprof,\per)$. Thus overall our mechanism induces the cyclic-$\per$-unit-quota fractional mechanism, and so by \Cref{lem:cuq-tie} is TIE and proportional ex-ante. It remains to show the ex-post fairness guarantee.

    Let $\alloc\in\supp{\D}$, then since $\supp{\D}\subseteq\bigcup_{j=1}^n\supp{\Dj}$ there must exist $j\in[n]$ such that $\alloc\in \supp{\Dj}$. Fixing an agent $i\in \N$, \Cref{lem:implement-export} guarantees that there exists $r\le \percycij$ such that:
    \[\val_i(\allocset_i)\ge \max\left(\val_i(\eiat{r})\;,\;\portij\cdot \val_i(\M\setminus\{\eiat{1},\ldots,\eiat{2n-r}\})\right)\]

    If $\val_i(\eiat{r})\ge\defapx\cdot\TPS_n(\val_i)$ then we are done. Otherwise, we get $\val_i(\eiat{r})< \defapx \cdot \TPS_n(\val_i)$. By definition of $\defapx$-deficiency we have $\val_i(\eiat{k})\le \defapx \cdot \TPS_n(\val_i)$ for every $k\ge n-\defi_i+1$. Therefore $\eiat{r}$ is not $\defapx$-sufficient to agent $i$, so $\defi_i\ge n-r+1\ge n-\percycij+1$. Using this in conjunction with \Cref{lem:tps-r-monotone}:
    \[\val_i(\M\setminus\{\eiat{1},\ldots,\eiat{2n-r}\})= \val_i(\M\setminus\{\eiat1,\ldots,\eiat{n-\defi_i}\})-\sum_{k=n-\defi_i+1}^{2n-r}\val_i(e_k)\ge \defi_i\cdot\TPS_n(\val_i)-(n-r+\defi_i)\cdot \defapx\cdot \TPS_n(\val_i)\]

    Because $\defi_i\ge  n-\percycij+1$, we have $\portij=\frac{\walpha_i(\percyc{j})}{\Wcycapx(\per)}=\frac{1}{\defi_i\cdot \Wcycapx(\per)}$. Therefore, using $n-r\le \defi_i-1$:
    \[\frac{\val_i(\allocset_i)}{\TPS_n(\val_i)}
    \ge \portij\cdot \frac{\val_i(\M\setminus\{\eiat{1},\ldots,\eiat{2n-r}\})}{\TPS_n(\val_i)}
    \ge \frac{\defi_i-(2\defi_i-1)\cdot \defapx}{\defi_i\cdot\Wcycapx(\per)}\ge \frac{1-(2-\frac{1}{n})\cdot\defapx}{\Wcycapx(\per)}\]

    Using $\defapx\le \frac{1}{2+\Wcycapx(\per)-\frac{1}{n}}$:

    \[\frac{\val_i(\allocset_i)}{\TPS_n(\val_i)} 
    \ge \frac{1-(2-\frac{1}{n})\cdot\defapx}{\Wcycapx(\per)}
    \ge \frac{1-(2-\frac{1}{n})\cdot\frac{1}{2+\Wcycapx(\per)-\frac{1}{n}}}{\Wcycapx(\per)}
    =\frac{1}{2+\Wcycapx(\per)-\frac{1}{n}}\ge \defapx\]
    as required.
\end{proof}

\subsection{Deficiencies as Public Knowledge: a TIE Mechanism Obtaining 1/4-MMS}
\label{subsec:public-defi}

In this section we will make use of \Cref{lem:general-weight} to design a TIE mechanism obtaining $1/4$-$\MMS$ ex-post. In order to do this, we first show there always exists an ordering $\per\in\PerN$ with cyclic weight bounded by $2$.

\begin{restatable}{lemma}{orderedweight}
    \label{lem:ordered-weight}
    Fix a factor $\defapx>0$. Let $\valprof\in\Vaddprofs$ be an additive valuation profile and let $\per\in\PerN$ denote an ordering of the agents so that the $\defapx$-deficiencies are sorted in non-decreasing order. Then $\walpha(\per)\leq1$, and $\Wcycapx(\per)\leq 2$.
\end{restatable}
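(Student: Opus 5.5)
The first claim, $\walpha(\per)\le 1$, follows from the sortedness of the deficiencies by a threshold argument. Write $a_1,\ldots,a_n$ for the agents in the order $\per$, with deficiencies $d_1\le\cdots\le d_n$. Agent $a_k$ has positive demand exactly when $k>n-d_k$, that is, when $k+d_k\ge n+1$. Since $k+d_k$ is strictly increasing in $k$ (because $d_k$ is non-decreasing), the set of agents with positive demand is a suffix $\{a_t,\ldots,a_n\}$. Here $t$ is the smallest index with $t\ge n-d_t+1$, so $d_t\ge n-t+1$. Every agent $a_k$ with $k\ge t$ then has $d_k\ge d_t\ge n-t+1$, and hence demand $\frac{1}{d_k}\le\frac{1}{n-t+1}$. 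There are exactly $n-t+1$ such agents, so $\walpha(\per)\le 1$. If no such $t$ exists, the weight is $0$.

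For the cyclic weight, fix $j$ and consider $\percyc{j}$. The agents appear in the order $a_j,\ldots,a_n,a_1,\ldots,a_{j-1}$. This sequence splits into two blocks, and deficiencies are non-decreasing within each block. I will bound the demand from each block by $1$ separately.

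The second block $a_1,\ldots,a_{j-1}$ occupies positions $n-j+2,\ldots,n$, so agent $a_k$ with $k<j$ sits at position $n-j+1+k$. Agent $a_k$ has positive demand iff $n-j+1+k>n-d_k$, i.e. $k+d_k\ge j$. The left side is again increasing in $k$, so positive-demand agents in this block form a suffix $a_s,\ldots,a_{j-1}$ with $d_s\ge j-s$. That suffix has $j-s$ agents, each with demand at most $\frac{1}{d_s}\le\frac{1}{j-s}$, so its total demand is at most $1$.

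The first block $a_j,\ldots,a_n$ occupies positions $1,\ldots,n-j+1$, with $a_k$ at position $k-j+1$. Agent $a_k$ has positive demand iff $k-j+1>n-d_k$, i.e. $k+d_k\ge n+j$. The left side is increasing, so the positive agents form a suffix $a_u,\ldots,a_n$ with $d_u\ge n+j-u\ge n-u+1$. There are $n-u+1$ of them, so their total demand is at most $1$.

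Combining the blocks gives $\walpha(\percyc{j})\le 2$ for every $j$, and therefore $\Wcycapx(\per)=\max(\max_j\walpha(\percyc{j}),1)\le 2$.

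The main care point is the position bookkeeping under the cyclic shift, which I will check against \Cref{def:cyclic-shifts}. The same suffix-threshold argument applies to both blocks, so once the positions are written down correctly the rest is routine. Ties among equal deficiencies need no special handling, since only monotonicity is used.
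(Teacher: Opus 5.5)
Your proposal is correct and follows essentially the same route as the paper: both split $\pi^{(j)}$ into the two sorted blocks $a_j,\ldots,a_n$ and $a_1,\ldots,a_{j-1}$, use monotonicity of the demand condition to see that the positive-demand agents in each block form a suffix whose first member has deficiency at least the suffix length, and bound each block's demand by $1$. The only difference is packaging. The paper proves one helper claim for an arbitrary contiguous sorted block occupying increasing positions and applies it twice. You instead write out the position formulas for each block explicitly; they are correct and check out against the definition of cyclic shifts.
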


\begin{proof}
    W.l.o.g. rename the agents so $\per$ orders them $1,\ldots,n$. Then by definition their deficiencies are non-decreasing: $\defi_{1} \le \dots \le \defi_{n}$. We first establish a general bound for any contiguous subsequence of these sorted agents.
    
    \begin{claim}
        \label{claim:helper-ordered-weight}
        Let $\per$ and $\defi_1,\ldots,\defi_n$ as above. Let $1\le l\le r\le n$ be two agents. Then for any index $j$ such that $j\le l$ or $j> r$, the total demand imposed on $\percyc{j}$ by the subsequence $l,\ldots,r$ is at most $\sum_{i=l}^r \walpha_{i}(\percyc{j}) \le 1$.
    \end{claim}
    \begin{proof}
        Note that because $j \le l$ or $j > r$, the sequence $l, \ldots, r$ is mapped to a strictly increasing block of positions $\percyc{j}$. Specifically, if $j\le l$ then $\percycij = i - j + 1$, if $j>r$ then $\percycij = i - j + 1 + n$. Recall that an agent $i$ imposes a demand $\walpha_i(\percyc{j}) = \frac{1}{\defi_i}$ only if their position satisfies $\percycij > n - \defi_i$, otherwise, their demand is $0$. Because the positions are contiguous, as $i$ increases, the position $\percycij$ strictly increases, while $\defi_i$ is non-decreasing  by the assumption. Thus, the condition $\percycij > n - \defi_i$ is monotonic.
        
        If the condition never holds for any agent in the subsequence, the total demand is $0 \le 1$, and we are done. Otherwise, let $i^*\in\{l, \ldots, r\}$ be the smallest index for which $\percyc{j}_{i^*} \ge n - \defi_{i^*}+1$. For all $l \le i < i^*$, the demand is $0$. For all $i^* \le i \le r$, the demand is at most $\frac{1}{\defi_{i^*}}$ because $\defi_i \ge \defi_{i^*}$. The total demand of the subsequence is therefore bounded by:
        \[ \sum_{i=l}^r \walpha_i(\percyc{j}) = \sum_{i=i^*}^r \frac{1}{\defi_{i}} \le \sum_{i=i^*}^r \frac{1}{\defi_{i^*}} = \frac{r - i^* + 1}{\defi_{i^*}}\le \frac{\percyc{j}_{r}-\percyc{j}_{i^*}+1}{ n - \percyc{j}_{i^*}+1}\le 1 \]
        as required
    \end{proof}

    We now apply this claim to prove the lemma. For the initial ordering $\per$, applying \Cref{claim:helper-ordered-weight} (with $j=1$) on the entire sequence $1,\ldots,n$ proves that $\walpha(\per)=\sum_{i\in\N} \walpha_{i}(\per) \le 1$, as required.

    To bound the cyclic weight $\Wcycapx(\per)$, let $j\in\{2,\ldots,n\}$ and consider the cyclic shift $\percyc{j}$. This shift splits the sorted agents into two ordered blocks: the first block contains agents $j,\ldots,n$ and the second block contains agents $1, \dots, j-1$. Within each block, agents are ordered increasingly (so their deficiencies are ordered as well). Therefore, applying \Cref{claim:helper-ordered-weight} twice, once for each block:
    \[\walpha(\percyc{j})=\sum_{i=j}^n \walpha_{i}(\percyc{j})+\sum_{i=1}^{j-1} \walpha_{i}(\percyc{j}) \le 1+1=2\]
    This proves that $\Wcycapx(\per) =\max(\max_{j\in[n]}\left(\walpha(\percyc{j})\right),1) \le 2$.
\end{proof}

We now make use of the combination of \Cref{lem:general-weight} and \Cref{lem:ordered-weight} to design a TIE mechanism which makes use of the public information to obtain a $1/4$ approximation of the $\MMS$ ex-post.

\begin{restatable}{proposition}{constantpublicdeficiency}
    \label{prop:constant-with-public-defi}
    For any factor $0<\defapx\le\frac{1}{4}$, when agents' \emph{true} $\defapx$-deficiencies $\defi_1,\ldots,\defi_n$ are given to the mechanism, there exists a 
    distributional ordinal mechanism which is \emph{truthful-in-expectation (TIE)}, \emph{ex-ante proportional}, and every supporting allocation is \emph{$\defapx$-$\TPS$}. Moreover, the distribution has support of size at most $n\cdot m$, and can be computed explicitly in polynomial time. 
\end{restatable}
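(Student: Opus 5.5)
The plan is to fix the ordering of the agents using the (public) true deficiencies, and then invoke \Cref{lem:general-weight}. Concretely, the mechanism sorts the agents by their given $\defapx$-deficiencies in non-decreasing order, breaking ties by agent index, to obtain an ordering $\per\in\PerN$. It then runs the algorithm of \Cref{lem:general-weight} on $\per$ and the reported strict ordinal preferences, and outputs the resulting distributional allocation $\D$. The crucial observation is that $\per$ depends only on the public deficiencies $\defi_1,\ldots,\defi_n$ and not on anything the agents report. From the agents' strategic point of view, $\per$ is therefore a fixed ordering.

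For the fairness and computational properties, I would argue in three steps.
\begin{enumerate}
\item By \Cref{lem:ordered-weight}, the sorted ordering satisfies $\Wcycapx(\per)\le 2$.
\item Hence $\frac{1}{2+\Wcycapx(\per)-\frac{1}{n}}\ge \frac{1}{4-\frac{1}{n}}>\frac{1}{4}\ge \defapx$, so the hypothesis of \Cref{lem:general-weight} holds.
\item \Cref{lem:general-weight} then gives a distributional allocation that implements $\cuqalloc(\valprof,\per)$, is $\defapx$-$\TPS$ on every supporting allocation, has support of size at most $n\cdot m$, and is computable in polynomial time.
\end{enumerate}
The construction is ordinal given the deficiencies. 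The only cardinal input to the portion matrix in \Cref{lem:general-weight} is the demands $\walpha_i$, which are computed from the public deficiencies. Everything else in \Cref{lem:implement-export} uses only the strict orders.

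For incentives and ex-ante fairness, I would apply \Cref{lem:cuq-tie}. Because $\per$ is fixed independently of the reports, the induced fractional allocation is exactly the cyclic-$\per$-unit-quota fractional mechanism for this fixed $\per$. \Cref{lem:cuq-tie} then yields TIE and ex-ante proportionality.

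The point I expect to need the most care is this incentive argument. One must make explicit that the strategic input consists only of the strict ordinal preferences, while the deficiencies are \emph{given} and truthful. Under that reading, misreporting an order cannot change $\per$, and it cannot change the portion matrix, which depends only on $\per$ and the deficiencies. It can affect only the CUQ fractional allocation, and that allocation is truthful. One should also note that the ex-post $\defapx$-$\TPS$ guarantee in \Cref{lem:general-weight} requires the true deficiencies, which is exactly why they are assumed public and correct.
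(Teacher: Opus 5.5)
Your proposal is correct and follows the paper's proof essentially step for step: you sort the agents by the given deficiencies, use \Cref{lem:ordered-weight} to get $\Wcycapx(\per)\le 2$, invoke \Cref{lem:general-weight} since $\defapx\le\frac14\le\frac{1}{2+\Wcycapx(\per)-\frac1n}$, and conclude TIE and ex-ante proportionality from \Cref{lem:cuq-tie} because $\per$ does not depend on the strategic reports. Your explicit remarks on tie-breaking among equal deficiencies, and on why a fixed ordering makes \Cref{lem:cuq-tie} applicable, only spell out what the paper leaves implicit.
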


\begin{proof}
    Fix approximation factor $0<\defapx \le \frac{1}{4}$. The mechanism receives as input from the agents an additive valuation profile $\valprof\in\Vaddprofs$, and the agents' true $\defapx$-deficiencies $\defi_1, \ldots, \defi_n$. Henceforth the mechanism will only make use of the agents' strict orderings over the goods, with ties broken lexicographically (and their $\defapx$-deficiencies, {which have already been given}).

    Let $\per \in \PerN$ be an ordering of the agents such that their $\defapx$-deficiencies are in non-decreasing order. \Cref{lem:ordered-weight} shows this ordering guarantees that the cyclic weight is bounded by $\Wcycapx(\per) \le 2$. Then $\frac{1}{2+\Wcycapx(\per)-\frac{1}{n}}\ge \frac{1}{4}\ge \defapx$, so we can apply \Cref{lem:general-weight} using the factor $\defapx$ and the ordering $\per$, to obtain a distributional allocation $\D$ which induces the cyclic-$\per$-unit-quota fractional allocation $\cuqalloc(\valprof,\per)$. The mechanism outputs $\D$.

    We now prove the mechanism satisfies all the requirements of the theorem. The mechanism induces the cyclic-$\per$-unit-quota fractional mechanism at every input profile, so \Cref{lem:cuq-tie} implies the mechanism is TIE and proportional ex-ante. Additionally, \Cref{lem:general-weight} guarantees that the generated distribution $\D$ has support of size at most $n\cdot m$ which consists entirely of allocations which are $\defapx$-$\TPS$. Finally, note that all the steps of the mechanism can be computed in polynomial time.
\end{proof}

\subsection{\texorpdfstring{Proof of \Cref{thm:loglog-apx}}{Proof of Theorem~\ref{thm:loglog-apx}}}
\label{subsec:epsilon-tie}

This section is dedicated to proving \Cref{thm:loglog-apx}. First, we prove in \Cref{lem:random-ordering} that an ordering sampled uniformly at random from $\PerN$ will have weight at most $O(\log\log n)$ with probability $1-n^{-\log n}$. We note that using \Cref{lem:general-weight}, we can use this immediately to construct a TIE mechanism which is $\Omega(1/\log \log n)$-$\TPS$ ex-post with high probability. Using the fact that the gain from a lie is bounded, we show this mechanism can be converted into a $(1-\varepsilon(n))$-TIE mechanism that is  always $\Omega(1/\log \log n)$-$\TPS$.

We begin with the following probabilistic result, which shows that regardless of the deficiencies of the agents, a random sampled ordering will have weight at most $O(\log\log n)$ with high probability.

\begin{restatable}{lemma}{randomOrdering}
    \label{lem:random-ordering}
    For any factor $\defapx>0$, regardless of the $\defapx$-deficiencies of the agents, there is only negligible probability of $\frac{n^{-\log n}}{2n}$ that a uniformly sampled ordering $\per\leftarrow\PerN$ has cyclic-weight $\Wcycapx(\per)$ above $2\log\log n+23$.
\end{restatable}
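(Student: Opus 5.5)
Fix the deficiencies $\defi_1,\ldots,\defi_n$ and a uniformly random $\per$. My plan is to bound the weight of a single cyclic shift by $2\log\log n+23$, with failure probability at most $\frac{n^{-\log n}}{2n^2}$. A union bound over the $n$ cyclic shifts then gives the claim, since the cap at $1$ in $\Wcycapx$ is harmless. This reduction works because each $\percyc{j}$ is itself a uniformly random ordering. So it suffices to bound $\walpha(\per)$ for a single uniform $\per$.

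First I bucket agents by deficiency. Agent $i$ with $\defi_i\in[2^{k},2^{k+1})$ contributes at most $2^{-k}$, and she contributes only if she lands in the last $\defi_i\le 2^{k+1}$ positions. Hence bucket $k$ contributes at most $2^{-k}\cdot N_k$, where $N_k$ is the number of bucket-$k$ agents in the last $2^{k+1}$ positions. Ignoring that bucket $k$ has at most $n$ agents, the expectation of $N_k$ is at most $2^{k+1}\cdot(\text{size of bucket }k)/n$. It is cleaner to bound $N_k$ by the number of agents of deficiency at least $2^k$ falling in the last $2^{k+1}$ positions, each of whom demands at most $2^{-k}$.

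The per-bucket contribution $X_k=2^{-k}N_k$ has mean at most $2\cdot|\text{bucket }k|/n$. These means sum to at most $2$. The issue is concentration.

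\textbf{Small buckets.} For $2^{k}\le (\log n)^2$, i.e.\ $k\le 2\log\log n$, I will just use the trivial bound: each bucket contributes at most $2^{-k}\cdot 2^{k+1}=2$. That gives about $2\cdot(2\log\log n+1)$, which is slightly too large. Instead, I note that the total over agents landing in the last $L$ positions is at most $\sum_{p\le L}1/p\approx \ln L$. Taking $L=(\log n)^2$, the positions in the last $(\log n)^2$ slots contribute at most $H_{(\log n)^2}\le 2\ln\log n+1\le 2\log\log n+1$, deterministically.

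\textbf{Large deficiencies.} The remaining contributions come from agents at positions earlier than the last $(\log n)^2$, and such agents must have $\defi_i>(\log n)^2$. Each contributes at most $1/(\log n)^2$. For scale $k$ with $2^k\ge(\log n)^2$, $N_k$ is hypergeometric: a sample of $2^{k+1}$ positions from $n$. Its mean is $\mu_k\le 2^{k+1}|B_k|/n$, and it has Chernoff-type tails. I need $N_k\le 2^{k}\cdot c_k$ with $\sum_k c_k\le 22$. A Chernoff bound $\Pr[N_k\ge \mu_k+t]\le e^{-t/3}$ for $t\ge\mu_k$ works with $t=2^k$. This gives failure probability $e^{-2^k/3}\le e^{-(\log n)^2/3}$, which is at most $n^{-\log n}/(2n^3)$ once the constants are tuned. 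The per-bucket bound is then $2^{-k}(\mu_k+2^k)\le 2|B_k|/n+1$.

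The main obstacle is that summing the additive $+1$ over up to $\log n$ buckets is too much. To fix it, I will use $t_k=2^{k}\cdot c\,\theta_k$ with a geometrically decaying allowance. Alternatively, I can treat all large-deficiency agents jointly: bound the number of agents with $\defi_i\ge 2^k$ in the last $2^{k+1}$ positions by $4\cdot 2^{k}\cdot\max(1,\ldots)$. This works because the sizes of the sets $\{i:\defi_i\ge 2^k\}$ are nested, and I can choose the deviation $t_k=2^{k}/2^{(k-k_0)/2}$, whose tail is still $\exp(-\Omega(2^{k/2}(\log n)))\le n^{-\log n}/\mathrm{poly}(n)$. Summing $2^{-k}t_k$ gives a geometric series bounded by a constant. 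Together with $\sum\mu_k 2^{-k}\le 2$, this should stay within the budget of $23$.

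A union bound over the $\le\log n$ scales and the $n$ shifts finishes the proof. I expect the constant bookkeeping to get exactly $2\log\log n+23$ and probability $n^{-\log n}/(2n)$ to be the fiddly part.
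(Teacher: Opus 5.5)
Your final route is essentially the paper's proof, which takes $t=\lfloor 2\log\log n\rfloor$ and $\delta_j=2\log n/\sqrt{s_j}$. The pieces match: a deterministic harmonic bound $H_{\lfloor(\log n)^2\rfloor}\le 1+2\log\log n$ on the demand from the last $(\log n)^2$ positions; hypergeometric concentration for the dyadic deficiency buckets with $2^{k+1}>(\log n)^2$, with deviations of order $2^{k/2}\log n$, so the excess over the bucket means (which sum to at most $2$) is a geometric series; and a union bound over buckets and over the $n$ cyclic shifts, each of which is itself uniform.

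For the bookkeeping, keep the counts per disjoint bucket, because the cumulative sets $\{i: d_i\ge 2^k\}$ can make the means sum to $\Theta(\log n)$. Also use Hoeffding's additive tail $e^{-2t^2/s}$ for $s$ draws, as the paper does, rather than the $e^{-t/3}$ form, since $t_k$ may be smaller than $\mu_k$. This gives $e^{-\Theta(\log^2 n)}$ uniformly in $k$, which suffices once $t_k$ carries a constant factor such as $2$.
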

\begin{proof}
Fix factor $\defapx>0$, let $\defi_1,\ldots,\defi_n$ be the agents' $\defapx$-deficiencies. Partition the agents into $k+1$ groups $I_0,\ldots,I_{k}$ where $k=\ceil{\log_2 (n+1)}$. Group $I_0$ contains agents with $\defi_i=0$, and group $I_j$ ($1\le j\le k$) contains agents with $\defi_i \in [2^{j-1}, 2^j - 1]$.

Let $c$ be a constant to be determined later, and set $t=\floor{c\log\log n}$. For any ordering $\per$, the weight from agents in the small groups $I_0,\ldots,I_t$ is bounded by the sum of demands imposed by agents in the last $2^t-1 \le \floor{\log^c n}$ positions. Since at most one agent from these groups occupies the $d$-th position from the end (contributing at most $\frac{1}{d}$), their total weight is at most:

\[\sum_{i\in \bigcup_{j=0}^{t}I_j}\walpha_i(\per)\le \sum_{d=1}^{\floor{\log^c n}}\frac{1}{d}=\harmonic{\floor{\log^c n}}\le 1+\ln(\log^c n)\le 1+ c\log\log n\]

We now move to bound the weight imposed by agents in the remaining groups $I_{t+1},\ldots,I_{k}$. Fixing a group $j\in\{t+1,\ldots,k\}$, let $s_j=2^j-1$ be the maximum deficiency in $I_j$, and $n_j=|I_j|$. Agents in $I_j$ contribute at most $\frac{1}{2^{j-1}} \le \frac{2}{s_j}$ to the weight, and only if placed in the last $s_j$ positions. Let $X_j$ be the random variable for the number of agents from $I_j$ randomly placed in these last $s_j$ positions. Then the total demand imposed by all agents from groups $I_{t+1},\ldots,I_{k}$ is at most $\sum_{j=t+1}^k \frac{2}{s_j} X_j$.

Notice that $X_j$ is distributed according to a hypergeometric distribution, with population size $n$, with $n_j$ success states, and $s_j$ draws without replacement. The expectation is known to be $\expect{X_j}=s_j\frac{n_j}{n}$. Applying the standard tail bound for the hypergeometric distribution, for any $\delta_j > 0$, the probability that $X_j$ deviates from its expected value by more than $\delta_j s_j$ is at most $e^{-2 \delta_j^2 s_j}$. That is, $\prob{X_j > \left(\frac{n_j}{n}+\delta_j\right) s_j} \le e^{-2 \delta_j^2 s_j}$.

    We set $\delta_j=\frac{2\log n}{\sqrt{s_j}}$ for every $j\in\{t+1,\ldots,k\}$. Applying a union bound over all groups $j$, the probability that \emph{any} $X_j$ exceeds its deviation bound is at most:
    \[ \sum_{j=t+1}^k e^{-2 \left(\frac{4\log^2 n}{s_j}\right) s_j} \le k \cdot e^{-8\log^2 n} \le (2\log n) n^{-8\log n}\]

    Assuming no $X_j$ exceeds the bound, the total demand imposed from these groups is at most:
    \[\sum_{i\in \bigcup_{j=t+1}^{k}I_j}\walpha_i(\per) \le \sum_{j=t+1}^k\frac{2}{s_j} \left(\frac{n_j}{n}+\frac{{2}\log n}{\sqrt{s_j}}\right) s_j = \sum_{j=t+1}^k \frac{2n_j}{n} + 4\log n \sum_{j=t+1}^k\frac{1}{\sqrt{s_j}}\]

    Since $\sum n_j \le n$, the first term is at most $2$. We bound the second sum by evaluating the tail of a geometric series:
    \[ \sum_{j=t+1}^k \frac{1}{\sqrt{s_j}} \le \sum_{j=t+1}^\infty \left(\frac{1}{\sqrt{2}}\right)^{j-1} = \frac{2^{-t/2}}{1 - 1/\sqrt{2}} = (2+\sqrt{2})2^{-t/2}\]

     Substituting $t = \floor{c\log\log n} \ge c\log\log n - 1$ gives $2^{-t/2} \le \sqrt{2} (\log n)^{-\frac{c}{2}}$. Thus, the total weight from these groups is bounded by:
    \[ 2 + 4\log n \cdot \sqrt{2}(2+\sqrt{2})(\log n)^{-c/2} = 2 + 4(2\sqrt{2}+2) (\log n)^{1-c/2}\]

    Setting $c=2$ yields a bound of $2 + 4(2\sqrt{2}+2) \le 22$. The weight of these groups is therefore at most $22$. Combining this with the bound for groups $I_0,\ldots,I_t$, a single random ordering $\per$ satisfies $w^\alpha(\per) \le 23+2\log\log n$ with probability at least $1 - (\log n) n^{-8\log n}$. Finally, the cyclic-weight $\Wcycapx(\per) = \max_{r \in [n]} w^\alpha(\percyc{r})$ considers all $n$ cyclic shifts. Taking a union bound over these $n$ shifts, the probability that \emph{any} shift exceeds the weight limit is at most:
    \[ \prob{\exists r \in [n] : w^\alpha(\percyc{r}) > 23+2\log\log n} \le n \cdot ({2}\log n) n^{-8\log n} \le \frac{n^{-\log n}}{2n}\]
    as required.
\end{proof}

We remark that the analysis of the bound in the lemma is tight. For a fixed factor $\defapx>0$, give each agent a $\defapx$-deficiency uniformly at random, among the values $\{1, 2, \ldots,\ell\}$ for $\ell = \frac{\log n}{2 \log\log n}$. Consider a randomly sampled ordering $\per$. With overwhelming probability, the ordering will include a consecutive decreasing subsequence of length $\ell$ (as there are $n/\ell$ disjoint segments of length $\ell$, and each has probability roughly $\frac{1}{\sqrt{n}}$ of being such a subsequence). Some cyclic shift will move this subsequence to the end, and then we pay the harmonic sum over $\ell$ terms, which is roughly $\log\log n$.

We now use \Cref{lem:random-ordering} in conjunction with \Cref{lem:general-weight} to prove \Cref{thm:loglog-apx}.

\loglognapx*

\begin{proof}
    
    Fix factor $\defapx = \frac{1}{25 + 2\log\log n-\frac{1}{n}}=\Omega\left(\frac{1}{\log\log n}\right)$. The mechanism receives as input from the agents an additive valuation profile $\valprof\in\Vaddprofs$, and computes their $\defapx$-deficiencies $\defi_1, \ldots, \defi_n$. This is the only use of cardinal information: henceforth the mechanism will only use these deficiencies, along with each agent's strict ordering over the goods, with ties broken lexicographically.

    Sample randomly an ordering of the agents $\per \leftarrow \PerN$, and compute the cyclic weight $\Wcycapx(\per)$. We say ordering $\per$ is \emph{good} if $\Wcycapx(\per)\le 23+2\log\log n=\frac{1}{\defapx}-2+\frac{1}{n}$. Note that this implies $\frac{1}{2+\Wcycapx(\per)-\frac{1}{n}}\ge \defapx$. If $\per$ is good, then we can apply \Cref{lem:general-weight} with factor $\defapx$ and the sampled ordering $\per$, to obtain a distributional allocation $\D$ which induces the cyclic-$\per$-unit-quota fractional allocation $\cuqalloc(\valprof,\per)$, then output $\D$. Otherwise, $\per$ is \emph{bad}. In this case we apply the mechanism of \Cref{prop:constant-with-public-defi}, using factor $\defapx$ defined above (note that $\defapx\le \frac{1}{4}$ for all $n$), and output the resulting distributional allocation.

    Note that in the above construction, when $\per$ is bad, the mechanism of \Cref{prop:constant-with-public-defi} reorders the agents by order $\per'$ of non-decreasing deficiencies, and then implements $\cuqalloc(\valprof,\per')$, which is generally different from $\cuqalloc(\valprof, \per)$. Thus overall, this mechanism does \emph{not} implement the cyclic-$\per$-unit-quota fractional allocation.

    We now prove the mechanism satisfies all the requirements of the theorem. First, note all steps above can be done in polynomial time. Also, the mechanism always outputs a distributional allocation which has support of size at most $n\cdot m$. We now show the mechanism is $\defapx$-$\TPS$ ex-post. If the sampled ordering $\per$ is good, the algorithm applies \Cref{lem:general-weight}, and so $\D$ is guaranteed to be $\defapx$-$\TPS$ ex-post. Otherwise, if $\per$ is bad, the algorithm applies \Cref{prop:constant-with-public-defi} with factor $\defapx$. Because always $\defapx < \frac{1}{4}$ \Cref{prop:constant-with-public-defi} guarantees $\defapx$-$\TPS$. In both cases, the mechanism is $\defapx=\Omega\left(\frac{1}{\log\log n}\right)$-$\TPS$ ex-post.

    For ex-ante fairness, as agents are assumed to report truthfully,
    both the cyclic-$\per$-unit-quota fractional allocation for the sampled ordering $\per$, and the mechanism of \Cref{prop:constant-with-public-defi} are ex-ante proportional. As the mechanism always executes one of these two strictly proportional algorithms, it remains ex-ante proportional overall.

    It remains to prove the mechanism is $(1-\varepsilon(n))$-TIE for $\varepsilon(n)=n^{-\log n}$. Let $\nu(n) = \frac{n^{-\log n}}{2n}$. Consider a fixed agent $i\in\N$, a true valuation profile $\valprof$, and an alternative reported valuation $\valp_i$. Let $X_i$ and $X'_i$ denote the random variables for the expected value of agent $i$ when reporting truthfully and falsely, respectively (both for the fixed reports $\val_{-i}$ of the others).

    By \Cref{lem:random-ordering}, the probability that a uniformly sampled ordering $\per$ is bad under the true profile $\valprof$ is at most $\nu(n)$, and under the misreported profile $\valprofp=(\valp_i, \val_{-i})$ is also at most $\nu(n)$. By the union bound, with probability at least $1 - 2\nu(n)$, the sampled ordering $\per$ is good under both reports. Conditioned on $\per$ being good for both reports, the mechanism implements the cyclic-$\per$-unit-quota fractional allocation. Then, by \Cref{lem:cuq-tie}, in this case truthful reporting maximizes the agent's expected utility on this event.

    In the event that the ordering is bad under either report (which occurs with total probability at most $2\nu(n)$), the agent's expected value under a false report is bounded from above by her total value for all items, $\val_i(\M)$. Conversely, because the mechanism is ex-ante proportional, her expected value under truth-telling is bounded from below by her proportional share, $\expect{X_i} \ge \frac{\val_i(\M)}{n}$.
    Therefore, the maximum potential expected gain from misreporting is bounded as follows:
    \[ \expect{X'_i} \le \expect{X_i} + 2\nu(n) \cdot \val_i(\M) \le \expect{X_i} + 2\nu(n) \cdot n \cdot \expect{X_i} = (1 + 2n\cdot \nu(n)) \expect{X_i} \]
    Rearranging this inequality yields:
    \[ \expect{X_i} \ge \frac{1}{1 + 2n\cdot \nu(n)} \expect{X'_i} \ge (1 - 2n\cdot \nu(n)) \expect{X'_i} \]
    Setting $\varepsilon(n) = 2n\cdot \nu(n) = n^{-\log n}$, which is negligible, we establish that the mechanism is $(1-\varepsilon(n))$-TIE, completing the proof.
\end{proof}

\section{A TIE Mechanism Obtaining 2/3-MMS for Two Agents}
\label{sec:two-agents}

We now focus on the special case where there are $n=2$ agents. In the algorithmic setting, where agents' valuations are known, it is well known that it is possible to obtain proportionality ex-ante and full $\MMS$ ex-post using random cut-and-choose. Meanwhile, the best known ex-post $\MMS$ approximation achieved by a TIE mechanism for this case is $\frac{1}{2}$ \cite{BT-24}. This result was obtained with a distributional mechanism that induces the uniform fractional allocation for every input profile. We show in \Cref{prop:n-approximation-impossibility} that an ex-post implementation of the uniform fractional mechanism cannot guarantee a fraction of the $\MMS$ that is larger than $\frac{1}{2}$. As in previous sections, we seek to overcome this impossibility by using the cyclic-unit-quota fractional mechanism, which makes use of agents' reports. In \Cref{sec:logn}, we established a general result of $\frac{1}{\harmonic{n-1}+2}$ utilizing a generic faithful implementation of the cyclic-unit-quota fractional allocation. Unfortunately, for $n=2$ agents, this gives an approximation of only $\frac{1}{3}$. In this section, we develop a new implementation for the case of two agents, obtaining a significantly better ex-post $\MMS$ approximation of $\frac{2}{3}$.

The substantial difference between this approach and the general case lies in the \emph{implementation} technique. Both methods start from the same underlying fractional allocation (CUQ). However, while the general result in \Cref{thm:logn-apx} relies on the inherently ordinal faithful implementation technique (\Cref{lem:faithful}), the implementation in \Cref{thm:two-agents} accounts for the agents' cardinal values.

We now present our main result for the two-agent case.

\twoagents*

We note the theorem holds for the stronger fairness metric of the $\TPS$ as well. Notably, $\frac{2}{3}$ is known to be the best possible approximation of the $\TPS$ for two agents, as there exists an instance with two agents in which no better than $\frac{2}{3}$-$\TPS$ allocation exists \cite{BEF-22}. This leads us to the following observation:

\begin{observation}
    The $\frac{2}{3}$ $\TPS$ approximation of \Cref{thm:two-agents} is best possible for two additive agents.
\end{observation}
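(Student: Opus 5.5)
The plan is to exhibit an explicit two-agent instance in which no integral allocation at all gives both agents more than $\frac{2}{3}$ of their $\TPS$. Then no mechanism, truthful or not, randomized or deterministic, can guarantee a larger fraction of the $\TPS$ ex-post, since every allocation in the support of its output is subject to the same bound. Combined with \Cref{thm:two-agents}, which attains $\frac{2}{3}$-$\TPS$ ex-post (the proof of that theorem will be carried out for the $\TPS$, as noted after its statement), this shows the factor is tight. This is the instance used in \cite{BEF-22}, and I would reproduce it for completeness.

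\textbf{The instance.} Take $n=2$ agents with identical additive valuations over $m=3$ goods, each of value $1$.

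\textbf{Computing the share.} By \Cref{def:TPS},
\[
\TPS_2(\val,\M)=\min\left(\tfrac{3}{2},\ \TPS_1(\val,\M\setminus\{\best_v(\M)\})\right)=\min\left(\tfrac{3}{2},\,2\right)=\tfrac{3}{2}.
\]

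\textbf{Every allocation fails.} In any allocation $\alloc\in\allocs$ the three goods are split between two agents, so some agent receives at most one good. That agent has value at most $1=\frac{2}{3}\cdot\TPS_2(\val)$. Hence no allocation is $\apx$-$\TPS$ for any $\apx>\frac{2}{3}$.

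This argument also explains why the observation is stated for the $\TPS$ and not for the $\MMS$: in this instance $\MMS_2(\val)=1$, so the instance does not bound the achievable $\MMS$ fraction. The only step that needs care is evaluating the recursion in \Cref{def:TPS} correctly. No step is a real obstacle.
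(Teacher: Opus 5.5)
Your proof is correct and takes essentially the paper's approach. The paper only cites \cite{BEF-22} for a two-agent instance in which no allocation is better than $\frac{2}{3}$-$\TPS$, and your instance of three identical unit-value goods supplies such an instance explicitly and verifies it ($\TPS_2=\frac{3}{2}$, yet some agent receives at most one good); your remark that $\MMS_2=1$ there, so tightness is claimed only for the $\TPS$, is also accurate.
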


Moreover, we prove the following negative result which shows that any randomized (or distributional) mechanism (even non-truthful ones) which induces an ordinal fractional mechanism cannot improve on the $2/3$-$\MMS$ ex-post guarantee of \Cref{thm:two-agents}. In particular, since the cyclic-unit-quota fractional mechanism is ordinal, this proves that improving our $2/3$-$\MMS$ result for two agents requires designing more intricate fractional allocations that incorporate cardinal information.

\begin{restatable}{proposition}{negtwo}
\label{prop:neg-two-agents}  
    Consider the problem of allocating $m$ indivisible goods to two ($n=2$) \emph{additive} agents. Let $\randMech$ be a randomized (or distributional) mechanism which induces fractional mechanism $\fracMech$. If $\fracMech$ is ordinal, then $\randMech$ is at most $\frac{2}{3}$-$\MMS$ ex-post.  
\end{restatable}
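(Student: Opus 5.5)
The plan is to exhibit two additive valuation profiles that induce the same ordinal preference profile, and so receive the same fractional allocation $\falloc$ from $\fracMech$. We then argue that no single support can be good for both. Since $\randMech$ may use different distributions on the two profiles, each having marginals $\falloc$, the constraint has to come from $\falloc$ alone. Concretely, we aim to show that for one of the two profiles every allocation meeting the marginal conditions forced by $\falloc$ must leave some agent below $\frac{2}{3}$ of her $\MMS$.

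Take $m=3$ goods $a,b,c$ and let both agents share the strict order $a\succ b\succ c$. Let $\falloc$ be the (common) fractional allocation on this ordinal profile. Some agent, say agent $1$, has $f_{1,a}<1$, so with positive probability agent $2$ receives $a$. The instances to consider are the following.
\begin{itemize}
\item \emph{Profile $P$:} $v_1=v_2=(1,1,1)$, with ties broken to the stated order. Here $\MMS=1$.
\item \emph{Profile $Q$:} $v_1=v_2=(2,1,1)$. Here $\MMS=2$.
\end{itemize}
Both profiles are consistent with the same strict order, so $\falloc$ is the same in each.

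Under $Q$, any allocation giving an agent fewer than $\frac{2}{3}\cdot 2$ must be excluded from the support. The allocations giving both agents at least $\frac{4}{3}$ are exactly $\{a\}$ versus $\{b,c\}$ and its mirror image. So the support of $\randMech(Q)$ consists only of these two allocations, which forces $f_{i,b}=f_{i,c}$ for each agent and $f_{1,a}+f_{1,b}=1$.

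Under $P$, an allocation in the support where some agent receives a single good gives her value $1=\MMS$, which is fine; this choice of $P$ therefore does not yet produce a contradiction. The plan is to adjust $P$ so that the singleton-versus-pair split of $Q$ becomes bad. To do so, we will switch to $m=4$ goods, use $Q$ with values $(3,1,1,1)$ (where $\MMS=3$ and the good split is $a$ alone), and use $P$ with values $(1,1,1,1)$ (where $\MMS=2$, and $\{a\}$ alone is worth only $\frac{1}{2}\MMS$). The allocations that are $\frac{2}{3}$-$\MMS$ under $P$ are precisely the $2$--$2$ splits, which means agent $1$'s expected number of goods is $2$. Under $Q$, the allowed allocations are $\{a\}$ versus the rest, plus possibly $\{a,x\}$ versus $\{y,z\}$, since $4\ge 2$ and $2\ge 2$; these need to be checked carefully.

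\textbf{Main obstacle.} The key step is to tune the values so that the two sets of admissible supports have disjoint marginal polytopes, with a gap of exactly $\frac{2}{3}$. Our first attempt will be to compute, for each profile, the convex hull of the marginals of the admissible allocations, and show that these hulls do not intersect; a counting argument such as "$\sum_x f_{1,x}$ must equal $2$ under one profile and differ from $2$ under the other" is the kind of certificate we expect. If the small instances fail, we will scale up to $m=3k+1$ goods with values $(k,1,\dots,1)$ versus all-equal values, and take $k\to\infty$ to approach the $\frac{2}{3}$ threshold.
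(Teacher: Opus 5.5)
\textbf{There is a genuine gap: the instances you propose can never give a contradiction.} Your strategy is sound in principle: find two cardinal profiles with the same ordinal profile whose sets of admissible marginals are disjoint. But every pair you propose ($P,Q$ for $m=3$, the $m=4$ pair, and the $(k,1,\ldots,1)$ versus all-ones family) gives the two agents \emph{identical} valuations. For such profiles the strategy provably fails.

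If $v_1=v_2=v$, take a partition $(S,T)$ attaining $\MMS_2(v)$ and output $(S,T)$ or $(T,S)$ with probability $\frac{1}{2}$ each. Both agents get their full $\MMS$ ex-post, and the induced fractional allocation is uniform, $f_{i,x}=\frac{1}{2}$. So the uniform $\falloc$ lies in the admissible polytope of every identical-valuation profile, and no two such profiles can have disjoint polytopes.

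Your $m=4$ example shows this concretely. Take $\rho>\frac{2}{3}$. This does rule out the $\{a,x\}$ versus $\{y,z\}$ splits under $Q$, since the pair is worth exactly $2=\frac{2}{3}\cdot 3$. Even so, your counting certificate $3-2f_{1,a}=2$ only forces $f_{1,a}=\frac{1}{2}$, which is the uniform point. Profile $P$ realizes that point with $2$--$2$ splits, so there is no contradiction. Scaling to $k\to\infty$ fails for the same reason. Your opening observation that some agent has $f_{1,a}<1$ is never actually used.

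\textbf{The missing idea is to swap two different valuations that share a strict order.} Take valuations $v\neq w$ that induce the same strict order, and compare the profile $(v,w)$ with the swapped profile $(w,v)$. These are ordinally identical, so $\fracMech$ returns the same $\falloc$ on both. Since $f_{1,a}+f_{2,a}=1$, in one of the two profiles the agent holding $v$ receives the top good with probability strictly less than $1$. Hence some allocation in the support denies her that good.

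Now choose $v$ so that, without the top good, she needs essentially all the remaining value, and choose $w$ so that what is left over is worth too little to the other agent. For example, take $v=(1,\frac{1}{3},\frac{1}{3},\frac{1}{3},0,\ldots,0)$ and $w=(\frac{1}{2},\frac{1}{2},\frac{1}{2},\frac{1}{2},0,\ldots,0)$, both with $\MMS=1$. To exceed $\frac{2}{3}$ without $x_1$, the $v$-agent must get all of $x_2,x_3,x_4$, leaving the $w$-agent at most $\frac{1}{2}$.

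This is the paper's argument. The swap is what turns the constraint $f_{1,a}+f_{2,a}=1$ into a contradiction, and no identical-valuation instance can do that.
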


The proof of \Cref{prop:neg-two-agents} is deferred to \Cref{app:two-agents}.

\subsection{\texorpdfstring{Proving \Cref{thm:two-agents}}{Proving Theorem~\ref{thm:two-agents}}}

We now move to prove \Cref{thm:two-agents}, the main result of this section. The proof is structured as follows. We first discuss the cyclic-unit-quota fractional mechanism for the special case of two agents (\Cref{obs:2-agents-cuq-fractional} and the accompanying \Cref{fig:2-agents-cuq}). We then move to prove \Cref{thm:two-agents}, by separating into several cases. We first prove some ``easy'' cases (\Cref{lem:easy-cases}), and then the remaining ``hard'' case (\Cref{lem:hard-case}), all together completing the proof.

When there are only two agents, we observe that the cyclic-unit-quota fractional mechanism can be expressed
much more simply, as shown in the following observation and the accompanying \Cref{fig:2-agents-cuq}.

\begin{observation}
    \label{obs:2-agents-cuq-fractional}
    Consider the problem of allocating $m$ indivisible goods to two ($n=2$) \emph{additive} agents. Letting $y_1,y_2$ denote their favorite goods respectively, the cyclic-unit-quota fractional mechanism takes the following form.
    \begin{itemize}
        \item If both agents have the same favorite item $(y_1=y_2)$ then $\cuqalloc(\val_1,\val_2)$ is equal to the uniform fractional allocation, that is $\cuqix=\frac{1}{2}$ for every agent $i\in\{1,2\}$ and good $\ix\in\M$.
        \item If the agents have different favorite goods ($y_1\neq y_2$) then each agent $i\in\{1,2\}$ is fully allocated $y_i$, while the remaining goods are split evenly, that is $\cuqix=\frac{1}{2}$ for every agent $i\in\{1,2\}$ and good $\ix\in\M\setminus\{y_1,y_2\}$.
    \end{itemize}

\end{observation}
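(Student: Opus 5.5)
The plan is to read the two cases directly off \Cref{lem:cuq-fractional} with $n=2$. Fix the ordering $\per$ with agent $1$ first, so $\percyc{1}=(1,2)$ and $\percyc{2}=(2,1)$. In each shift the first picker takes her favorite good. The second picker takes her favorite among the remaining goods, and then receives the entire suffix.

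First I will write out the picked sets. In $\percyc{1}$, agent $1$ picks $y_1$ and agent $2$ picks $y_2$ if $y_2\neq y_1$, and otherwise her second-ranked good, call it $z_2$. Symmetrically, in $\percyc{2}$, agent $2$ picks $y_2$ and agent $1$ picks $y_1$ if $y_1\neq y_2$, and otherwise her second-ranked good $z_1$. Agent $1$ picks last in $\percyc{2}$, so $Y_{1_{+1}}=\Yitems{2}$; likewise agent $2$ picks last in $\percyc{1}$, so $Y_{2_{+1}}=\Yitems{1}$. Summing indicators over the two shifts then gives $\cuqix$ via \Cref{lem:cuq-fractional}.

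\emph{Case $y_1\neq y_2$.} Here $\Yitems{1}=\Yitems{2}=\{y_1,y_2\}$, and each agent picks her own favorite in both shifts. Hence $\cuq_{i,y_i}=\frac{1}{2}(0+2)=1$. Since $\cuq_{i,y_{3-i}}\ge 0$ and each column sums to one, this forces $\cuq_{i,y_{3-i}}=0$. Every other good $x$ lies outside both picked sets and is picked by nobody, so $\cuqix=\frac12(1+0)=\frac12$.

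\emph{Case $y_1=y_2=y$.} Now $\Yitems{1}=\{y,z_2\}$ and $\Yitems{2}=\{y,z_1\}$. For agent $1$:
\begin{itemize}
\item $y$ is picked once, in $\percyc{1}$, and $y\in\Yitems{2}$, so $\cuq_{1,y}=\frac12$.
\item $z_1$ is picked once, in $\percyc{2}$, and $z_1\in\Yitems{2}$, so $\cuq_{1,z_1}=\frac12$.
\item If $z_2\neq z_1$, then $z_2$ is never picked by agent $1$ and $z_2\notin\Yitems{2}$, so $\cuq_{1,z_2}=\frac12$.
\item Any other good is unpicked and lies outside $\Yitems{2}$, giving $\frac12$.
\end{itemize}
Agent $2$ is symmetric. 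The subcase $z_1=z_2$ is covered by the same computation, since that good is picked once by each agent.

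The only subtle point is the indexing convention $\ip$ (which shift makes agent $i$ last). The bookkeeping case $z_1=z_2$ should also be checked separately. Both are routine, so I expect no real obstacle.
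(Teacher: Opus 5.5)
Your proposal is correct and is essentially the paper's own justification: the paper states this observation without a separate proof, since it follows from evaluating \Cref{lem:cuq-fractional} (equivalently, averaging the two unit-quota allocations) at $n=2$, and your case computations, including the $z_1=z_2$ subcase, check out. Fixing agent~$1$ first is without loss of generality, because for either ordering the two cyclic shifts are $(1,2)$ and $(2,1)$; it is worth saying this explicitly, since the observation is stated without reference to $\per$.
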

   
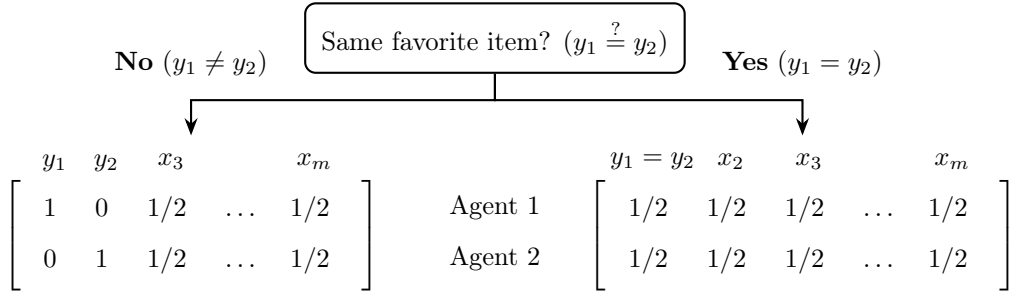
\begin{figure}[htbp]
    \centering
    
    \begin{tikzpicture}[
    >=Stealth, 
    node distance=1cm and 1.4cm, 
    every node/.style={font=\small} 
]

    \node[draw, thick, rectangle, rounded corners, align=center, inner sep=1.3ex] (decision) {Same favorite item? ($y_1 \stackrel{?}{=} y_2$)};

    \node[below left=-0.45cm and 0.35cm of decision, align=center] (no_label) {\textbf{No} ($y_1 \neq y_2$)};
    
    \matrix (M1) [below=1.2cm of no_label, matrix of math nodes, left delimiter={[}, right delimiter={]}, nodes={minimum width=0.8em, align=center}, row sep=0.07cm, column sep=0.14cm] {
        $1$ & $0$ & $1/2$ & \dots & $1/2$ \\
        $0$ & $1$ & $1/2$ & \dots & $1/2$ \\
    };
    \node[above=0.1cm of M1-1-1] {\small $y_1$};
    \node[above=0.1cm of M1-1-2] {\small $y_2$};
    \node[above=0.1cm of M1-1-3] {\small $x_3$};
    \node[above=0.1cm of M1-1-5] {\small $x_m$};
    \node[] at ($(M1-1-1.east -| decision.south)$) {Agent 1};
    \node[] at ($(M1-2-1.east -| decision.south)$) {Agent 2};

    \node[below right=-0.45cm and 0.35cm of decision, align=center] (yes_label) {\textbf{Yes} ($y_1 = y_2$)};
    
    \matrix (M2) [below=1.2cm of yes_label, matrix of math nodes, left delimiter={[}, right delimiter={]}, nodes={minimum width=0.8em, align=center}, row sep=0.07cm, column sep=0.14cm] {
        $1/2$ & $1/2$ & $1/2$ & \dots & $1/2$ \\
        $1/2$ & $1/2$ & $1/2$ & \dots & $1/2$ \\
    };
    \node[above=0.1cm of M2-1-1] {\small $y_1=y_2$};
    \node[above=0.1cm of M2-1-2] {\small $x_2$};
    \node[above=0.1cm of M2-1-3] {\small $x_3$};
    \node[above=0.1cm of M2-1-5] {\small $x_m$};

    \draw[->, thick] (decision.south) -- +(0,-0.35) -| ([yshift=+0.55cm]M1.north);
    \draw[->, thick] (decision.south) -- +(0,-0.35) -| ([yshift=+0.55cm]M2.north);

\end{tikzpicture}

    \caption{Visual representation of the cyclic-unit-quota fractional allocation for two agents. The fractional allocation depends on whether agents agree on the highest-value item.}
    \label{fig:2-agents-cuq}
\end{figure}

It follows from \Cref{obs:2-agents-cuq-fractional} that when there are two agents, the cyclic-$\per$-unit-quota fractional allocation does not depend on ordering $\per$. Intuitively, this occurs because for two agents there are only two possible orderings, and each is the cyclic-shift of the other. Therefore, throughout this section we do not write the ordering $\per$ anywhere.

We now prove our main result for this section.

\twoagents*
\begin{proof}
    Before describing our mechanism, we first separate between several ``easy'' cases and one ``hard'' case.

    \begin{restatable}{definition}{defeasycases}
        \label{def:easy-cases}
        We say valuation profile $(\val_1,\val_2)\in\Vadd^2$ is \emph{easy} if, letting $y_1,y_2\in \M$ denote the agents' favorite items respectively, one of the following three cases occurs:
        \begin{enumerate}
            \item $y_1\neq y_2$.
            \item $y=y_1=y_2$ and $\val_i(y)\le \frac{2}{3}\cdot \TPS_2(\val_i)$ for both $i\in \{1,2\}$.
            \item $y=y_1=y_2$ and $\val_i(y)\ge \frac{2}{3}\cdot \TPS_2(\val_i)$ for both $i\in \{1,2\}$.
        \end{enumerate}
    \end{restatable}
    
    We next show that in each of these cases, we can compute a distributional allocation satisfying our requirements.

    \begin{restatable}{lemma}{easycases}
        \label{lem:easy-cases}
        There exists a polynomial time algorithm that, given an \emph{easy} additive valuation profile $(\val_1,\val_2) \in\Vadd^2$, outputs a distributional allocation $\D$ which induces the cyclic-unit-quota fractional allocation $\cuqalloc(\val_1,\val_2)$, is $\frac{2}{3}$-$\TPS$ ex-post, and is supported on two allocations.
    \end{restatable}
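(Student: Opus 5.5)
We treat the three easy cases separately. In each case we exhibit two allocations $(S,T)$ and $(T,S)$ (with the top items attached appropriately), each chosen with probability $\frac12$. This induces $\cuqalloc(\val_1,\val_2)$ exactly by \Cref{obs:2-agents-cuq-fractional}. It then remains to check $\frac{2}{3}$-$\TPS$ for both agents in both allocations. Throughout we normalize $\TPS_2(\val_i)=1$. For two agents, \Cref{def:TPS} gives $\TPS_2(\val_i)=\min(\val_i(\M)/2,\ \val_i(\M)-\val_i(y_i))$, so $\val_i(\M)\ge 2$ and $\val_i(\M\setminus\{y_i\})\ge 1$.

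\textbf{Case 1 ($y_1\ne y_2$).} Give $y_1$ to agent 1 and $y_2$ to agent 2. Apply the faithful implementation (\Cref{lem:faithful}) to the fractional allocation that splits $R=\M\setminus\{y_1,y_2\}$ half and half. Since all strictly fractional entries equal $\frac12$, the support has size exactly $2$, namely a partition $R=S\cup T$ used in both orientations. For each agent $i$ the two sides differ by at most one item $z\in R$, so $|\val_i(S)-\val_i(T)|\le \val_i(z)\le \val_i(y_i)$.

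Write $a=\val_i(y_i)$ and $b=\val_i(R)$. Agent $i$ gets at least $a+\frac{b-a}{2}=\frac{a+b}{2}$. Let $o$ be the other agent's top item, so $\val_i(o)\le a$.
\begin{itemize}
\item If $a\ge\frac23$, agent $i$ is done, since she always keeps $y_i$.
\item Otherwise $\val_i(o)<\frac23$. From $\val_i(\M\setminus\{y_i\})\ge 1$ we get $b\ge 1-\val_i(o)$, and from $\val_i(\M)\ge 2$ we get $a+b\ge 2-\val_i(o)$. Hence $\frac{a+b}{2}\ge \frac{2-\frac23}{2}=\frac23$.
\end{itemize}

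\textbf{Case 3 (common top $y$, both $\val_i(y)\ge\frac23$).} Take $S=\{y\}$ and $T=\M\setminus\{y\}$. Whichever agent receives $S$ gets $\val_i(y)\ge\frac23$. Whichever agent receives $T$ gets $\val_i(\M\setminus\{y\})\ge 1$.

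\textbf{Case 2 (common top $y$, both $\val_i(y)\le\frac23$).} Apply \Cref{lem:faithful} to the uniform fractional allocation. This again yields a partition $\M=S\cup T$ with $|\val_i(S)-\val_i(T)|\le \val_i(y)\le\frac23$ for each agent. Then
\[\min(\val_i(S),\val_i(T))\ge \frac{\val_i(\M)-\frac23}{2}\ge \frac{2-\frac23}{2}=\frac23.\]

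\textbf{Main obstacle.} The step needing the most care is Case 1, specifically the bound on the one-item discrepancy. The faithful implementation only bounds the discrepancy by one strictly fractional item, and we must use that every such item lies in $R$, hence is worth at most $\val_i(y_i)$ to agent $i$. Combined with the two-agent formula for $\TPS_2$, this gives the $\frac23$ bound.

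The remaining claims are routine. Polynomial time follows because $\TPS$ and the faithful implementation are polynomial-time computable. Support size two follows from the $k=2$ clause of \Cref{lem:faithful}, and in Case 3 it holds by construction.
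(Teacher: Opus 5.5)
Your proposal is correct and follows the paper's proof essentially step for step. You use the faithful implementation with its $k=2$ support clause for Cases 1 and 2, and the explicit coin flip between $(\{y\},\M\setminus\{y\})$ and its swap for Case 3. Your Case 1 bookkeeping (bounding the discrepancy by $\val_i(y_i)$ and using $a+b\ge 2-\val_i(o)$) is only a rearrangement of the paper's argument: agent $i$'s two ex-post bundles together are worth at least $2\cdot\TPS_2(\val_i)$ and differ by less than $\frac23\cdot\TPS_2(\val_i)$.
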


    The first and second cases can be solved by directly applying the faithful implementation technique (see \Cref{lem:faithful}) on the cyclic-unit-quota fractional allocation. The third case can be solved by the distribution which flips a fair coin, and assigns one of the agents the shared favorite good $y$ and the other all remaining goods $\M\setminus\{y\}$. We defer the formal proof of \Cref{lem:easy-cases} to \Cref{app:two-agents}.

    We say valuation profile $(\val_1,\val_2)\in\Vadd^2$ is \emph{hard} if it is not easy, that is, if none of the three cases above holds. For the hard case we present the following  lemma:

    \begin{lemma}
        \label{lem:hard-case}
        There exists a polynomial time algorithm that, given a \emph{hard} additive valuation profile $(\val_1,\val_2) \in\Vadd^2$, outputs a distributional allocation $\D$ which induces the cyclic-unit-quota fractional allocation $\cuqalloc(\val_1,\val_2)$, is $\frac{2}{3}$-$\TPS$ ex-post, and is supported on two allocations.
    \end{lemma}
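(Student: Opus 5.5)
In the hard case both agents share the favorite item $y$, so by \Cref{obs:2-agents-cuq-fractional} the CUQ allocation is uniform. It therefore suffices to find a partition $\M=S\cup T$ with $\val_i(S),\val_i(T)\ge\frac23\TPS_2(\val_i)$ for both $i$. The distribution picks $(S,T)$ or $(T,S)$ with probability $\frac12$ each, which induces the uniform fractional allocation and has support of size two. Since the profile is hard, exactly one agent, w.l.o.g.\ agent~1, has $\val_1(y)>\frac23\TPS_2(\val_1)$, while agent~2 has $\val_2(y)<\frac23\TPS_2(\val_2)$. Normalize $\TPS_2(\val_2)=1$.

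First I will record what agent~2's condition gives. Since $\val_2(y)<\frac23\le 1$ and $y$ is agent~2's best item, $\val_2(\M)\ge 2$ (indeed $\TPS_2=\min(\val_2(\M)/2,\val_2(\M\setminus y))$). Moreover, every item has $\val_2$-value below $\frac23$.

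Next I build $S$ greedily along agent~1's order $e_1=y,e_2,\ldots,e_m$. I cut this sequence into consecutive triples $(e_1,e_2,e_3),(e_4,e_5,e_6),\ldots$, with the last triple possibly shorter. From each triple I add to $S$ the item that agent~2 values most. I stop after the first $s$ triples for which $\val_2(S)\ge\frac23$, and put all remaining items into $T$.

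For agent~2, each added item is worth less than $\frac23$, so at stopping $\val_2(S)<\frac23+\frac23=\frac43$. Hence $\val_2(T)=\val_2(\M)-\val_2(S)>2-\frac43=\frac23$. I also need to show that the threshold is actually reached before the triples run out. Within the triples, agent~2's picks capture at least a third of the value of the whole sequence: each chosen item is worth at least the average of its triple. So after all triples, $\val_2(S)\ge\frac13\val_2(\M)\ge\frac23$. Thus some $s$ works, and $s$ is computable in polynomial time.

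For agent~1, $y=e_1\in S$ gives $\val_1(S)\ge\val_1(y)>\frac23\TPS_2(\val_1)$. For $T$, consider the items $e_2,\ldots,e_m$ in agent~1's order. Each triple contributes one item to $S$, which is worth to agent~1 at most its triple's first element. The items of triple $k\ge2$ taken into $S$ can be charged to the two leftover items of triple $k-1$, which precede them and go to $T$ if not chosen. This charging (a standard pairing argument) shows that among $\M\setminus\{y\}$, the items in $T$ are worth at least $\frac23\val_1(\M\setminus\{y\})$. This is the step I expect to need the most care: the first triple contains $y$, so in it either $y$ or another item is chosen. The accounting must also handle the case where agent~2's pick in the first triple is not $y$; then $y$ and the pick both go to $S$. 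In that case I would instead force $y\in S$ by placing $y$ into $S$ separately and starting the triples from $e_2$, or verify that the bound still holds.

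Finally, $\val_1(\M\setminus\{y\})\ge\TPS_2(\val_1)$ by the definition of TPS, which yields $\val_1(T)\ge\frac23\TPS_2(\val_1)$. Everything is polynomial time, completing the hard case.
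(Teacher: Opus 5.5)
Your proposal is correct and follows the paper's proof essentially step for step. It uses the same triples along agent~1's order, with agent~2's best item of each triple going to $S$. It uses the same $\frac43$ overshoot bound for agent~2, and the same bound for agent~1 where each later pick is worth at most the average of the two $T$-items of the preceding triple.

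The first-triple issue you flag is vacuous. Since $y$ is also agent~2's favorite, her pick from $(e_1,e_2,e_3)$ is $y$ itself, so $e_2,e_3\in T$; this is exactly the paper's initialization $A_1=\{x_1\}$, $A_2=\{x_2,x_3\}$. Do not use your alternative of placing $y$ separately and starting the triples at $e_2$: agent~2 could then take $e_2$ into $S$ with nothing to charge it to, and agent~1's bound for $T$ fails whenever $e_2$ carries almost all of $\val_1(\M\setminus\{y\})$.
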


    The theorem clearly follows from the two lemmas, as together they handle every possible valuation profile. {For completeness, we formalize this in \Cref{claim:two-agents-completenesss} in \Cref{app:two-agents}.}

    Thus, all that remains is to present the proof of \Cref{lem:hard-case}.

    \begin{proof} [Proof of \Cref{lem:hard-case}]
        As we are in the hard case, we can assume both agents have a shared favorite item $y\in \M$, and w.l.o.g., $\val_1(y) > \frac{2}{3}\cdot \TPS_2(\val_1)$ and $\val_2(y) <\frac{2}{3}\cdot \TPS_2(\val_2)$.

        Because both agents share the same favorite item $y$,  \Cref{obs:2-agents-cuq-fractional} implies the cyclic-unit-quota fractional allocation will be equal to the uniform fractional allocation. Therefore, we will construct a partition of the goods $\M$ into two disjoint sets $(A_1,A_2)$ so that for both agents $i\in\{1,2\}$ and both indices $j\in \{1,2\}$ we have $\val_i(A_j) \ge \frac{2}{3}\cdot \TPS_2(\val_i)$. After doing so, we construct $\D$ by flipping a fair coin over which agent gets each set, or explicitly $\D=\left\{\left(\frac{1}{2},(A_1,A_2)\right), \left(\frac{1}{2},(A_2,A_1)\right)\right\}$. Then $\D$ induces $\cuqalloc(\val_1,\val_2)$, is supported on exactly two allocations, and has the required ex-post fairness guarantees, thus satisfying all the requirements of the theorem.
    
        All that remains is to show the construction of the partition $(A_1,A_2)$ with the required fairness constraints. A visualization of the construction is shown in \Cref{fig:two-agent-hard-case}. We construct $A_1,A_2$ as follows:
        \begin{enumerate}
            \item Let $\ix_1,\ldots,\ix_m$ denote the strict ordering of the items according to $\val_1$ (with ties broken lexicographically). If $m\mod 3\neq 0$, add $1$ or $2$ dummy goods at the end worth zero to both agents to make $m$ divisible by three. Note that $x_1=y$ is the shared favorite item of both agents.
            \item Initialize bundles $\allocset_1\leftarrow \{\ix_1\}, \allocset_2\leftarrow \{\ix_2, \ix_3\}$. For every index $s=1$ up to $\frac{m}{3}-1$:
            \begin{enumerate}
                \item Consider the consecutive triplet $\{\ix_{3\cdot s+1}, \ix_{3\cdot s+2}, \ix_{3\cdot s+3}\}$
                \item Select the item from this triplet which is the best item according to $\val_2$ and add it to $\allocset_1$, and add the remaining two items to $\allocset_2$.
                \item If after the addition, $\val_2(\allocset_1)\geq \frac{2}{3}\cdot \TPS_2(\val_2)$, halt this loop.
            \end{enumerate}
            \item Add all remaining unassigned goods to $\allocset_2$.
        \end{enumerate}
        
        \begin{figure}[htbp]
        \centering

        \begin{tikzpicture}[
    font=\normalsize,
    A1/.style={fill=blue!15, draw=blue!70!black, thick, minimum width=1cm, minimum height=0.9cm, align=center, rounded corners=2pt},
    A2/.style={fill=orange!15, draw=orange!70!black, thick, minimum width=1cm, minimum height=0.9cm, align=center, rounded corners=2pt},
    dots/.style={minimum width=0.8cm, align=center, font=\Large},
    mybrace/.style={decorate, decoration={brace, amplitude=6pt, raise=2pt}, thick},
    mybracedown/.style={decorate, decoration={brace, amplitude=8pt, mirror, raise=2pt}, thick}
]

\node[A1] (x1) at (0,0) {$x_1$};
\node[A2, right=0.1cm of x1] (x2) {\small $x_2$};
\node[A2, right=0.1cm of x2] (x3) {\small $x_3$};

\node[A2, right=0.5cm of x3] (x4) {\small $x_4$};
\node[A1, right=0.1cm of x4] (x5) {\small $x_5$};
\node[A2, right=0.1cm of x5] (x6) {\small $x_6$};

\node[dots, right=0cm of x6] (d1) {\small $\dots$};

\node[A1, right=0cm of d1] (xs1) {\small $x_{3s+1}$};
\node[A2, right=0.1cm of xs1] (xs2) {\small $x_{3s+2}$};
\node[A2, right=0.1cm of xs2] (xs3) {\small $x_{3s+3}$};

\node[A2, right=0.8cm of xs3] (xt1) {\small $x_{3s+4}$};
\node[dots, right=0.1cm of xt1] (d2) {\small $\dots$};
\node[A2, right=0.1cm of d2] (xm) {\small $x_m$};

\draw[->, dashed, >=latex] ([yshift=1cm, xshift=2cm]x1.north west) -- ([yshift=1cm, xshift=-2cm]xm.north east) node[midway, above=0.1cm] {Goods sorted in descending value by agent $1$ ($v_1$)};

\draw[->,thick] ([yshift=-2pt]x1.south) -- ([yshift=-9pt]x1.south) node[midway, below=3pt, align=center] {\small $v_1(y) > \frac{2}{3}\hyp \TPS$};
\draw[mybrace] (x1.north west) -- (x3.north east);
\draw[mybrace] (x4.north west) -- (x6.north east);
\draw[mybrace] (xs1.north west) -- (xs3.north east);

\draw[->,thick] ([yshift=-2pt]x5.south) -- ([yshift=-9pt]x5.south) node[midway, below=3pt, align=center] {\small Best good for $\val_2$ \\goes to $A_1$};
\draw[->,thick] ([yshift=-2pt]xs1.south) -- ([yshift=-9pt]xs1.south) node[midway, below=3pt, align=center] {\small Best good for $\val_2$ \\goes to $A_1$};

\draw[mybracedown] (xt1.south west) -- (xm.south east) node[midway, below=10pt, align=center] {\small Remaining tail\\goes to $A_2$};

\draw[dashed, very thick, red!80!black] ([xshift=0.4cm, yshift=0.6cm]xs3.north east) -- ([xshift=0.4cm, yshift=-0.7cm]xs3.south east) node[below left=-1pt and -30pt, align=center, font=\sffamily\bfseries, red!80!black] {\small Halt Loop:\\$v_2(A_1) \ge \frac{2}{3}\hyp\TPS$};

\node[A1, minimum width=0.6cm, minimum height=0.4cm, above left=1.15cm and -15pt of x1] (leg1) {};
\node[right=2pt of leg1] {\small $A_1$};
\node[A2, minimum width=0.6cm, minimum height=0.4cm, below=0.15cm of leg1] (leg2) {};
\node[right=2pt of leg2] {\small $A_2$};



\end{tikzpicture}

            \caption{Visual intuition for constructing the partition construction in the ``hard'' case where both agents share the same favorite item ($y=x_1$), but $\val_1(y)>\frac{2}{3}\hyp\TPS$ and $\val_2(y)<\frac{2}{3}\hyp\TPS$. Items are processed in triplets in descending order of agent $1$'s valuation. By assigning agent $2$'s preferred item from each triplet to $A_1$ until $v_2(A_1)\ge \frac{2}{3}\hyp\TPS$, both bundles are worth at least $\frac{2}{3}\hyp\TPS$ to both agents.}
            \label{fig:two-agent-hard-case}
        \end{figure}
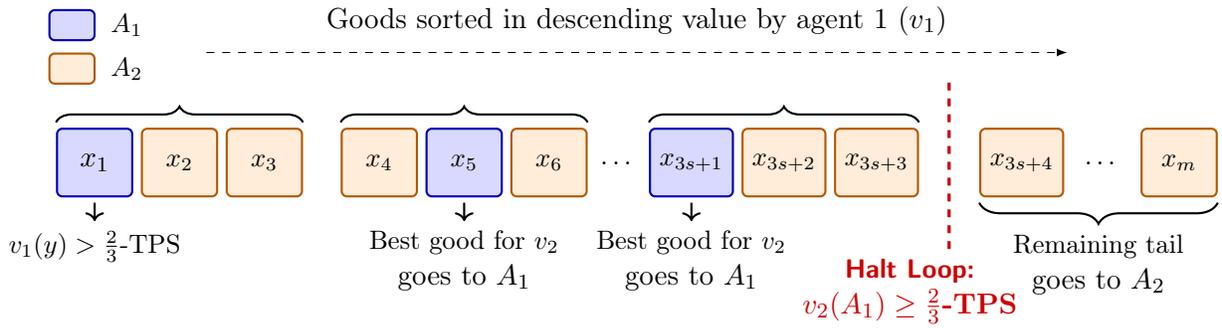
        
        The sets $(\allocset_1,\allocset_2)$ now partition all the goods. Remember we assumed $\val_1(y)>\frac{2}{3}\cdot \TPS_2(\val_1)$ and $\val_2(y)<\frac{2}{3}\cdot \TPS_2(\val_2)$.
        It remains to prove that each of these bundles is worth at least $\frac{2}{3}\hyp\TPS$ for both agents:
        \begin{itemize}
            \item $\val_1(\allocset_1)\geq \frac{2}{3}\cdot \TPS_2(\val_1)$ because $y\in\allocset_1$.
            \item $\val_1(\allocset_2)\geq \frac{2}{3}\cdot \TPS_2(\val_1)$: observe that $\allocset_2$ contains $2$ out of every $3$ consecutive items according to the descending ordering of $\val_1$. Because the items are sorted according to $\val_1$, the value of the missing item in each triplet is bounded above by the average value of the items kept from the previous triplet. Thus \[\val_1(\allocset_2)\ge\sum_{s=0}^{\frac{m}{3}-1}\val_1(\{x_{3s+2},x_{3s+3}\})\ge\frac{2}{3}\sum_{s=0}^{\frac{m}{3}-1}\val_1(\{x_{3s+2},x_{3s+3},x_{3s+4}\})\ge \frac{2}{3}\cdot\val_1(\M\setminus\{y\})\]
            By \Cref{lem:tps-r-monotone} we have $\val_1(\M\setminus\{y\})\ge \TPS_2(\val_1)$, so overall $\val_1(\allocset_2)\ge \frac{2}{3}\cdot \TPS_2(\val_1)$, as required.
            \item $\val_2(\allocset_1)\ge \frac{2}{3}\cdot \TPS_2(\val_2)$: if the loop halted part-way this must be true by the halting condition. Otherwise, set $\allocset_1$ contains the highest value item of $\val_2$ from every triplet (including the first triplet $\{y,x_2,x_3\}$ since $y$ is her favorite item). Therefore $\val_2(\allocset_1)\ge \frac{1}{3}\cdot\val_2(\M)\geq \frac{2}{3}\cdot \TPS_2(\val_2)$, where the last inequality follows from \Cref{lem:share-inequality}.
            \item $\val_2(\allocset_2)\geq \frac{2}{3}\cdot \TPS_2(\val_2)$: note that the loop halts at the first moment when $\val_2(\allocset_1)\geq \frac{2}{3}\cdot \TPS_2(\val_2)$. Thus, whether the loop halts part-way or not, $\val_2(\allocset_1)$ exceeds $\frac{2}{3}\cdot \TPS_2(\val_2)$ by at most one item. Since all items are worth at most $\frac{2}{3}\cdot \TPS_2(\val_2)$ to agent 2, this implies $\val_2(\allocset_1)\leq \frac{4}{3}\cdot \TPS_2(\val_2)$. \Cref{lem:share-inequality} implies $\val_2(\M)\ge 2\cdot \TPS_2(\val_2)$. Thus, $\val_2(\allocset_2)=\val_2(\M)-\val_2(\allocset_1)\ge 2\cdot \TPS_2(\val_2)-\frac{4}{3}\cdot \TPS_2(\val_2)=\frac{2}{3}\cdot \TPS_2(\val_2)$, as required.
        \end{itemize}
    
        We remark that the set $\allocset_1$ is in fact worth $\frac{2}{3}$ of the proportional share to both agents.
    \end{proof}
    We have completed the proof of \Cref{lem:hard-case}, thus completing the proof of the theorem.
\end{proof}

\section*{Acknowledgments}
This research was supported by the Israel Science Foundation (grant No. 301/24 and grant No. 1122/22). Moshe Babaioff's research is also supported by a Golda Meir Fellowship.

\expandafter\def\csname blx@maxbibnames\endcsname{99}\printbibliography

\appendix

\section{\texorpdfstring{Missing Proofs from \Cref{sec:design}}{Missing Proofs from Section~\ref{sec:design}}}
\label{app:design}

\tiefrac*
\begin{proof}
    We will show that for every \emph{additive} valuation profile $\valprof\in\Vaddprofs$ and for every agent $i\in\N$, agent $i$'s expected value from distribution $D=\randMech(\valprof)$ is equal to her value from the induced fractional allocation $\fracD=\fracMech(\valprof)$. Writing $D=\{(\pk,\allock)\}_{k\in [K]}$, by additivity of $\val_i$ and linearity of expectation:
    \[\expect[\alloc\sim \D(\valprof)]{\val_i(\allocset_i)}=
    \sum_{k=1}^K \pk\cdot\left(\sum_{x\in\allocki} \val_i(x)\right)\]
    \[=\sum_{x\in\M} \val_i(x)\cdot\left(\sum_{k=1}^K \pk\cdot 1[x\in \allocki]\right)=\sum_{\ix\in\M}\val_i(x) \cdot \fracDix=\val_i(\fracD)\]

    Assume for contradiction there exists a profile where an agent can improve her expected outcome in $\randMech$ by misreporting (contradicting TIE), then for this same profile she can also improve her outcome in $\fracMech$ by the same misreport. The same argument also holds in the reverse direction. This proves that $\randMech$ is TIE iff $\fracMech$ is truthful, as required.
\end{proof}

\unitotie*
\begin{proof}
    Let $\fracMech$ denote the fractional mechanism induced by both mechanisms. We will show that the universal-truthfulness of $\univMech$ implies it is also TIE. Beforehand, we show this is sufficient to complete the proof. We apply \Cref{lem:tie-fractional} twice: the first time from $\univMech$ to $\fracMech$, proving that $\fracMech$ is a truthful fractional mechanism, and then a second time from $\fracMech$ to $\randMech$, proving that $\randMech$ is TIE, as required.

    All that remains to be shown is that universal-truthfulness implies truthfulness-in-expectation. Since $\univMech$ is universally-truthful it can be represented with $\ell$ truthful deterministic mechanisms $\mech^{(1)},\ldots,\mech^{(\ell)}$, with respective probabilities $\lambda_1,\ldots,\lambda_\ell$, so that for given input valuation profile $\valprof\in\Vaddprofs$, mechanism $\univMech$ samples each mechanism $\mech^{(j)}$ with probability $\lambda_j$ and outputs $\mech^{(j)}(\valprof)$. This implies that for profile $\valprof\in\Vaddprofs$ and agent $i\in\N$, the expected value of agent $i$ from $\univMech$ is $\expect[\alloc\sim \univMech(\valprof)]{\val_i(\allocset_i)}=\sum_{j=1}^\ell \lambda_j\cdot v_i(\mech^{(j)}(\valprof))$.
    
    Assume for contradiction $\univMech$ is not TIE, then there exists a valuation profile $\valprof\in\Vaddprofs$, agent $i\in\N$ and alternative valuation $\valp_i\in\Vadd$ such that agent $i$ can strictly increase her expected value by reporting $\valp$, thus changing the overall profile to $\valprofp=(\valp_i,\val_{-i})$. Therefore:
    \[\sum_{j=1}^\ell \lambda_j\cdot v_i(\mech^{(j)}(\valprof))=\expect[\alloc\sim \univMech(\valprof)]{\val_i(\allocset_i)}<\expect[\alloc\sim \univMech(\valprof')]{\val_i(\allocset_i)}=\sum_{j=1}^\ell \lambda_j\cdot v_i(\mech^{(j)}(\valprofp))\]

    Yet, this implies there must exist at least one index $1\le j\le \ell$ s.t. $v_i(\mech^{(j)}(\valprof))<v_i(\mech^{(j)}(\valprofp))$, contradicting the truthfulness of $\mech^{(j)}$.    
\end{proof}

\cuqTieProp*
\begin{proof}
    Note that for any starting index $j\in[n]$, the $\percyc{j}$-unit-quota allocation mechanism (see \Cref{def:det-unit-quota}) is deterministic and truthful. Therefore, since the cyclic-$\per$-unit-quota distributional mechanism is constructed by uniformly sampling one of these $n$ mechanisms and running it, the mechanism is \emph{universally-truthful}. Thus \Cref{prop:universal-to-tie} implies that any distributional mechanism $\distMech$ that induces the cyclic-$\per$-unit-quota fractional mechanism is TIE, proving the first requirement of the lemma.

    It remains to show that $\distMech$ is proportional ex-ante. Importantly, valuations are additive, a distributional mechanism is proportional ex-ante if and only if its induced fractional mechanism is proportional. This holds because, for additive valuations, the expected value of an agent from a distributional allocation is equal to her value from the induced fractional allocation. Therefore, all that remains to be shown is that the  will show that the cyclic-$\per$-unit-quota fractional mechanism is proportional.

    Fix an agent $i\in\N$ with valuation $\val_i\in\Vadd$, let $e_1,\ldots,e_m$ denote agent $i$'s ordering of the goods (ties broken lexicographically). Notice that, irrespective of the ordering $\per$, across the $n$ cyclic shifts agent $i$ will pick in all $n$ possible positions, from first to last. For each $j\in[n-1]$, in the cyclic ordering where agent $i$ picks $j$-th in the ordering, she is guaranteed at least the value of her $j$-th best good $\val_i(e_j)$. Additionally, in the cyclic ordering where she is last (and receives all unpicked goods), she is guaranteed at least the total value of all her goods besides the first $n-1$, that is at least value $\val_i(\M\setminus \{e_1,\ldots,e_{n-1}\})$. Thus over all the $n$ cyclic orderings the agent receives a total value of at least $\sum_{j=1}^{n-1}\val_i(e_j)+\val_i(\M\setminus \{e_1,\ldots,e_{n-1}\})=\val_i(\M)$. Therefore, her value from the fractional allocation generated by the cyclic-$\per$-unit-quota mechanism will be at least $\frac{\val_i(\M)}{n}$, her proportional share.
\end{proof}

\section{\texorpdfstring{Missing Proofs from \Cref{sec:logn}}{Missing Proofs from Section~\ref{sec:logn}}}
\label{app:logn}

\greedyCompletion*

\begin{proof}
    We construct the completed matrix $\matr'$ iteratively using a greedy algorithm. Initialize $\matr' \leftarrow \matr$. Let $a_u = \sum_{v=1}^c \matr'_{uv}$ denote the current sum of row $u$, and $b_v = \sum_{u=1}^r \matr'_{uv}$ denote the current sum of column $v$. Define the sets of unsatisfied rows and columns as $R = \{u \in [r] \mid a_u < s_u\}$ and $C = \{v \in [c] \mid b_v < 1\}$.

    While both $R \neq \emptyset$ and $C \neq \emptyset$, we perform the following steps:
    \begin{enumerate}
        \item Select any row $u \in R$ and any column $v \in C$.
        \item Compute the maximum allowed increase $\delta = \min(s_u - a_u, 1 - b_v)$. Since $u \in R$ and $v \in C$, it is guaranteed that $\delta > 0$.
        \item Update $\matr'_{uv} \leftarrow \matr'_{uv} + \delta$.
        \item Update the running sums $a_u \leftarrow a_u + \delta$ and $b_v \leftarrow b_v + \delta$.
        \item If $a_u = s_u$, remove $u$ from $R$. If $b_v = 1$, remove $v$ from $C$.
    \end{enumerate}

    In every iteration of the loop, either a row reaches its target and is removed from $R$, or some column reaches its target and is removed from $C$ (or both). Therefore, the loop terminates after at most $r + c$ iterations.

    We now show that $\matr_{uv} \le\matr'_{uv} \le 1$. Because initially $\matr\in {[0,1]}^{r\times c}$, and our algorithm only adds strictly positive values $\delta$, all entries in $\matr'$ do not decrease from their original values in $\matr$. Moreover, observe that for any column $v$, the total sum $b_v$ never exceeds its target of $1$, therefore since elements are non-negative, no individual element $\matr'_{uv}$ can ever exceed $1$ for any $(u,v)\in[r]\times[c]$.   

    We next show that, throughout the entire run of the loop, the row and column targets are never exceeded. At the beginning this is the case by requirements (1) and (2) on the input matrix $\matr$. This invariant is then maintained at each iteration by the construction of $\delta$.

    Finally, by the initial assumption, the total row target equals the total column target ($\sum_{u=1}^r s_u = c = \sum_{v=1}^c 1$). Since every update step increases the total sum across all rows and the total sum across all columns by the exact same amount $\delta$, the last column and row must both reach their target simultaneously. Therefore, in the last iteration of the loop both $R$ and $C$ are updated to $\emptyset$. Then when the loop terminates both $R=C=\emptyset$, meaning that all target row and column sums are met exactly, as required.
\end{proof}

\section{\texorpdfstring{Missing Proofs from \Cref{sec:two-agents}}{Missing Proofs from Section~\ref{sec:two-agents}}}
\label{app:two-agents}

\defeasycases*
\easycases*
\begin{proof}

    We solve each case separately.

    \textit{Case 1:} Note from \Cref{obs:2-agents-cuq-fractional} that in this case, in $\cuqalloc(\val_1,\val_2)$ each agent $i\in\{1,2\}$ is fully allocated $y_i$, and receives $1/2$ of every remaining item $\M\setminus\{y_1,y_2\}$. For $i\in\{1,2\}$ let $\ei=(\eiat{1},\ldots,\eiat{m})$ be a strict ordinal preference consistent with $\val_i$. Let $\D$ be distributional allocation generated by applying the faithful implementation technique in \Cref{lem:faithful} on fractional allocation $\cuqalloc(\val_1,\val_2)$ and ordinal preferences $(\en{1},\en{2})$. Then $\D$ is guaranteed to induce $\cuqalloc(\val_1,\val_2)$. Moreover, because all strictly fractional values in $\cuqalloc(\val_1,\val_2)$ are exactly $\frac{1}{2}$, $\D$ is guaranteed to have support of size exactly $2$. This implies there must exist a partition $A_1,A_2$ of the goods $\M\setminus\{y_1,y_2\}$ such that $\D$ randomizes between allocations $(A_1\cup \{y_1\},A_2\cup\{y_2\})$ and $(A_2\cup \{y_1\},A_1\cup\{y_2\})$ with probability $\frac{1}{2}$.

    Fix agent $i\in\{1,2\}$. If $\val_i(y_i)\ge \frac{2}{3}\cdot \TPS_2(\val_i)$ then agent $i$ is guaranteed $\frac{2}{3}$ of her $\TPS$ ex-post in $\D$, as required. Otherwise, to agent $i$, all goods are worth less than $\frac{2}{3}$-$\TPS$. Using the additional property of \Cref{lem:faithful} that the difference in value between two ex-post allocations is at most the value of one (strictly fractionally assigned) good, and since the sum of both ex-post allocations is worth at least $2\cdot\TPS_2(\val_i)$, we get that both ex-post bundles must be worth at least $\frac{2}{3}$-$\TPS$ to agent $i$, as required.
    
    \textit{Case 2:} Note from \Cref{obs:2-agents-cuq-fractional} that in this case $\cuqalloc(\val_1,\val_2)$ is equal to the uniform fractional allocation, and also for both agents all goods are worth at most $\frac{2}{3}$-$\TPS$. For $i\in\{1,2\}$ let $\ei=(\eiat{1},\ldots,\eiat{m})$ be a strict ordinal preference consistent with $\val_i$. Let $\D$ be distributional allocation generated by applying the faithful implementation technique in \Cref{lem:faithful} on fractional allocation $\cuqalloc(\val_1,\val_2)$ and ordinal preferences $(\en{1},\en{2})$. Then $\D$ is guaranteed to induce $\cuqalloc(\val_1,\val_2)$. Moreover, because all strictly fractional values in $\cuqalloc(\val_1,\val_2)$ are exactly $\frac{1}{2}$, $\D$ is guaranteed to have support of size exactly $2$. Using the additional property of \Cref{lem:faithful} that the difference in value between two ex-post allocations is at most the value of one (strictly fractionally assigned) good, since the sum of both ex-post allocations is worth at least $2\cdot\TPS_2(\val_i)$, and all goods are worth $\frac{2}{3}$-$\TPS$, for both agents $i\in\{1,2\}$ we get that both ex-post bundles must be worth at least $\frac{2}{3}$-$\TPS$ to agent $i$, as required.

    \textit{Case 3:} In this case we will construct the output distribution explicitly. Let $\D$ be the distribution which flips a fair coin to assign the shared favorite item $y$ to one agent and all remaining items to the other agent. Formally:
    \[\D=\left\{\left(\frac{1}{2}\;,\;\left(\{y\},\M\setminus\{y\}\right)\right)\;,\;\left(\frac{1}{2}\;,\;\left(\M\setminus\{y\},\{y\}\right)\right)\right\}\]
    We remark that $\D$ is the same distribution generated by the universally-truthful cyclic-unit-quota distributional mechanism for this profile.

    We now prove the required fairness guarantee. Fix an agent $i\in\{1,2\}$. If she receives item $y$ alone she will be satisfied because we assumed $\val_i(y)\ge \frac{2}{3}\cdot\TPS_2(\val_i)$ in this case. Otherwise, if she receives all remaining items she actually receives her full $\TPS$ because by \Cref{lem:tps-r-monotone}, $\val_i(\M\setminus\{y\})\ge \TPS_2(\val_i)$.
\end{proof}

\begin{claim}
    \label{claim:two-agents-completenesss}
     \Cref{lem:easy-cases} in conjunction with \Cref{lem:hard-case} suffice to complete the proof of \Cref{thm:two-agents}.
\end{claim}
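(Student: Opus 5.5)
The plan is to define the mechanism explicitly as a case split on the reported profile and then verify each property of \Cref{thm:two-agents} directly. Given reports $(\val_1,\val_2)$, the mechanism first decides whether the profile is easy in the sense of \Cref{def:easy-cases}. This requires only the agents' strict orders, which identify $y_1,y_2$, together with one bit per agent indicating whether $\val_i(y)\ge\frac{2}{3}\cdot\TPS_2(\val_i)$. That bit is computable in polynomial time because $\TPS$ is poly-time computable.

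If the profile is easy, the mechanism outputs the distribution of \Cref{lem:easy-cases}. Otherwise it outputs the distribution of \Cref{lem:hard-case}. The first thing to check is that this case analysis is exhaustive. Either $y_1\neq y_2$, or $y_1=y_2=y$. In the latter situation the two agents' comparisons of $\val_i(y)$ with $\frac{2}{3}\cdot\TPS_2(\val_i)$ either agree in direction, which gives easy cases 2 and 3 (ties are covered by both), or they strictly disagree, which is exactly the hard case assumed in \Cref{lem:hard-case}, up to relabeling the agents. So the two lemmas together cover every profile.

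For each profile the output distribution is produced in polynomial time, is supported on exactly two allocations, and every allocation in its support is $\frac{2}{3}$-$\TPS$. By \Cref{lem:share-inequality}, every such allocation is also $\frac{2}{3}$-$\MMS$.

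The key point is incentives, and here I would rely on the fact that both lemmas output distributions inducing $\cuqalloc(\val_1,\val_2)$. The case chosen therefore changes only which allocations make up the support. It never changes the induced fractional allocation, so the composite mechanism induces the cyclic-unit-quota fractional mechanism on every input. \Cref{lem:cuq-tie} then gives that the mechanism is TIE and ex-ante proportional. For two agents, ex-ante proportionality gives ex-ante envy-freeness, since $\val_i(\M)-x\le x$ whenever $x\ge\val_i(\M)/2$.

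The only subtle step is arguing that an agent cannot gain by misreporting which case applies, for example by lying about her cardinal bit. This is handled by the observation above: the ex-ante value depends only on the induced fractional allocation, which is $\cuqalloc$ of the reported orders no matter which branch is taken. Making that point explicit is essentially the whole argument. The remaining verification is routine bookkeeping.
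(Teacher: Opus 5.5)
Your proposal is correct and follows the same route as the paper: define the mechanism by the easy/hard case split, note that both lemmas run in polynomial time and output two-allocation distributions whose supports are $\frac{2}{3}$-$\TPS$ and which induce $\cuqalloc(\val_1,\val_2)$, and then invoke the CUQ lemma to get TIE and ex-ante proportionality (hence ex-ante envy-freeness for $n=2$). You also make explicit two points the paper's short argument uses only implicitly: that the case split is exhaustive, and that the branch taken cannot affect incentives because it never changes the induced fractional allocation.
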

\begin{proof}
    We design the mechanism as follows. Given input profile $(\val_1,\val_2)\in\Vadd^2$, the mechanism constructs the output distributional allocation as follows. If $(\val_1,\val_2)$ is \emph{easy}, we apply \Cref{lem:easy-cases}, and output the resulting distribution $\D$. Similarly, if $(\val_1,\val_2)$ is \emph{hard} we apply \Cref{lem:hard-case}, and output the resulting distribution $\D$.
    
    We prove the above mechanism satisfies all the requirements of the theorem. \Cref{lem:easy-cases} and \Cref{lem:hard-case} both run in polynomial time and output distributional allocations which are supported on exactly two allocations, both of which are $\frac{2}{3}$-$\TPS$, so the same holds for our mechanism in general. Moreover, since both output a distributional allocation that induces the cyclic-unit-quota fractional allocation, overall our mechanism will induce the cyclic-unit-quota fractional mechanism, so by \Cref{lem:cuq-tie}, this guarantees that the mechanism will be both TIE and ex-ante proportional (which for $n=2$ agents implies ex-ante envy-freeness as well).
\end{proof}

\negtwo*
\begin{proof} 

    Assume for contradiction there exists a randomized mechanism $\randMech$, with \emph{ordinal} induced fractional mechanism $\fracMech$, such that every allocation in the support of $\randMech$ is $\apx$-$\MMS$ ex-post for some $\apx>\frac{2}{3}$. Consider the following valuation profile for $n=2$ agents and $m\ge 4$ items:

    \begin{center}
        \begin{tabular}{c|ccccccc}
         & $x_1$ & $x_2$ & $x_3$ & $x_4$ & $x_5$ & $\cdots$ & $x_m$ \\
         \hline
         $\val_1$ & $1$ & $\frac{1}{3}$ & $\frac{1}{3}$ & $\frac{1}{3}$ & $0$ & $\cdots$ & $0$ \\
         $\val_2$ & $\frac{1}{2}$ & $\frac{1}{2}$ & $\frac{1}{2}$ & $\frac{1}{2}$ & $0$ & $\cdots$ & $0$  \\
        \end{tabular}
    \end{center}

    Note that for both agents $i\in\{1,2\}$ in this profile $\MMS_2(\val_i)= \TPS_2(\val_i)= \PROP_2(\val_i)=1$. Consider the induced fractional allocation $\falloc=\fracMech(\val_1,\val_2)$ generated by the mechanism for profile $(\val_1,\val_2)$. Separate into cases.
    
    \textit{Case 1:} Assume that $\f_{1,x_1}<1$. Then there must exist an ex-post allocation $\alloc\in\supp{\randMech(\val_1,\val_2)}$ such that $x_1\notin \allocset_1$. In order for $\allocset_1$ to still be at least $\apx$-$\TPS$ for agent $1$, it must be that $\allocset_1\supseteq\{x_2,x_3,x_4\}$.  But now, the total value of all remaining goods for agent $2$ is $\frac{1}{2}$, so $\val_2(\allocset_2)=\val_2(\M\setminus\allocset_1)\le \frac{1}{2}<\apx=\apx\cdot \TPS_2(\val_2)$, as required.

    \textit{Case 2:} Otherwise, assume $\f_{1,x_1}=1$, then $\f_{2,x_1}=0$. Now consider the reversed instance $(\val_2,\val_1)$, in which the valuations of agents $1$ and $2$ are swapped. Note that while the profiles differ cardinally, from an ordinal perspective both profiles are identical, since $\val_1,\val_2$ both order the items identically (we assume ties are broken lexicographically). Therefore, since we assumed that the induced fractional mechanism $\fracMech$ is ordinal, we have that $\fracMech(\val_2,\val_1)=\fracMech(\val_1,\val_2)=\falloc$. But now agent $2$ with true valuation $\val_1$ receives item $x_1$ with probability zero, and so we get the same contradiction as in the first case.
\end{proof}

\subsection{A Note on the Communication Complexity of the Mechanism}
\label{app:twoAgentsComplexity}

For the mechanism of Theorem~\ref{thm:two-agents}, each agent needs to report ordinal information (a strict order over the items, consistent with the valuation function of the agent), and relatively few additional bits of information.
Each agent needs to send one additional bit specifying whether her top item has value at least $\frac{2}{3}$-MMS. In addition, one of the agents (say, agent~2) might need to send an integer $s$ in the range $[1, m]$ (determining in which triple to stop, as explained is Section~\ref{sec:overviewTwoAgents}). 

We assume that communication takes two rounds, and that all communications are public. That is, when agent~2 is requested to send $s$, she already knows the total order provided by agent~1. Under this assumption, agent~2 can figure out $s$ based on her own $v_2$, and send only $s$ ($\log m$ bits), and not her full valuation $v_2$. Having two rounds in the mechanism does not affect the TIE property, because the fractional allocation which determines the ex-ante value that agents get is fixed after the first round of communication. 

In fact, one can save on the number of bits needed in order to represent $s$. This is because given the orders over goods supplied by the two agents, there is a {\em hitting set} of $O(\log m)$ integers in the range $[1,m]$, such that for every $v_2$ consistent with the reported order, at least one member of the hitting set can qualify as $s$. Sending the index of this member requires only roughly $O(\log\log m)$ bits.

We now explain the construction of the hitting set. Scale the valuation $v_2$ so that $v_2(\goods) = 2$. 
Let $e_1^1, \ldots, e^1_m$ and $e_1^2, \ldots, e^2_m$ be the orderings over the items reported by agent~1 and agent~2 in the first round of communication. We assume that agent~2, who needs to send $s$ in the second round, can see this ordering. Recall that the mechanism partitions $e_1^1, \ldots, e^1_m$ into consecutive triples. Agent~2 needs to send the index $s$ of a triple for which the sum of $v_2$ values of the best item from each triple up to that point is between $\frac{2}{3}$ and $\frac{4}{3}$. 

Consider an auxiliary additive valuation $v'_2$, in which $v'_2(e^2_1) = v'_2(e^2_2) = \frac{2}{3}$, and  $v'_2(e^2_k) = \frac{2}{k}$ for every $k \ge 3$. Importantly, for every item $e$ it holds that $v'_2(e) \ge v_2(e)$. Note that unlike the private valuation $v_2$, the valuation $v'_2$ is public knowledge, given the reported order  $e_1^2, \ldots, e^2_m$. The total value of all items under $v'_2$ is roughly $2\ln m$. 

Extract from each triple in $e_1^1, \ldots, e^1_m$ the highest value item according to $v_2$. (This information is available, by the ordering supplied by agent~2). Let $\sigma = e_1, \ldots, e_{m/3}$ denote the sequence of items obtained by this process. Partition $\sigma$ into $O(\log m)$ consecutive blocks, each of value at most $\frac{2}{3}$ according to $v'_2$. The hitting set referred to above is the indices in $\sigma$ in which blocks end. By construction, it is public knowledge.

Within this hitting set, $s$ is chosen to be the first index by which the sum of $v_2$ values of items in $\sigma$ is at least $\frac{2}{3}$. Reporting $s$ can be done using $\log\log m + O(1)$ bits. As no item has $v'_2$ value larger than $\frac{2}{3}$, and $v'_2 \ge v_2$, the $v_2$ value of the first $s$ items in $\sigma$ is at least $\frac{2}{3}$, and at most $\frac{4}{3}$, as desired.

{If we insist on a one-round mechanism, then each agent needs to send in advance, without knowing the order of the other agent over the items, sufficient information that would allow one to compute $s$, if the need will arise. This can be done using $O((\log m)^2)$ bits. Fixing some small $\epsilon > 0$ ($\epsilon$ is independent of $m$), for every $r \in [0, \frac{2}{\epsilon}\log m]$, each agent reports also how many goods have value between $\frac{1}{m^2}(1 + \epsilon)^r$ and $\frac{1}{m^2}(1 + \epsilon)^{r+1}$. Using also the ordinal information, this allows to approximate the value of any set of goods up to a small error. This information suffices so that if the need arises, one can extract from the sequence $\sigma$ above a set of indices (not necessarily consecutive) such that the set of items in these locations has value  between $\frac{2}{3}$ and $\frac{4}{3}$. Further details are omitted.}

\end{document}